\documentclass[ejsv2]{imsart}

\RequirePackage[numbers]{natbib}
\RequirePackage[colorlinks,citecolor=blue,urlcolor=blue]{hyperref}
\RequirePackage{graphicx}

\usepackage{bm}
\usepackage{graphicx}
\def\bSig\mathbf{\Sigma}

\usepackage{amsmath}
\usepackage{color}
\usepackage{algorithm}
\usepackage{algpseudocode}
\usepackage{algorithmicx}
\usepackage{multirow}
\usepackage{mathtools}
\usepackage{hyperref}
\usepackage{natbib}
\usepackage{makecell}

\usepackage{url}

\startlocaldefs
\theoremstyle{plain}

\newtheorem{theorem}{Theorem}[section]
\newtheorem{lemma}[theorem]{Lemma}
\newtheorem{assumption}{Assumption}
\newtheorem{proposition}{Proposition}
\theoremstyle{definition}

\theoremstyle{remark}

\endlocaldefs

\begin{document}
\begin{frontmatter}
\title{A Fast Screening Approach for High-dimensional Outcomes and High-dimensional Predictors}
\runtitle{GIDS}

\begin{aug}
\author[A]{\fnms{Hongju}~\snm{Park}\ead[label=e1]{hongju@unc.edu}},
\author[B]{\fnms{Zhenyao}~\snm{Ye}\ead[label=e2]{zye@som.umaryland.edu}}
\and
\author[B]{\fnms{Shuo}~\snm{Chen}\ead[label=e3]{shuochen@som.umaryland.edu}}
\address[A]{School of Pharmacy,
University of North Carolina, Chapel Hill\printead[presep={,\ }]{e1}}

\address[B]{Maryland Psychiatric Research Center, School of Medicine,
University of Maryland\printead[presep={,\ }]{e2,e3}}
\runauthor{Hongju Park et al.}
\end{aug}

\begin{abstract}
Modeling interactions among multimodal, high-dimensional data is intrinsically challenging due to ultra-high dimensionality and complex dependence structure with high level noise. Screening methods are effective for reducing dimensionality, but most existing approaches shrink only the predictor space while retaining all outcomes. In cross-modal analyses, different outcomes often select different predictor subsets, so the union remains large and the response dimension is unchanged—limiting the practical benefit of screening. This gives rise to heavy computational burdens and poor interpretability. To address these limitations, we propose a new screening framework, Graph Independence Dual Screening (GIDS), which simultaneously reduces the dimensionality of response variables and predictors. We design computationally efficient algorithms that facilitate downstream selection procedures, improving accuracy and scalability, and establish supporting theoretical results. Exensive simulation studies demonstrate that GIDS outperforms the existing methods, which screen only predictors. To illustrate its utility, we applied GIDS to the Alzheimer’s Disease Neuroimaging Initiative (ADNI) dataset, analyzing interactions between genome-wide 865,353 DNA methylation and 49,386 transcriptomic variables. GIDS reduced the feature space to approximately 9,000 CpGs and 2,000 transcripts, uncovering blockwise interaction structures—clusters of CpG sites and gene transcripts with strong associations. These findings not only improve computational tractability but also yield interpretable biological insights, highlighting coordinated regulatory mechanisms underlying Alzheimer’s disease.
\end{abstract}


\begin{keyword}
\kwd{Screening for Joint Data Sets}
\kwd{Dimensionality Reduction}
\kwd{Best Subset Selection}
\kwd{Joint High-dimensional Data Analysis}
\kwd{Integrative Data Analysis}
\end{keyword}

\end{frontmatter}

\section{Introduction}


The joint analysis of multi-modal high-dimesional data has garnered increasing attention as a powerful approach for uncovering relationships between two high-dimensional datasets. For example, multi-modal high-dimesional data studies for the multi-omics such as genome-wide single-nucleotide polymorphism (SNP) and transcriptomics \citep{lock2013joint,zhou2024fair,lee2024dcca} and imaging-genetics \citep{zhao2021transcriptome,zhao2021large} often face ultra-high dimensional and complex interactive patterns between datasets. These challenges lead to substantial computational burdens, such as the memory demands required to store a cross-correlation matrix involving both high-dimensional outcomes and predictors \citep{becker2023large}. For instance, the joint analysis of DNA methylation and transcriptomics data considered in this work involves $5\times 10^4$ and $8\times 10^5$ variables on each side, resulting in a cross-correlation matrix with roughly $4\times 10^{10}$ entries, which requires approximately 300GB memory.


In addition, in ultra-high dimensional setups, a major technical challenge is controlling the noise introduced by spurious correlations, as the magnitude of spurious correlations significantly grows uncontrollably with the number of predictors \cite{fan2008sure}. This issue is further exacerbated in joint setups, where the effective dimensionality increases multiplicatively with the numbers of predictors and responses. Consequently, large spurious correlations between unrelated pairs become much more frequent, which can severely bias variable ranking. Existing methods for multiple responses \citep{li2012feature,shao2014martingale,pan2019generic,liu2022model}, when screening predictors only, often mitigate this problem using hard or soft thresholding for screening measures of predictors. However, these techniques are not well-suited for joint screening, as they fail to account for the noise caused by ultra-high dimensional responses.


We present this challenge using the following notations. For two high-dimensional data sets $\mathbf{X} = [X_1~X_2~\cdots~X_p] \in \mathbb{R}^{n \times p}$ and $\mathbf{Y}  = [Y_1~Y_2~\cdots~Y_q] \in \mathbb{R}^{n \times q}$ with $n$ samples, then an assosiation matrix lies in $\mathbb{R}^{p \times q}$. As both $p$ and $q$ are high-dimensional, reducing $p$ or $q$ separately cannot solve this challenges, as mentioned above. Our goal is to seek a efficent solution to simultaneously reduce $p$ and $q$ while preserving most true associations between $\mathbf{X}$ and $\mathbf{Y}$. To this end, we consider the problem of estimating a $p\times q$-dimensional parameter matrix $\bm{\beta} \in \mathbb{R}^{p \times q}$ in the multivariate linear model  
\begin{eqnarray*}
    \mathbf{Y} = \mathbf{X}\bm{\beta} + \bm{\epsilon},
\end{eqnarray*}
where $\mathbf{Y}$ is a response matrix, $\mathbf{X}$ is a design matrix with independent and identically distributed (i.i.d.) rows, $\bm{\beta} \in \mathbb{R}^{p \times q}$ is the unknown coefficient matrix, $\bm{\epsilon}\in \mathbb{R}^{p \times q}$ is a random error matrix, and $n$ is the sample size. In high-dimensional settings, where both $p$ and $q$ may be large, it is commonly assumed that only a small subset of predictors and responses are truly associated. Under the joint high-dimensional setup, variable selection plays a crucial role in improving estimation accuracy by isolating the relevant predictors. Additionally, it enhances interpretability by yielding a parsimonious representation of the underlying model structure.

Screening has long played a central role in high-dimensional data analysis to reduce the dimension to the managible size. Especially, correlation-based screening methods have widely used due to their simplicity, interpretability, and scalability. A notable example is the Sure Independence Screening (SIS) method introduced by \citep{fan2008sure}, which uses marginal Pearson correlations to efficiently reduce dimensionality by filtering out irrelevant predictors. However, such methods are primarily designed for single-response models and may fail to capture complex dependence structures inherent in high-dimensional multivariate settings. In practice, different outcomes often correspond to distinct subsets of predictors, so their union remains large and the response dimension is unchanged, resulting in limited dimensionality reduction and reduced interpretability.
Furthermore, this approach lacks a mechanism for screening responses, highlighting the need for joint screening of both predictors and responses. The following proposition illustrates this limitation.

\begin{proposition}
Let $\bm{\beta} \in \mathbb{R}^{q \times p}$ with true model $\mathcal{M} := \{(i,j): \beta_{ij}\neq 0\} = \{(i,j): 1\leq i \leq p_0,~1\leq j \leq q_0\}$ and true model for the $j$th outcome $\mathcal{M}_j = \{i : (i,j)\in \mathcal{M}\}$. Suppose that an independent screening is performed for each outcome with the true and false positive probablity of selecting true predictors such that
$$
\mathbb{P}\big(i\in \widehat{\mathcal{M}}_j \,\big|\, i\in \mathcal{M}_j\big)=1-\xi_1, 
\quad \text{and} \quad 
\mathbb{P}\big(i\in \widehat{\mathcal{M}}_j \,\big|\, i\notin \mathcal{M}_j\big)=\xi_2,
$$
where $1-\xi_1 \gg \xi_2$. Then, the probability of union of screening results for $q$ outcomes satisfies 
$$
\mathbb{P}\Big( \big|\cup_{j=1}^q \widehat{\mathcal{M}}_j\big| = p \Big) \;\geq\; 1 - p(1-\xi_2)^q .
$$
\label{prop:1}
\end{proposition}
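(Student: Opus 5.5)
The plan is to pass to the complement event and apply a union bound over predictors. The event $\{|\cup_{j=1}^q \widehat{\mathcal{M}}_j| = p\}$ says that every predictor $i\in\{1,\dots,p\}$ is selected for at least one outcome. Its complement is $\bigcup_{i=1}^p A_i$, where $A_i := \{ i\notin \widehat{\mathcal{M}}_j \text{ for all } j=1,\dots,q\}$. Hence
\begin{equation*}
\mathbb{P}\Big( \big|\cup_{j=1}^q \widehat{\mathcal{M}}_j\big| = p \Big) \;\geq\; 1 - \sum_{i=1}^p \mathbb{P}(A_i),
\end{equation*}
and it suffices to show $\mathbb{P}(A_i)\le (1-\xi_2)^q$ for every $i$.

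To bound $\mathbb{P}(A_i)$, I will read ``independent screening performed for each outcome'' as saying that the selection decisions $\{i\in\widehat{\mathcal{M}}_j\}$, $j=1,\dots,q$, are mutually independent for each fixed $i$. This is the modelling assumption the statement implicitly relies on, and I will state it explicitly in the proof. Under it, $\mathbb{P}(A_i)=\prod_{j=1}^q \mathbb{P}(i\notin\widehat{\mathcal{M}}_j)$, and I split into two cases according to the block structure of $\mathcal{M}$.

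\emph{Case $i>p_0$.} Then $i\notin\mathcal{M}_j$ for every $j$, so each factor equals $1-\xi_2$ and $\mathbb{P}(A_i)=(1-\xi_2)^q$.

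\emph{Case $i\le p_0$.} Here $i\in\mathcal{M}_j$ exactly for $j\le q_0$. This gives
\begin{equation*}
\mathbb{P}(A_i)=\xi_1^{q_0}(1-\xi_2)^{q-q_0}.
\end{equation*}
The condition $1-\xi_1\gg\xi_2$ implies in particular $1-\xi_1\ge \xi_2$, i.e.\ $\xi_1\le 1-\xi_2$, so $\xi_1^{q_0}\le(1-\xi_2)^{q_0}$ and again $\mathbb{P}(A_i)\le(1-\xi_2)^q$.

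Summing over the $p$ predictors gives $\sum_i\mathbb{P}(A_i)\le p(1-\xi_2)^q$, which is the claimed bound. The only delicate point is justifying the independence across outcomes and translating the informal ``$\gg$'' into the inequality $\xi_1\le1-\xi_2$. Everything else is a one-line union bound. If independence were unavailable, I would try a positive-dependence argument instead, but I expect the intended reading to be independence.
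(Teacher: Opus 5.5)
Your proof is correct, and it differs from the paper's in how the per-predictor bounds are combined. The paper also observes that $\mathbb{P}(i\in\widehat{\mathcal{M}}_j)\ge\xi_2$ for every pair. It uses $1-\xi_1\ge\xi_2$ only implicitly; your case split on $i\le p_0$ versus $i>p_0$ makes that step explicit. Like you, it then uses independence across outcomes to get $\mathbb{P}(i\in\widehat{\mathcal{M}}_j \text{ for some } j)\ge 1-(1-\xi_2)^q$.

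From there the paper multiplies these $p$ probabilities to obtain $\mathbb{P}\big(|\cup_{j}\widehat{\mathcal{M}}_j|=p\big)\ge\big(1-(1-\xi_2)^q\big)^p$, and finishes with Bernoulli's inequality $(1-x)^p\ge 1-px$. That product step silently requires the events $\{i \text{ is selected by some outcome}\}$ to be independent, or positively associated, across predictors $i$. This is not among the hypotheses. It can fail for rules that keep a fixed number of predictors per outcome, where inclusions of different predictors are negatively dependent and the inequality goes the wrong way.

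Your union bound over the complements $A_i$ needs no assumption on dependence across predictors. It uses only the per-predictor independence across outcomes, which both arguments rely on. When full independence does hold, the paper's route gives the intermediate bound $\big(1-(1-\xi_2)^q\big)^p$, which is slightly sharper. Yours reaches the stated bound directly under weaker hypotheses, so for the proposition as stated it is the cleaner argument.
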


This proposition guarantees a lower bound on the probability that no predictors are excluded with high probability, when performing screening for $q$ outcomes separately. For example, when $p=q=10,000$, the minimum probability of screening results including all predictors is 1.000 for $\xi_2= 0.005$. Therefore, traditional screening methods may not exclude any predictors and outcomes for the downstreaming analyses, which poses both computational challenges and replicablity crisis. 

Over the past decade, many existing works have extended single-response screening to the joint analysis of predictors and responses. For example, SIS has been extended to handle multiple responses using the distance correlation \citep{li2012feature}, the martingale difference correlation \citep{shao2014martingale}, and the ball correlation \citep{pan2019generic}. Furthermore, screening methods using a rank-based correlation \citep{he2021sure} and projection correlation \citep{liu2022model} have been proposed for joint high-dimensional settings. However, these methods still focus primarily on screening predictors while treating responses as fixed, leaving the joint screening of both predictors and responses underexplored. As a result, even after predictor screening, computation remains intractable for multimodal data analysis. For instance, when jointly analyzing genome-wide DNA methylation (hundreds of thousands of CpG sites) and transcriptomic data (tens of thousands of genes), screening only predictors still leaves tens of thousands of responses, producing a prohibitively large cross-correlation matrix that is both computationally and memory intensive.


In practice, screening both sides is critical: it improves the detection of truly relevant variable pairs by enhancing the signal-to-noise ratio and uncovering latent regulatory structures. Partial correlation-based methods \citep{ke2022high}, building on earlier univariate approaches \citep{buhlmann2010variable}, have also been adapted to joint data. While these methods are theoretically appealing under Gaussian assumptions, their reliance on structural conditions such as partial faithfulness can be fragile in the presence of multicollinearity or interactions, often leading to missed associations.


To overcome these challenges, we propose a novel method using marginal correlations for joint data analysis: Graph Independence Dual Screening (GIDS). GIDS simultaneously screens predictors and responses while mitigating noise from spurious correlations, leveraging the graph structure of the marginal cross-correlations with memory parsimony. This graph-based approach allows for the identification of correlated modules or pathways, which often yield more interpretable and biologically meaningful results, particularly in multi-omics or imaging-omics studies. Importantly, we provide theoretical guarantees for GIDS by establishing the sure screening property and exact recovery with high probability.

The organization is as follows. In Section \ref{sec:2}, we introduce the background of screening for joint data sets, a measurement and algorithm for GIDS to detect subgraphs in which variables have strong associations. In addition, we evaluate GIDS using synthetic datasets and apply it to the motivating dataset in Section~\ref{sec:4}. Next, we introduce a set of assumptions and theoretical results based on the assumption in Section \ref{sec:5}. Then, the paper concludes with a discussion.

\section{Methods}
\label{sec:2}

\begin{figure}[h]
\centering
\includegraphics[width=1\textwidth]{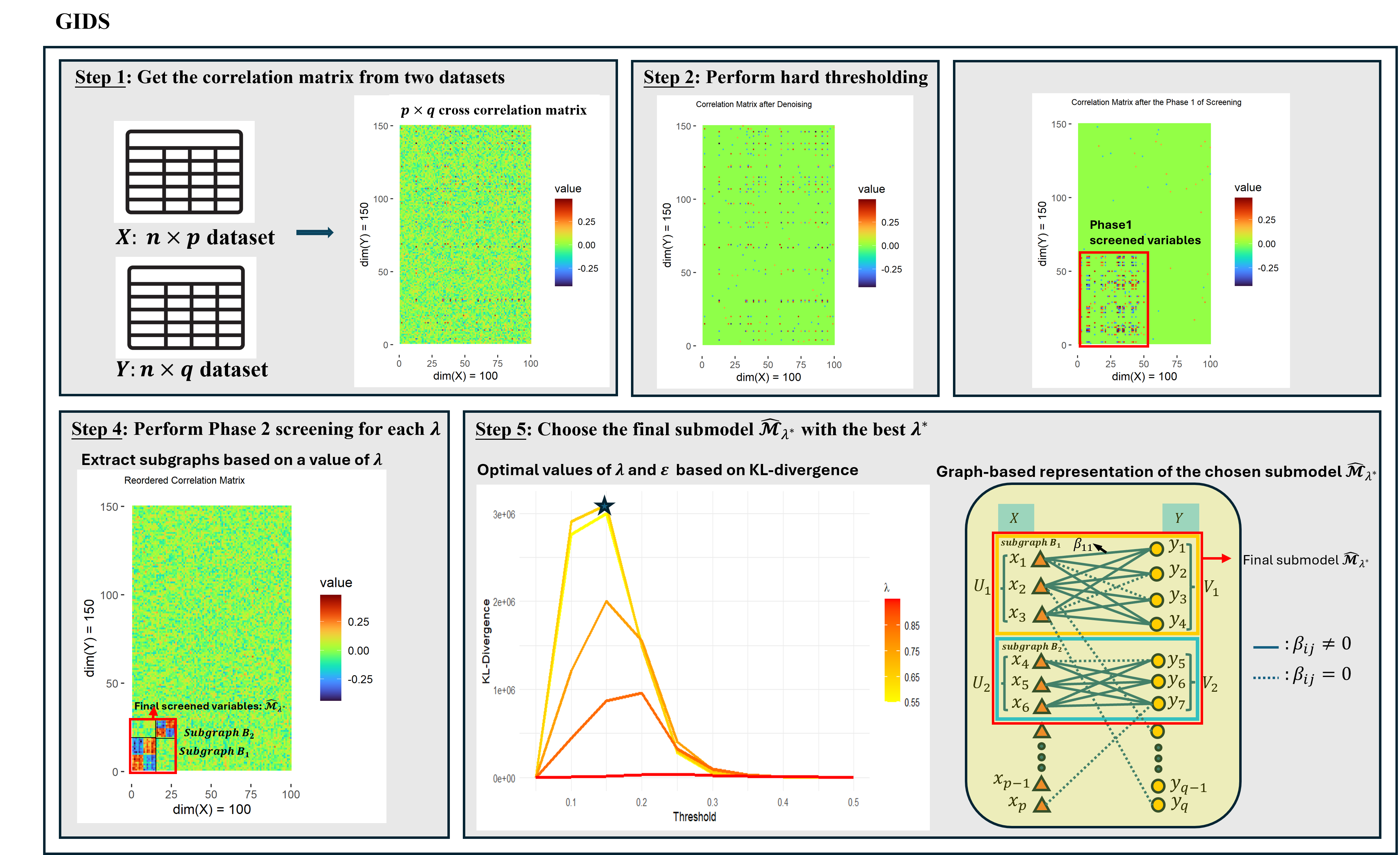}
\caption{Pipeline for screening true positive variables via GIDS. Step 1 presents the sample correlation matrix computed from two joint datasets, which is too large to be stored. Next, Step 2 illustrates the hard thresholded cross correlation matrix, where most spurious correlations are removed. Step 3 demonstrates Phase 1 screened variables. Step 4 displays the final screened variables (submodel $\widehat{\mathcal{M}}_{\lambda^\star}$) from Phase 2 screening. Finaly, Step 5 shows the selection of optimal tuning parameter $\lambda$ and the graph-based representation of the chosen submodel.}
\label{fig:1}
\end{figure}

To motivate our method, we begin by outlining the overall pipeline of the GIDS framework (Figure \ref{fig:1}). The pipeline is designed to handle ultra-high-dimensional multimodal data by reducing noise, compressing, and iteratively screening variables in a scalable manner. Starting with the sample cross-correlation matrix, which is often too large to store directly, the framework applies histogram-based compression and distributional modeling to separate signal from noise. A hard-thresholding step then removes spurious correlations, after which screening is conducted in two stages: a coarse Phase 1 filter that reduces the candidate space, followed by a fine-grained Phase 2 refinement across multiple tuning parameters. The process concludes by selecting the optimal tuning parameter and constructing graph-based submodels that reveal interpretable modules across the two data modalities. This systematic progression ensures both computational tractability and biological interpretability, supporting regression analysis and revealing interacting modules or pathways between the two variable sets.


\subsection{Bipartite Graph Structure for Joint Data Sets}
\label{subsec:21}
In high-dimensional association analysis, a common objective is to identify variable pairs $(i,j)$ such that $X_i$ and $Y_j$ are significantly correlated. A standard approach defines the true submodel as the set $\{(i,j) : \beta_{ij} \neq 0\}$, which flags all pairs with nonzero associations. However, this pairwise formulation often yields limited dimensionality reduction. This is because even a single significant entry in a pair $(i,j)$ leads to the inclusion of both variables $X_i$ and $Y_j$ in the associated variable sets, potentially inflating their size and diluting interpretability. This motivates the need for more parsimonious submodel formulations.

To mitigate this issue and better reflect underlying structured relationships, we adopt an alternative formulation based on structured subgraphs. Specifically, we make the use of a bipartite graph $B = (U, V; E)$, where $U$ and $V$ are disjoint sets of nodes corresponding to the variables of $\mathbf{X}$ and $\mathbf{Y}$, respectively, and $E$ is the set of binary edges representing the presence of non-zero correlations between the variables of $\mathbf{X}$ and $\mathbf{Y}$. We assume that there is a true submodel characterized by quasi-biclique subgraphs $\{B_c=(U_c, V_c; E_c)\}_{c=1}^C$, which reflect a non-square block-diagonal pattern. Here, $\{U_c\}_{c=1}^C$ and $\{V_c\}_{c=1}^C$ are disjoint subsets of $U$ and $V$, respectively, such that $\sum_{c=1}^C |U_c| \leq |U|$ and $\sum_{c=1}^C |V_c| \leq |V|$. We then define a true submodel as

$$
\mathcal{M}_\star = \{(i,j):e_{ij}\in E_c~\text{for~any~}c=1,\dots,C\},
$$
which serves as a parsimonious target capturing structured associations between subsets of variables.


In a joint dataset, important variable pairs tend to have high absolute sample correlation coefficients. Ideally, the distribution of the correlations for important variables should be distinguishable from that of unimportant ones. To formulate this, we model absolute sample correlations $|R_{ij}|$ using two distributions: $g_0$, corresponding to the null hypothesis, is concentrated near zero (e.g., a half-normal distribution), while $g_1$, representing the alternative hypothesis, is centered away from zero. We assume that if variables $X_i$ and $Y_j$ are associated, then $|R_{ij}| \sim g_1$; otherwise, $|R_{ij}| \sim g_0$. To model the edge presence probabilistically, we introduce a latent indicator variable $Z_{ij}$ that denotes whether $X_i$ and $Y_j$ are significantly associated. The magnitude of the sample correlation $|R_{ij}|$ is assumed to follow a mixture distribution:
\begin{eqnarray*}
|R_{ij}| \sim (1 - Z_{ij}) g_0 + Z_{ij} g_1,    
\end{eqnarray*}
Larger values of $|R_{ij}|$ indicate a higher probability that $Z_{ij} = 1$, signaling the likely presence of an edge in $E$, while smaller values suggest $Z_{ij} = 0$, indicating the absence of association.

Now, we connect the concept of latent indicator $Z_{ij}$ to a bipartite graph. This connection facilitates the modeling of $\mathcal{M}_\star$ by naturally aligning $Z_{ij}$ with the notion of an edge in a bipartite graph. For each pair $(i, j) \in U \otimes V$, $Z_{ij}$ is either 0 or 1 based on the existence of a binary edge $e_{ij}$ as follows:
\begin{eqnarray*}
Z_{ij}=1, & \text{if } e_{ij} \in E \\
Z_{ij}=0, & \text{otherwise}
\end{eqnarray*}
 where $I\otimes J$ for two sets $I$ and $J$ denotes the Cartesian product $\{(i,j):i\in I~\text{and}~j\in J\}$. The edge $e_{ij}$ is not observed as the indicator $Z_{ij}$ is not. Instead, we treat the absolute sample correlation $|R_{ij}|$ as a continuous proxy for the latent edge $e_{ij}$, using it to infer the likelihood of an edge's presence. The edge set $E_c$ for the $c$th subgraph can be natrually defined based on the latent indicators $Z_{ij}$ as $E_c = \{e_{ij}:Z_{ij}=1~\text{for}~(i,j)\in U_c\otimes V_c\}$. To reflect the expected structural sparsity, we assume that the probability of the presence of an edge is substantially higher within subgraphs than outside. Formally, for constants $0 < \delta_1, \delta_2 < 1$, we posit:
$$
\mathbb{P}(Z_{ij} = 1 \mid (i,j) \in U_c \otimes V_c~\text{for some } c) = 1 - \delta_1, \quad
\mathbb{P}(Z_{ij} = 1 \mid (i,j) \notin U_c \otimes V_c~\text{for any } c) = \delta_2,
$$

with $1 - \delta_1 \gg \delta_2$, representing that connections in a subgraph are denser than those out of subgraphs. In this framework, $\delta_1$ represents the probability of a false negative (i.e., an edge in a subgraph exhibiting null-like correlation), while $\delta_2$ reflects the probability of a false positive outside the subgraphs.

The goal of the screening method is to identify a set $\widehat{\mathcal{M}}$ such that $\mathcal{M}_\star \subset \widehat{\mathcal{M}}$, with smaller $\widehat{\mathcal{M}}$ being more desirable from a precision standpoint. Consistent with the screening literature, we also aim to recover the associated variable sets $I_X = \cup_{c=1}^C U_c$ and $I_Y = \cup_{c=1}^C V_c$, which represent the true positive subsets of ${\bf X}$ and ${\bf Y}$, respectively.

\subsection{Define (Estimate) $\mathcal{M}_\star$ by $\lambda$-Density}


Unlike traditional one-dimensional screening, where hard-thresholding can effectively reduce the cardinality of $X$,  estimating $\mathcal{M}_\star$ in a two-dimensinoal space requires new heuristics to reduce the cardinalities of $X$ and $Y$. To this end, we introduce the notion of $\lambda$-density of a bipartite graph for the estimation of quasi-bicliques, which include a maximal amount of signals (i.e., $\sum_{(i,j)\in \cup_{c=1}^C E_c} |R_{ij}|/\sum_{(i,j)\notin \cup_{c=1}^C E_c} |R_{ij}|$ ) within the space of $\mathcal{M}_\star$, while only maintaining drastically reduced numbers of variables in $X$ and $Y$ in the following analyses. We define the $\lambda$-density of a  subgraph $H=(U',V';E')$ of $B$ as follows:
\begin{eqnarray}
    \text{den}_{\lambda}(H)=\text{den}_{\lambda}(U',V')  =\frac{\sum_{i\in U', j\in V'}|R_{ij}^\varepsilon|}{(|U'||V'|)^\lambda}, \label{eq:den}
\end{eqnarray}
for a $\lambda$ in $[0.5, 1)$. The edge set can be surpressed in the notation such as $\text{den}_{\lambda}(H)=\text{den}_{\lambda}(U',V')$. To improve the performance of screening, we use truncated absolute sample correlations such that $|R_{ij}^\varepsilon|=|R_{ij}|I(|R_{ij}|>\varepsilon)$ for a cutoff $\varepsilon>0$. These are a denoised version of absolute sample correlation, which represent continuous proxies of latent edges. The $\lambda$-density is the sum of all truncated absolute sample correlations in a subgraph divided by its size, defined by $(|U'||V'|)^\lambda$. The $\lambda$-density can capture the strength of associations between two sets on each side of a bipartite graph. For example, $1$-density represents the sample mean of $|R_{ij}^{\varepsilon}|$ of a subgraph $H$. In addition, the $\lambda$-density can be interpreted as an approximation of the penalized canonical correlation value under the existence of subsets of significantly associated variables \citep{park2025graph}.

Maximizing $\lambda$-density can be written as minimizing the negative logarithm of the $\lambda$-density such that
$$
\underset{U'\subset U,~V'\subset V}{\text{min} }\left( - \log \text{den}_{\lambda}(U',V')\right) = \underset{U'\subset U,~V'\subset V}{\text{min}}\left(- \log \left(\sum_{i \in U', j \in V'} |R_{ij}^\varepsilon| \right) + \lambda \log |U'||V'|\right),
$$
which mirrors the structure of penalized objective functions commonly used in high-dimensional statistical estimation, particularly in sparse regression methods such as the lasso. The first term, $- \log \left(\sum |R_{ij}^\varepsilon| \right)$, encourages the inclusion of subgraphs with large cumulative association strength, which is analogous to maximizing likelihood or goodness-of-fit. The second term, $\lambda \log |U'||V'|$, acts as a complexity $\ell_0$-penalty of the sizes of the node sets, discouraging the selection of large subgraphs unless they demonstrate sufficiently strong internal coherence.

This formulation reflects the same core principle as lasso and ridge regression, which balance a data fidelity term (e.g., squared error loss) with a sparsity-inducing penalty (e.g., $\ell_1$ norm of the coefficients). In our case, the log-cardinality penalty $\log |U'||V'|$ discourages overly large bicliques by penalizing their size in a smooth, convex manner. This yields a trade-off between the total strength of signal and the size of the subgraph, promoting the discovery of compact, highly connected subgraphs based on a parsimonious model.

From this viewpoint, the minimization of $-\log \text{den}_\lambda(H)$ over all candidate subgraphs can be interpreted as a graph-based analogue of sparse regression: rather than selecting individual variables via coefficient shrinkage, we are selecting dense subgraphs via a penalty on block size. The parameter $\lambda$ thus serves a role similar to the regularization parameter in penalized models, controlling the balance between detection sensitivity and structural parsimony. However, maximizing $\text{den}_\lambda(H)$ remains an NP-hard problem, as it involves identifying the densest subgraph, which is a well-known computationally intractable task. To address this challenge, we propose a greedy algorithm that constructs a sequence of high-density subgraphs across varying subgraph sizes, providing a scalable approximation to the underlying optimization problem.

The following proposition shows that $\lambda$-density provides a principled and well-defined measure for detecting a true subgraph when there exists a block of consistently strong associations. It formalizes the intuition that, under homogeneity of signal strength, the true subgraph can be recovered by maximizing $\lambda$-density across candidate subgraphs.
\begin{proposition}
Let $B=(U,V;E)$ be a bipartite graph and let
\[
w_{ij}:=|\rho_{ij}|,
\qquad (i,j)\in U\times V,
\]
be the absolute correlation coefficients. For nonempty $S\subseteq U$ and $T\subseteq V$, write
\[
W(S,T):=\sum_{i\in S,\,j\in T}w_{ij},
\qquad
m(S,T):=|S||T|,
\qquad
\operatorname{den}_{\lambda}(S,T):=\frac{W(S,T)}{m(S,T)^{\lambda}},
\qquad \lambda\in[1/2,1).
\]
Fix a candidate quasi-biclique $B_0=(U_0,V_0)$ with $W_0:=W(U_0,V_0)>0$ and $m_0:=|U_0||V_0|$. For every competing nonempty rectangle $A=(S,T)\neq B_0$, set $W_A:=W(S,T)$ and $m_A:=m(S,T)$. Define
\[
\Lambda(B_0):=\left\{\lambda\in[1/2,1):
\frac{W_0}{m_0^{\lambda}}\geq \frac{W_A}{m_A^{\lambda}}
\text{ for all nonempty }A=(S,T)\neq B_0\right\}.
\]
Then $B_0$ maximizes the $\lambda$-density for every $\lambda\in\Lambda(B_0)$ if and only if $\Lambda(B_0)$ is nonempty.
\label{prop:2}
\end{proposition}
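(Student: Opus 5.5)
This statement is essentially a definitional unpacking, so the plan is to prove both directions directly from the definition of $\Lambda(B_0)$, with no appeal to earlier results. First note that the set of competing nonempty rectangles $A=(S,T)$ is finite, since $U$ and $V$ are finite. Hence for each fixed $\lambda$ the supremum of $\operatorname{den}_\lambda$ over all nonempty rectangles is attained. Saying that $B_0$ maximizes the $\lambda$-density means precisely that $\operatorname{den}_\lambda(U_0,V_0)\ge \operatorname{den}_\lambda(S,T)$ for every nonempty $(S,T)$. Comparing $B_0$ with itself is trivial, so this is equivalent to the defining inequality of $\Lambda(B_0)$ holding for all $A\neq B_0$. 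In other words, I would first record the identity
\[
\Lambda(B_0)=\Bigl\{\lambda\in[1/2,1):\ \operatorname{den}_\lambda(U_0,V_0)=\max_{S,T\neq\emptyset}\operatorname{den}_\lambda(S,T)\Bigr\}.
\]

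For the direction ``$\Lambda(B_0)\neq\emptyset \Rightarrow$ $B_0$ maximizes for every $\lambda\in\Lambda(B_0)$'', take any $\lambda\in\Lambda(B_0)$ and apply the identity: $B_0$ is a maximizer. For the converse, I would read the statement ``$B_0$ maximizes the $\lambda$-density for every $\lambda\in\Lambda(B_0)$'' as asserting that there is at least one such $\lambda$. Otherwise the universal quantifier over an empty set makes the claim vacuous, and the equivalence would degenerate. Under this reading, any $\lambda$ at which $B_0$ is a maximizer lies in $\Lambda(B_0)$ by the identity, so $\Lambda(B_0)$ is nonempty. I would state this interpretive convention explicitly, since it is the only real subtlety.

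To make the proposition more informative, I would add a remark on the structure of $\Lambda(B_0)$. For each $A$ with $W_A>0$, the condition $W_0 m_0^{-\lambda}\ge W_A m_A^{-\lambda}$ is equivalent to
\[
\lambda\log(m_A/m_0)\ge \log(W_A/W_0),
\]
which is linear in $\lambda$. So each constraint defines a closed half-line (or all of $\mathbb{R}$, or the empty set), and $\Lambda(B_0)$ is an intersection of finitely many such sets with $[1/2,1)$. Consequently it is an interval, closed relative to $[1/2,1)$.

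In particular, a homogeneous block with $w_{ij}\equiv w$ on $U_0\times V_0$ and $w_{ij}=0$ elsewhere makes $\Lambda(B_0)$ nonempty. The reason is that $W_A\le w\,|S\cap U_0||T\cap V_0|\le w\,m_A^{\lambda}m_0^{1-\lambda}$ whenever $S\subseteq U_0$ and $T\subseteq V_0$ fail to hold jointly. This illustrates the intended use of the proposition, though it is not needed for the equivalence itself.

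The main obstacle is not technical. It is fixing the quantifier reading in the converse so that the ``only if'' direction is meaningful, and checking that $W_0>0$ keeps all the logarithms and comparisons well defined.
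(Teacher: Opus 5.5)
Your proposal is correct and follows essentially the same route as the paper. Both arguments unpack the definition so that $\Lambda(B_0)$ is exactly the set of admissible $\lambda$ at which $B_0$ is a maximizer. Both rewrite each competitor's constraint via logarithms as a half-line condition in $\lambda$. Both read the converse as the existence of at least one such $\lambda$, which is also how the paper's final sentence resolves the otherwise vacuous quantifier. Your extra observations are correct but not needed for the equivalence: that $\Lambda(B_0)$ is an interval relatively closed in $[1/2,1)$, and that a homogeneous block makes $\Lambda(B_0)$ nonempty via $\min(a,b)\le a^{\lambda}b^{1-\lambda}$.
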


\subsection{Greedy Algorithm for Graph Independence Dual Screening}

To improve the performance of the greedy selection strategy, we objectively choose a thresholding level $\varepsilon$ aimed at better distinguishing true positive nodes from those not in the target sets. Thresholding helps suppress the influence of weak or spurious associations, thereby enhancing the signal-to-noise ratio. The optimal value of $\varepsilon$ can differ based on tasks with different signal-to-noise ratios, which is the ratio of important and unimportant variables in a phase of a screening procedure. We use two different thresholds for two steps: i) hard thresholding $|R_{ij}|$ such that $|R_{ij}|I(|R_{ij}|>\varepsilon_1)$ with level $\varepsilon_1$ for reordering the rows and columns of the absolute sample correlation matrix, ii) cutoff $\varepsilon_2$ for binary edges $I(|R_{ij}|>\varepsilon_2)$ to test the significance of extracted subgraph. Hard thresholding in step (i) reduces the variance of the absolute sample correlations for noise variables from $O(n^{-1})$ to $O(\exp\left(-\Omega(n)\right))$, which facilitates screening and lowers the error rate for achieving the sure screening property, from $O\!\left(\exp(-\Omega(\sqrt{n}))\right)$ to $O\!\left(\exp(-\Omega(n))\right)$. 

First, we calculate the absolute sample correlation coefficients and discretize them as histogram-valued data for faster and memory-efficient estimation. The histogram-valued data facilitate the calculation of likelihood function. With the simplified likelihood, we calculate the components of mixture distributions of noise and signal. Then, based on the estimates of mixture components, we calculate the denosing level $\varepsilon_1$ for the subgraph extraction and threshold $\varepsilon_2$ for binary edges to test the significance of extracted subgraph. Further details are provided in Supplementary Materials~\ref{sub:thresholding}.


The greedy algorithm performs a row and column exclusion process, yielding the rank of variables for both predictors and reponses. The greedy algorithm starts with node sets $U'\subset U$ and $V'\subset V$ and reduce the node sets to the target size $(p_{\text{target}}, q_{\text{target}})$ using aggregated row and column sums of the truncated absolute  correlation matrix $[|R^{\varepsilon_1|}_{ij}]_{i\in U',j\in V'}$, where $U' =\{u_1,u_2,\dots,u_{|U'|}\}$ and $V' =\{v_1,v_2,\dots,v_{|V'|}\}$. Prior to applying the algorithm, we specify the granularity of the algorithm $(k_1,k_2)$, which represent the numbers of row and column exclusion at each iteration, respectively. We then compute the row and column aggregate scores of the truncated absolute correlation matrix as
\begin{eqnarray*}
\mathbf{r.sum}_1 &=& \left(\sum_{j\in V'}|R_{u_1'j}^{\varepsilon_1}|, \sum_{j\in V'}|R_{u_2'j}^{\varepsilon_1}|, \dots, \sum_{j\in V'}|R_{u_{p_{initial}}'j}^{\varepsilon_1}|\right),\\
\mathbf{c.sum}_1 &=& \left(\sum_{i\in U'}|R_{iv_1'}^{\varepsilon_1}|, \sum_{i\in U'}|R_{iv_2'}^{\varepsilon_1}|, \dots, \sum_{i\in V}|R_{iv_{q_{initial}}'}^{\varepsilon_1}|\right).
\end{eqnarray*}
 The greedy algorithm initiates with the full node sets $(\widetilde{U}_1, \widetilde{V}_1)=(U',V')$ as active node sets at time $t = 1$. At each iteration $t$, the algorithm selects subsets of $k_1$ rows and $k_2$ columns with the lowest scores from the current active node sets, defined as
\begin{eqnarray*}
\bm{u}_t = \operatorname{argmin}^{(k_1)}_{u\in \widetilde{U}_t}(\mathbf{r.sum}_{t,u}), \quad
\bm{v}_t = \operatorname{argmin}^{(k_2)}_{v\in \widetilde{V}_t}(\mathbf{c.sum}_{t,v}),  
\end{eqnarray*}
where $\operatorname{argmin}^{(k)}_{i \in I} v_i$ denotes the indices of the $k$ smallest values of the vector $\bm{v}=(v_1,\dots,v_l)$ over the integer set $I\subset [l]$ and $\mathbf{r.sum}_{t,u}$ is the $u$th component of $\mathbf{r.sum}_t$. Then, excludes the rows $\bm{u}_t$ from $\widetilde{U}_t$ and update 
\begin{eqnarray*}
&&\widetilde{U}_{t+1} \leftarrow \widetilde{U}_t\backslash \bm{u}_t, \widetilde{V}_{t+1} \leftarrow \widetilde{V}_t, \mathbf{r.sum}_{t,\bm{u}_t}=0, \mathbf{r.sum}_{t+1} \leftarrow \mathbf{r.sum}_t,\\
&&\mathbf{c.sum}_{t,j} \leftarrow \mathbf{c.sum}_{t,j}-\sum_{u\in \bm{u}_t}|R_{uj}^{\varepsilon_1}|~\text{for~all}~j\in \widetilde{V}_{t+1}, \mathbf{r.sum}_{t+1} \leftarrow \mathbf{r.sum}_{t},
\end{eqnarray*}
if $\{\sum_{u\in \bm{u}_t}\mathbf{r.sum}_{t,u}/|\widetilde{V}_t||k_1| < \sum_{v \in \bm{v}_t}\mathbf{c.sum}_{t,v}/|\widetilde{U}_t||k_2|~\text{and}~|\widetilde{U}_t|>p_{\text{target}}\}~\text{or}~|\widetilde{V}_t|\leq q_{\text{target}}$. Otherwise, excludes the columns $\bm{v}_t$ from $\widetilde{V}_t$ and update 
\begin{eqnarray*}
&&\widetilde{U}_{t+1}\leftarrow \widetilde{U}_t,~\widetilde{V}_{t+1}\leftarrow \widetilde{V}_t\backslash \bm{v}_t,~\mathbf{c.sum}_{t,\bm{v}_t}=0,~\mathbf{c.sum}_{t+1} \leftarrow \mathbf{c.sum}_t\\
&&\mathbf{r.sum}_{t,i} \leftarrow \mathbf{r.sum}_{t,i}-\sum_{v\in \bm{v}_t}|R_{i v}^{\varepsilon_1}|~\text{for all}~i\in \widetilde{U}_{t+1},~\mathbf{r.sum}_{t+1} \leftarrow \mathbf{r.sum}_{t}.
\end{eqnarray*}
Then, $\mathbf{r.sum}_{t,\widetilde{U}_t}$ and $\mathbf{c.sum}_{t,\widetilde{V}_t}$ at time $t$ represent the row and column aggregate scores of the (truncated) absolute correlation matrix with the variables of ${\bf X}$ and ${\bf Y}$ indexed by $\widetilde{U}_t$ and $\widetilde{V}_t$, respectively. In addition, $\mathbf{r.sum}_t$ and $\mathbf{c.sum}_t$ have 0 on $U'\backslash\widetilde{U}_t$ and $V'\backslash\widetilde{V}_t$, respectively. The first phase ends at the earliest time $t=t'$ such that $|\widetilde{U}_{t'}|\leq p_{\text{target}}$ and $|\widetilde{V}_{t'}|\leq q_{\text{target}}$. We denote this process by $\text{Greedy}( (U',V'),(p_{target},q_{target}),(k_1,k_2))$, which returns the screened variables $\{(\widetilde{U}_t,\widetilde{V}_t)\}_{t=1}^{t'}$ and $\{\text{den}_0(\widetilde{U}_t,\widetilde{V}_t)\}_{t=1}^{t'}$.

As previously noted, storing the full correlation matrix for a joint dataset can be prohibitively memory-intensive. To address this, we implement two versions of the algorithm. The memory-efficient version avoids storing the entire absolute correlation matrix at the outset. Instead, during the updates of $\mathbf{r.sum}$ and $\mathbf{c.sum}$, the quantities $\sum_{v \in \mathbf{v}_t} |R_{iv}^{\varepsilon_1}|$ and $\sum_{u \in \mathbf{u}_t} |R_{uj}^{\varepsilon_1}|$ are recomputed directly from the dataset each time. The speed-optimized version, in contrast, prioritizes computational efficiency over memory usage. Here, the algorithm loads and reuses the pre-stored quantities for updating $\mathbf{r.sum}$ and $\mathbf{c.sum}$. This approach is feasible when the initial node sets $U'$ and $V'$ are small enough that the cross-correlation matrix can fit comfortably within available memory.

In the first phase (Phase 1), we reduce the variable dimensions from $(p, q)$ to $(p_{\text{phase1}}, q_{\text{phase1}})$ with the memory-efficent version of greedy algorithm, as the dimensions $(p, q)$ is too large to store the cross correlation matrix. The choice of $(p_{\text{phase1}}, q_{\text{phase1}})$ should reflect two considerations: (i) the final screened set cannot exceed this size, and (ii) the correlation matrix of this size must be storable within the workstation’s memory constraints. We perform the exclusion process with coarse granularity, removing $k_1\geq 1$ rows or $k_2\geq 1$ columns at each step. 

In the second phase (Phase 2), we perform subgraph extraction for further screening with a finer granularity ($k_1=k_2=1$). At this point, we utilize the stored absolute correlation matrix, enabling more fast computation. The algorithm in Phase continues the exclusion process for subgraph extractions pararelly over the values of $\bm{\lambda}=(\lambda_1,\dots,\lambda_g)$, considering the screened variables in Phase 1 as the initial active node set $(\widetilde{U}_{c,1}^
\lambda,\widetilde{V}_{c,1}^
\lambda)=(\widetilde{U}_{t'},\widetilde{V}_{t'})$ for $c=1$ and all $\lambda \in \bm{\lambda}$ at a finer granularity, setting $k_1=1$ and $k_2=1$. Then, based on the set of sequence $\{(\widetilde{U}_{c,t},\widetilde{V}_{c,t})\}$, extract the subgraph of node set $(\widetilde{U}_{c,t_{\lambda,c}}^
\lambda,\widetilde{V}_{c,t_{\lambda,c}}^
\lambda)$ such that
\begin{eqnarray}
    t_{\lambda,c}=\text{argmax}_{t=1,2,\dots,|\widetilde{U}_{c,1}^
\lambda|+|\widetilde{V}_{c,1}^
\lambda|-2} \text{den}_\lambda (\widetilde{U}_{c,t}^
\lambda,\widetilde{V}_{c,t}^
\lambda).\nonumber
\end{eqnarray}
Next, to extract the $(c+1)$st subgraph, we repeat the procedure described above with an initial active node sets $\widetilde{U}_{c+1,1}^\lambda = \widetilde{U}_{c,1}^\lambda \backslash \widetilde{U}_{c,t_{\lambda,c}^\star}^\lambda$ and $\widetilde{V}_{c+1,1}^\lambda = \widetilde{V}_{c,1}^\lambda \backslash \widetilde{V}_{c,t_{\lambda,c}^\star}^\lambda$, until a stop criterion is satisfied. Once the stopping condition is first satisfied during the $(c+1)$st subgraph extraction, we set $C = C_\lambda$ to indicate the total number of extracted subgraphs for the given $\lambda$.

We now describe the stopping criterion for the extraction procedure by testing the significance of the $c$th extracted subgraph.  Under the null hypothesis (i.e. there are no significant associations between $X$ and $Y$), the full graph with binary edges $\{I(|R_{ij}|>\varepsilon)\}_{i,j}$ can be modeled as an Erdős–Rényi model regardless of the value of $\varepsilon$. However, selecting an appropriate threshold remains critical, as it governs the trade-off between sensitivity and precision in identifying signal edges. To optimally balance these, the threshold $\varepsilon_2$ is selected by maximizing the F1 score. Specifically, we accept an extracted subgraph $B_c = (U_c,V_c;E_c)$ if
$$h(U_c,V_c):=\exp \left( - |U_c| |V_c| \left[ D(\gamma_c^{\varepsilon_2} \| \gamma_0^{\varepsilon_2}) - \frac{\log \left( |U|e/|U_c| \right)}{|V_c|} - \frac{\log \left( |V|e/|V_c| \right)}{|U_c|} \right] \right) < \delta,$$
where $\gamma_c^{\varepsilon_2} = \sum_{i\in U_c,j\in V_c}I(|R_{ij}|>\varepsilon_2)/|U_c||V_c|$, $\gamma_0^{\varepsilon_2} = \sum_{i\in U,j\in V}I(|R_{ij}|>\varepsilon_2)/|U||V|$, $D(a\|b) = a\log (a/b) + (1-a)\log ((1-a)/(1-b))$. The theoretical property and justification of this stop criterion will be introduced in the next section. 

Finally, we introduce the precedure to choose the best $\lambda$. The value of chosen $\lambda$ for $\varepsilon$ based on $\{I(|R_{ij}|>\varepsilon)\}_{i,j}$ can vary depending the value of $\varepsilon$. For example, for a big subgraph, which consists of two highly related subgraphs, a small $\lambda$ can be chosen for a small $\varepsilon$, which corresponds to a big subgraph, while two splited subgraphs of it can be chosen with higher $\varepsilon$. The optimal $\lambda$ maximizes the KL-divergence such that
\begin{eqnarray}
    \lambda^\star =  \text{argmax}_{\lambda\in \bm{\lambda}} 
    \text{max}_{\varepsilon \in \bm{\varepsilon}}D_{\mathrm{KL}}\left(P_{\lambda,\varepsilon} \| Q_\varepsilon \right),\label{eq:normKL}
\end{eqnarray}
where $D_{\mathrm{KL}}$ is the KL-divergence and $P_{\lambda,\varepsilon}$ and $Q_\varepsilon$ are the target and reference distributions. The details of these are described in Supplementary Materials \ref{sub:lambda}. The extracted subgraphs are $\widehat{B} = \widehat{B}^{\lambda^\star} =\{\widehat{B}_c^{\lambda^\star}=(\widehat{U}_c^{\lambda^\star},\widehat{V}_c^{\lambda^\star};\widehat{E}_c^{\lambda^\star})\}_{c=1}^{C_{\lambda^\star}}$, where the edge sets for the final model are calibrated to be $\widehat{E}_c^{\lambda^\star} = \{e_{ij}:|R_{ij}| > \varepsilon^\star~\text{for}~(i,j)\in \widehat{U}_c^{\lambda^\star}\otimes\widehat{V}_c^{\lambda^\star}\}$ for $c=1,\dots,C_{\lambda^\star}$. Lastly, the algorithm returns the final submodel and the indice of screened variables such that $\widehat{\mathcal{M}} = \{(i,j):e_{ij}\in \widehat{E}_c^{\lambda^\star}~\text{for~any}~c=1,\dots,C_{\lambda^\star}\}$, $\widehat{I}_X = \cup_{c=1}^{C_{\lambda^\star}}\widehat{U}_c^{\lambda^\star}$, and $\widehat{I}_Y = \cup_{c=1}^{C_{\lambda^\star}}\widehat{V}_c^{\lambda^\star}$.

\begin{algorithm}[h]
\caption{: Greedy$((U',V'),(p_{\text{target}},q_{\text{target}}),(k_1,k_2),\varepsilon_1,\text{isMP})$: row–column exclusion}
\begin{algorithmic}[1]
        \State \textbf{Input:} joint data set $({\bf X, Y})$, initial node sets $(U',V')$, $(p_{\text{target}},q_{\text{target}})$, $(k_1,k_2)$, $\varepsilon_1$, \text{isMP}
		\State \textbf{Output:} Screened variables and the order of subsets for excluded variables
        \State If \text{isMP} = False, store $|R_{ij}^{\varepsilon_1}|$ for $i\in U'$ and $j\in V'$ in the memory
        \State  Set $(\widetilde{U}_1,\widetilde{V}_1)=(U',V')$
        \State Calcultate row and column sums
$$\mathbf{r.sum}_1 = \left(\sum_{j\in V'}|R_{u_1'j}^{\varepsilon_1}|, \dots, \sum_{j\in V'}|R_{u_{|U'|}'j}^{\varepsilon_1}|\right),
\mathbf{c.sum}_1 = \left(\sum_{i\in U'}|R_{iv_1'}^{\varepsilon_1}|, \dots, \sum_{i\in V}|R_{iv_{|V'|}'}^{\varepsilon_1}|\right)$$
\State Set $t' = \lceil(|U'|-p_{\text{target}})/k_1\rceil+\lceil(|V'|-q_{\text{target}})/k_2\rceil$
\For{$t = 1,2,\dots ,t'$}
\State Calculate the smallest $k_1$ rows and $k_2$ columns $$\bm{u}_t = \operatorname{argmin}^{(k_1)}_{u\in \widetilde{U}_t}(\mathbf{r.sum}_{t,u}), \quad
\bm{v}_t = \operatorname{argmin}^{(k_2)}_{v\in \widetilde{V}_t}(\mathbf{c.sum}_{t,v})$$
\State Update active sets and row and columns sums:
\State If $\{\sum_{u\in \bm{u}_t}\mathbf{r.sum}_{t,u}/|\widetilde{V}_t||k_1| < \sum_{v \in \bm{v}_t}\mathbf{c.sum}_{t,v}/|\widetilde{U}_t||k_2|~\text{and}~|\widetilde{U}_t|>p_{\text{target}}\}~\text{or}~|\widetilde{V}_t|\leq q_{\text{target}}$
\begin{eqnarray*}
&&\widetilde{U}_{t+1} \leftarrow \widetilde{U}_t\backslash \bm{u}_t, \widetilde{V}_{t+1} \leftarrow \widetilde{V}_t, \mathbf{r.sum}_{t,\bm{u}_t}=0, \mathbf{r.sum}_{t+1} \leftarrow \mathbf{r.sum}_t,\\
&&\mathbf{c.sum}_{t,j} \leftarrow \mathbf{c.sum}_{t,j}-\sum_{u\in \bm{u}_t}|R_{uj}^{\varepsilon_1}|~\text{for~all}~j\in \widetilde{V}_{t+1}, \mathbf{r.sum}_{t+1} \leftarrow \mathbf{r.sum}_{t},
\end{eqnarray*}
\State otherwise,
\begin{eqnarray*}
&&\widetilde{U}_{t+1}\leftarrow \widetilde{U}_t,~\widetilde{V}_{t+1}\leftarrow \widetilde{V}_t\backslash \bm{v}_t,~\mathbf{c.sum}_{t,\bm{v}_t}=0,~\mathbf{c.sum}_{t+1} \leftarrow \mathbf{c.sum}_t\\
&&\mathbf{r.sum}_{t,i} \leftarrow \mathbf{r.sum}_{t,i}-\sum_{v\in \bm{v}_t}|R_{i v}^{\varepsilon_1}|~\text{for all}~i\in \widetilde{U}_{t+1},~\mathbf{r.sum}_{t+1} \leftarrow \mathbf{r.sum}_{t}.
\end{eqnarray*}
\EndFor
         \State Return $\{(\widetilde{U}_t,\widetilde{V}_t)\}_{t=1}^{t'}$ and $\{\text{den}_0(\widetilde{U}_t,\widetilde{V}_t)\}_{t=1}^{t'}$
	\end{algorithmic}
\label{algo1}
\end{algorithm}

\begin{algorithm}[h]
\caption{: GIDS procedure}
\begin{algorithmic}[1]
        \State \textbf{Input:} joint data set $({\bf X},{\bf Y})$, $\bm{\lambda} = \{\lambda_1,\dots,\lambda_{g_1}\}$, $(p_{\text{phase1}},q_{\text{phase1}})$, $(k_1,k_2)$, $\delta$
		\State \textbf{Output:} A set of subgraphs and screened variables
        \State Calculate absolute sample correlations and summarize them as histogram-valued data
        \State Calculate the thresholding level $\varepsilon_1$ for the greedy algoritm and $\varepsilon_2$ for binary edge 
         \State $\{(\widetilde{U}_t,\widetilde{V}_t)\}_{t=1}^{t'}$ $\leftarrow$ Greedy($(U,V)$,~$(p_{\text{phase1}},q_{\text{phase1}})$,~$(k_1,k_2)$,~$\varepsilon_1$, \text{isMP}=True) \Comment{Phase 1} 
        \For{$g = 1,2,\dots ,g_1$} \Comment{Phase 2}
        \State Set $c=1$ and $\widetilde{U}^{\lambda_g}_{c,1}=\widetilde{U}_{t'}$ and $\widetilde{V}^{\lambda_g}_{c,1}=\widetilde{V}_{t'}$
        \While{}
            \State $\{(\widetilde{U}^{\lambda_g}_{c,t},\widetilde{V}^{\lambda_g}_{c,t})\}_{t=1}^{|\widetilde{U}^{\lambda_g}_{c,1}|+|\widetilde{V}^{\lambda_g}_{c,1}|-2}$ $\leftarrow$ Greedy($(\widetilde{U}^{\lambda_g}_{c,1},\widetilde{V}^{\lambda_g}_{c,1})$,~$(1,1)$,~$(1,1),~\varepsilon_1$, \text{isMP}=False)
            \State $    t_{\lambda_g,c}=\text{argmax}_{t=1,2,\dots,|\widetilde{U}_{c,1}^{\lambda_g}|+|\widetilde{V}_{c,1}^{\lambda_g}|-2} \text{den}_\lambda(\widetilde{U}_{c,t}^{\lambda_g},\widetilde{V}_{c,t}^{\lambda_g})$
            \State $(\widetilde{U}_{c+1,1}^{\lambda_g},\widetilde{V}_{c+1,1}^{\lambda_g})\leftarrow (\widetilde{U}_{c,1}^{\lambda_g} \backslash \widetilde{U}_{c,t_{\lambda_g,c}}^{\lambda_g},\widetilde{V}_{c,1}^{\lambda_g} \backslash \widetilde{V}_{c,t_{\lambda_g,c}}^{\lambda_g})$ and $c\leftarrow c+1$ 
        \EndWhile~{$h(\widetilde{U}_{c-1,t_{\lambda_g,c-1}}^{\lambda_g},\widetilde{V}_{c-1,t_{\lambda_g,c-1}}^{\lambda_g})<\delta$}        
        \EndFor
        \State Find the best $\lambda^{\star}$ based on \eqref{eq:normKL}
        \State Return Phase 1 screened variables $(\widetilde{U}_{t'},\widetilde{V}_{t'})$ and the extracted subgraphs $\{(\widetilde{U}_{c,t_{\lambda^\star,c}}^{\lambda^\star},\widetilde{V}_{c,t_{\lambda^\star,c}}^{\lambda^\star})\}_{c=1}^C$
	\end{algorithmic}
\label{algo1}
\end{algorithm}

\section{Simulation Studies}
\label{sec:4}

We numerically evaluate the performance of the proposed GIDS approach and benchmark it with Distance Correlation Sure Independence Screening (DC-SIS) \citep{li2012feature}, Ball Correlation Sure Independence Screening (Bcor-SIS) \citep{pan2019generic}, and Projection Correlation Screening (PC-Screen) \citep{liu2022model}. We simulate $n$ samples ($n=200$) of $X$ and $Y$ based on
\begin{eqnarray*}
\begin{pmatrix}
    X     \\
    Y     \\
\end{pmatrix}\sim \mathcal{N}(\mathbf{0}_{p+q},\Sigma),~\Sigma=\begin{pmatrix}
    \Sigma_X & \Sigma_{XY}     \\
     \Sigma_{XY}^\top &  \Sigma_Y     \\
\end{pmatrix}
\end{eqnarray*}
where $\text{diag}(\Sigma) = \mathbf{1}_{p+q}$ and a bipartite graph for $\Sigma_{XY}$ has latent biclique subgraphs $B_c = (U_c,V_c;E_c)$ for $c=1,2$. $X_i$ and $Y_j$ have a nonzero correlation $\rho_{ij} \neq 0$, if for $i\in U_1$ and $j\in V_1$; otherwise, $\rho_{ij}=0$. We generate samples directly from the joint distribution, rather than using regression equations, because of their one-to-one correspondence for Guassian variables and the more intuitive interpretation of the joint Gaussian model. As the benchmarking methods are supposed to screen only predictors, we compare the performances of methods with a focus on the predictor screening.


We conduct experiments under four different scenarios, formed by the cross product of four settings for variable dimensions and two settings for the true variable size and sample size. To examine the effect of response variable dimensions on predictor screening, we consider four setups: A) $(p,q)=(1000,1500)$; B) $(p,q)=(4000,1500)$; C) $(p,q)=(1000,5000)$; D) $(p,q)=(4000,5000)$ with the fixed sizes of two subgraphs such that $(|U_1|,|V_1|)=(20,30)$ and $(|U_2|,|V_2|)=(30,40)$ with $n=200$. The correlations are specified as $\rho_{ij}=0$ if $Z_{ij}=0$, $\rho_{ij}=0.3$ if $Z_{ij}=1$, with $\delta_1=0.04$ and $\delta_2=0$ in all cases. For each setup, we select the optimal tuning parameter from the range 0.5 to 0.9 (incremented by 0.1), based on the KL divergence.

For each setting, we generate 100 data sets to evaluate the performance of GIDS and the other three methods by assessing the sensitivity, precision, and F1-score of GIDS and other methods for the two settings. For GIDS, we consider both predictors and responses for mesaurement, whereas the competing methods only account for predictors. In the context of classification, \textit{sensitivity}  measures the proportion of correlated $X$ and $Y$ pairs that are correctly identified by the model. It is defined as:
\begin{eqnarray*}
\text{Sensitivity} =  \frac{ |\widehat{I}_X\cap I_X|+|\widehat{I}_Y\cap I_Y|}{\underbrace{|\widehat{I}_X\cap I_X|+|\widehat{I}_Y\cap I_Y|}_{\text{number of true positives}} + \underbrace{|I_X\backslash \widehat{I}_X|+|I_Y\backslash \widehat{I}_Y|}_{\text{number of false negatives}} }.    
\end{eqnarray*}
On the other hand, \textit{precision}  measures the proportion of true uncorrelated $X$ and $Y$ pairs that are correctly identified by the model. It is defined as:
\begin{eqnarray*}
\text{Precision} = \frac{ |\widehat{I}_X\cap I_X|+|\widehat{I}_Y\cap I_Y|}{\underbrace{|\widehat{I}_X \cap I_X|+|\widehat{I}_Y \cap I_Y|}_{\text{number of true positives}} + \underbrace{|\widehat{I}_X\backslash I_X|+|\widehat{I}_Y\backslash I_Y|}_{\text{number of false positives}}}.   
\end{eqnarray*}
The \textit{F1-score} is the harmonic mean of sensitivity and precision. It provides a single metric that balances the trade-off between these two measures, especially in cases where one is significantly higher than the other. A high F1-score indicates that the method achieves both high sensitivity (few false negatives) and high precision (few false positives), making it a reliable measure of overall performance. It is particularly useful in imbalanced settings where accuracy alone may be misleading. It is written as
\begin{eqnarray*}
\text{F1 score}=\frac{2\cdot\text{Precision}\cdot \text{Sensitivity}}{\text{Sensitivity} + \text{Precision} }.
\end{eqnarray*}

We evaluate the screening performance of four methods for two screening sizes in predictor screening, with $d_1 = 2\lceil n/\log n\rceil$ and $d_2 = 4\lceil n/\log n\rceil$, where $\lceil n/\log n\rceil=38$ for $n=200$. Table \ref{table:1} shows that GIDS, particularly in Phase 1, consistently achieves the highest sensitivity, precision, and F1-scores across cases and screening sizes. In Cases (A) and (B), where the dimension of the responses is relatively low, the three baseline methods (DC-SIS, Bcor-SIS, PC-Screen) perform competitively with each other, and although GIDS still achieves the best results except for the sensitivity for Cases (A), the margin over the baselines is modest. In contrast, in Cases (C) and (D), which involve relatively higher dimensional responses, the performance gap becomes much larger—GIDS substantially outperforms the baselines across all metrics, highlighting its robustness in more challenging, high-dimensional responses settings.

Overall, the results confirm that GIDS delivers near-optimal performance and consistently surpasses existing methods. In addition, it shows that the performance of the three benchmarking methods is affected by the dimension of responses rather than the dimension of predictors, due to the lack of a thresholding mechanism for high-dimensional responses.

\begin{table}[]
\caption{Percentages of sensitivity, precision, and F1 Score of GIDS and the other three methods for four cases based on the different numbers of total variables, true variables, and samples with two screening sizes $d_1=76$ and $d_2=152$. }
\centering
\begin{tabular}{clcccccccccc}\hline
         &             Measurement  & \multicolumn{2}{c}{DC-SIS} & & \multicolumn{2}{c}{Bcor-SIS}  & & \multicolumn{2}{c}{PC-Screen}   && GIDS      \\
\cline{3-4} 
\cline{6-7} 
\cline{9-10} 
         &                  & $d_1$ & $d_2$ && $d_1$ & $d_2$  && $d_1$ & $d_2$   &&       \\\hline\hline
        Case (A) &  Sensitivity & 0.959 & 0.987 && 0.823 & 0.943 && 0.959 & $\mathbf{0.987}$ && 0.973 \\
        $p=1000$ & Precision   & 0.631 & 0.324 && 0.542 & 0.310 && 0.631 & 0.325 && $\mathbf{1.000}$\\
        $q=1500$ & F1 Score    & 0.761 & 0.489  && 0.653 & 0.467 && 0.761 & 0.489 && $\mathbf{0.986}$\\\hline\hline
        Case (B) & Sensitivity & 0.910 & 0.954 && 0.638 & 0.790 && 0.920 & 0.960 && $\mathbf{0.962}$\\
        $p=4000$ & Precision   & 0.599 & 0.314 && 0.420 & 0.260 && 0.605 & 0.316 && $\mathbf{1.000}$ \\
        $q=1500$ & F1 Score    & 0.722 & 0.472 && 0.507 & 0.391 && 0.730 & 0.475 && $\mathbf{0.980}$\\\hline\hline
        Case (C) & Sensitivity & 0.466 & 0.625 && 0.407 & 0.632 && 0.760 & 0.875 && $\mathbf{0.959}$ \\
        $p=1000$ & Precision   & 0.306 & 0.206 && 0.268 & 0.208 && 0.500 & 0.288 && $\mathbf{1.000}$ \\
        $q=5000$ & F1 Score    & 0.370 & 0.309 && 0.323 & 0.313 && 0.603 & 0.433 && $\mathbf{0.979}$ \\\hline\hline
        Case (D) & Sensitivity & 0.279 & 0.390 && 0.212 & 0.345 && 0.626 & 0.731 && $\mathbf{0.962}$ \\
        $p=4000$ & Precision   & 0.184 & 0.128 && 0.140 & 0.113 && 0.412 & 0.240 && $\mathbf{1.000}$ \\
        $q=5000$ & F1 Score    & 0.222 & 0.193 && 0.168 & 0.171 && 0.497 & 0.362 && $\mathbf{0.980}$ \\\hline
\end{tabular}

\label{table:1}
\end{table}

\section{Application to DNA-methylation and transcriptomics interaction analysis in the ADNI data set}
\label{sec:4}

In this section, we analyze joint data sets of Alzheimer's Disease Neuroimaging Initiative (ADNI), which provides rich multiomics datasets that enable integrative analysis of molecular mechanisms underlying Alzheimer's disease. In particular, genome-wide DNA methylation data measured using high-density Illumina EPIC arrays offer epigenetic profiles across over 865353 CpG sites, while transcriptomic data acquired through RNA sequencing capture gene expression levels for up to 49386 transcripts. Note that the covariance of these two data sets cannot be stored in most working stations. These two high-dimensional datasets, derived from partially overlapping cohorts with 312 individuals, support joint analysis of epigenetic regulation and transcriptional activity in the progression of Alzheimer's disease. By leveraging the complementary nature of these data types, researchers can identify coordinated patterns such as methylation-driven gene silencing or activation and improve biomarker discovery for early diagnosis and disease subtyping.

In this study, our objective is to systematically screen associated variables between DNA methylation and gene expression. For the Phase 1 screening of GIDS, we set the dimensions to $(p_{\text{phase1}},~q_{\text{phase1}})=(10000,~6000)$, corresponding to DNA methylation and gene expression, respectively. Prior to applying the greedy algorithm, we estimate the noise and signal distributions for the entire correlation matrix. The proportion of the signal distribution is estimated as $\widehat{\pi}_1 = 0.0006$, with mean $\widehat{\mu}_1 = 0.165$. Thresholds $\varepsilon_1$ and $\varepsilon_2$ are selected as 0.178 and 0.207, respectively. The greedy algorithm is then implemented, with 0.85 chosen as the optimal tuning parameter among $\{0.55,0.65,0.75,0.85\}$ based on the KL divergence \eqref{eq:normKL}. Figure~\ref{fig:3} displays the absolute sample correlation matrix for the intermediate set of screened variables in Phase 1. In Phase 2, the greedy algorithm extracts 7,313 and 1,087 variables, which are organized into four diagonal quasi-bicliques.

Figure \ref{fig:2} displays the histogram of absolute sample correlations for the entire datasets $X$ and $Y$, along with the KL-divergence curves across varying values of $\epsilon$ and $\lambda$. Based on the mixture distribution modeling, the proportion of the true signal distribution is estimated as $\hat{\pi}_1 = 0.0006$ with a mean of $\hat{\mu}_1 = 0.165$. The hard-thresholding level for variable reordering ($\epsilon_1$) and the significance cutoff for binary edges ($\epsilon_2$) are selected as $0.178$ and $0.207$, respectively. As shown in the optimization plots, the tuning parameter $\lambda$ is chosen to be $0.85$ based on the maximum KL-divergence.

Figure \ref{fig:3} shows that the absolute cross correlation matrix of variables screened in Phase 1. In the bottom left of plot, there are a block diagonal structure with seventeen subgraphs. The average absolute sample correlation of intermediate screened variables in Phase 1 is 0.098, which is notably higher than the overall average of 0.047 across all variables. Within the extracted subgraphs, the average absolute sample correlation of all subgraphs is 0.223. The average absolute correlations and sizes of subgraphs are given in Table \ref{table:2}. These elevated correlation levels suggest that the selected features exhibit stronger coordinated variation than would be expected by chance, indicating potentially meaningful biological associations. In particular, the quasi-biclique structures may correspond to distinct regulatory modules in which DNA methylation changes influence the expression of specific gene clusters. Identifying such modules could provide mechanistic insights into how epigenetic dysregulation contributes to Alzheimer’s pathology, and the detected patterns may serve as candidate biomarkers for disease staging or progression risk. Furthermore, the ability to isolate these high-correlation substructures in ultra-high-dimensional settings highlights the scalability and applicability of GIDS to other complex diseases and multiomics integration tasks.

\begin{figure}[h]
\includegraphics[width=0.45\textwidth]{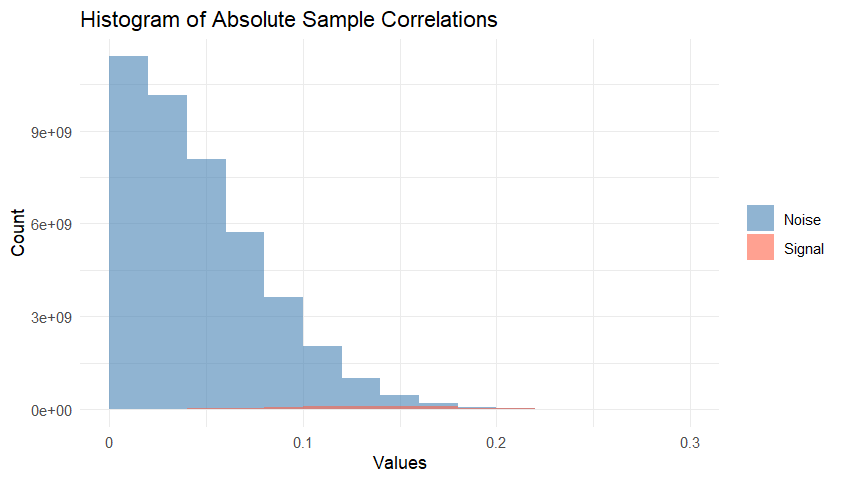}
\includegraphics[width=0.45\textwidth]{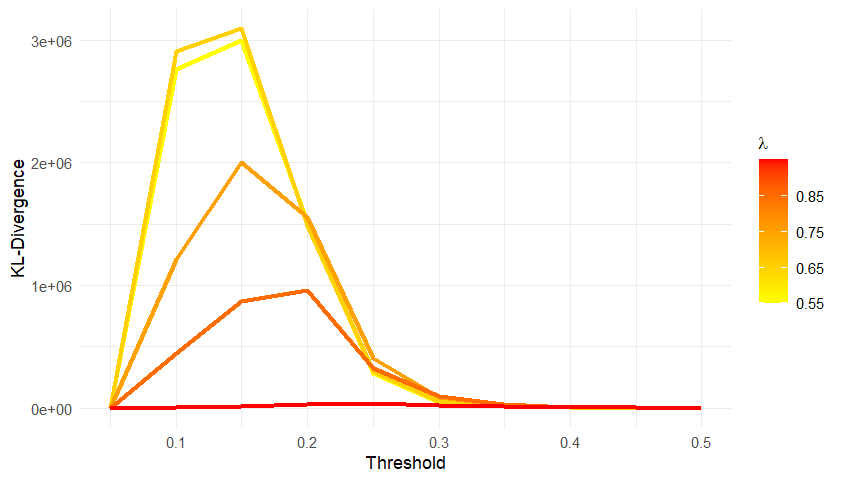}
\caption{The histgram of absolute correlations of the entire $X$ and $Y$ (left) and KL-divergence according the incremental threshold values $\epsilon$ and $\lambda$ (right) for ADNI data set. }
\label{fig:2}
\end{figure}

\begin{figure}[h]
\includegraphics[width=1\textwidth]{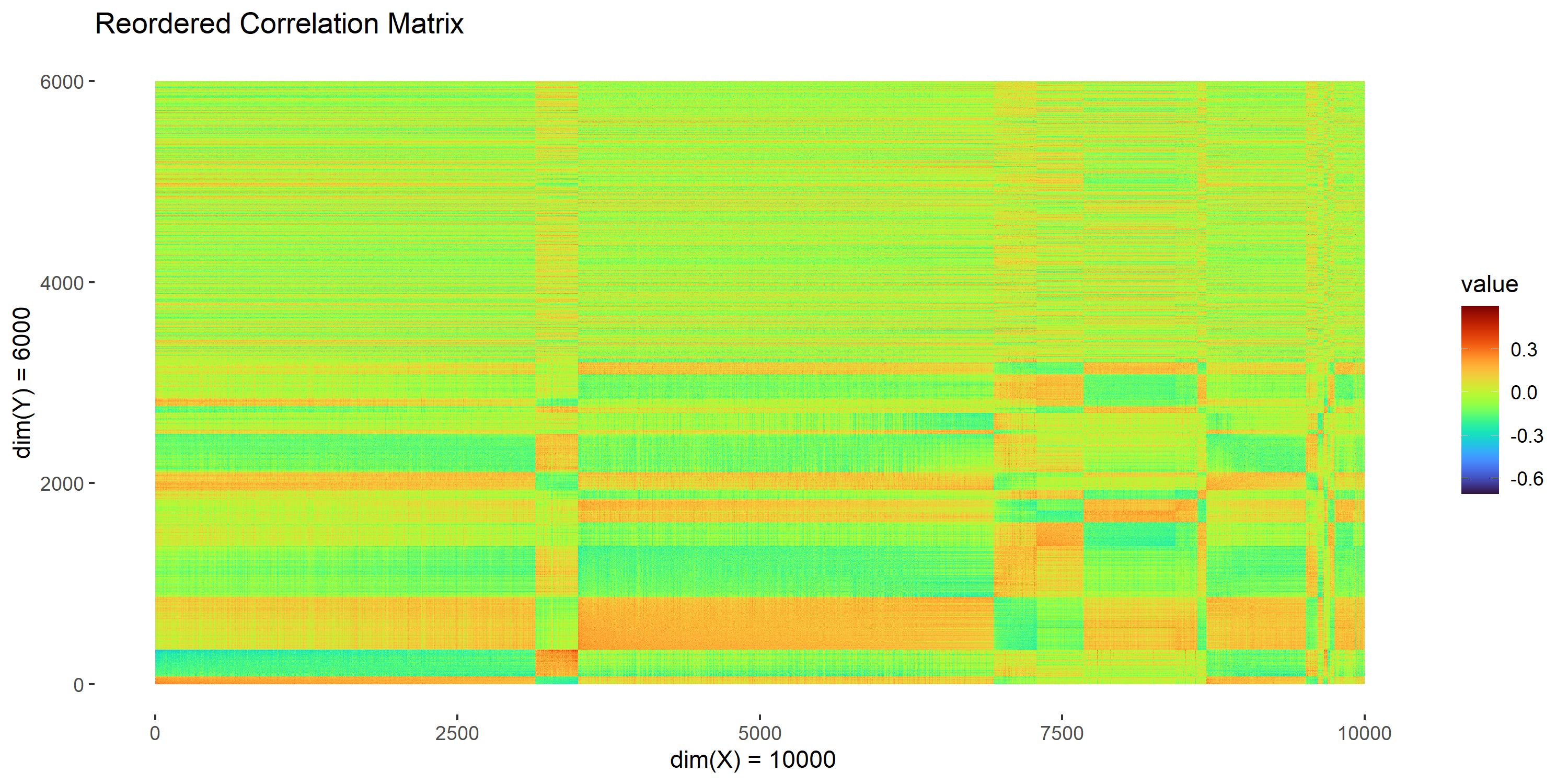}
\caption{The cross absolute correlation matrix of screened variables for Phase 1. Seventeen subgraphs are extracted in Phase 2, which are represented as block diagonal patterns. The average absolute correlation of the Phase 1 screened variables is 0.098, which is significantly higher than that of the entire variables (0.047). The average absolute correlations of extracted subgraphs are around 0.22 with the average of all subgraphs 0.223.}
\label{fig:3}
\end{figure}

\begin{table}[h]
    \caption{Profiles of seventeen extracted subgraphs from the ADNI data set.}
    \tiny
    \centering
    \setlength{\tabcolsep}{3pt} 
    \begin{tabular}{c|ccccccccccccccccc}
    \hline
        Subgraph No. & 1 & 2 & 3 & 4 & 5 & 6 & 7 & 8 & 9 & 10 & 11 & 12 & 13 & 14 & 15 & 16 & 17\\\hline
        Size of $X$ & 1883 & 1584 & 549 & 842 & 375 & 473 & 396 & 182 & 117 & 162 & 168 & 166 & 118 & 73 & 65 & 82 & 75 \\
        Size of $Y$ & 49 & 80 & 172 & 103 & 128 & 66 & 55 & 46 & 40 & 56 & 61 & 33 & 59 & 30 & 40 & 36 & 33 \\ 
        Avg-ab-correlation   & 0.267 & 0.226 & 0.210 & 0.215 & 0.207 & 0.216 & 0.206 & 0.203 & 0.209 & 0.199 & 0.200 & 0.200 & 0.198 & 0.208 & 0.198 & 0.195 & 0.197 \\\hline 
    \end{tabular}
    \label{table:2}
\end{table}

\section{Theoretical Properties of GIDS}
\label{sec:5}

\subsection{Assumptions}
\label{ssec:ass}
We introduce two assumptions underlying the theories in the GIDS  procedure. Our first assumption concerns the distributional properties of correlation coefficients in sample correlations $\{R_{ij}\}_{i\in[p],j\in[q]}$. Specifically, we assume that the sample correlation coefficients, given a sample size of $n$, satisfy a subgaussian concentration property, ensuring that they are sufficiently bounded with high probability \citep{boucheron2003concentration}. This assumption is formulated based on the fact that a sample correlation coefficient of two bivariate normal variables satisfies a subgaussian-type concentration inequality, as provided in Supplementary Materials \ref{ssec:ass1}.
\begin{assumption}[$n^{-1/2}$ concentration of sample correlation coefficients]
    Let $R_{ij}$ be the sample correlation coefficient of $X_i$ and $Y_j$ with size $n> 3$, which is generated from a bivariate distribution with a ground truth correlation coefficient $\rho_{ij} \in (-1,1)$. Then, for all $a>0$ and all $\rho_{ij} \in (-1,1)$, there are $s_1>0$ such that
    \begin{eqnarray*}
        P(|R_{ij}-\rho_{ij}| > a) \leq 2 \exp \left(-\frac{na^2}{2s_1^2}\right).
    \end{eqnarray*}
    \label{ass:sub}
\end{assumption}

Note that $R_{ij} - \rho_{ij}$ is not exactly centered. However, since $R_{ij}$ is a consistent estimator of $\rho_{ij}$, the potential bias can be addressed by adopting a slightly larger constant $s_1$. Consequently, Assumption \ref{ass:sub} is equivalent to requiring the existence of a constant $s_{12} > 0$ such that
\begin{eqnarray*}
\mathbb{E}\left[e^{a(R_{ij} - \rho_{ij})}\right] \leq 2 e^{s_{12}^2 a^2 / n}, \quad \text{for all } a > 0,
\end{eqnarray*}
in accordance with the equivalent characterizations of sub-Gaussian random variables \citep{vershynin2018high}. For convenience, we denote by $s_1$ the maximum of $s_{11}$ and $s_{12}$, where $s_{11}$ is defined as $s_{11} = \inf\left\{s > 0: \mathbb{P}(|R_{ij} - \rho_{ij}| > a) \leq 2 \exp\left(-n a^2/2 s^2\right)~\text{for~all}~a>0\right\}$.



Next, we study the behavior of the truncated absolute  sample correlation coefficient differences, which play a critical role in the row and column exclusion process of the greedy algorithm. To address the challenge of the spurious correlations caused by noise variables in ultra-high dimensional setups, we introduce the hard thresholding $\varepsilon$. More specifically, based on Assumption \ref{ass:sub}, the variance of truncated absolute sample correlation exponentially decreases in the sample size such that $$\text{Var}(|R_{ij}^{\varepsilon}|) \leq E[|R_{ij}^{\varepsilon}|^2|] \leq P(|R_{ij}^{\varepsilon}| > \varepsilon) \leq 2\exp(-n\varepsilon^2/2s_1^2).$$ 
This exponentially decreasing variance facilitates the screening in ultra-high dimensional setups, by lowering the error rate of sure screening property to $O(\exp(-n))$ from $O(\exp(-\sqrt{n})$.

At each round $t$, the algorithm selects rows and columns based on their respective aggregate scores among those in the active node sets $\widetilde{U}_{c,t}^{\lambda}$ (rows) and $\widetilde{V}_{c,t}^{\lambda}$ (columns). For the algorithm to make the correct exclusions at round $t$, it is required that

$$
\sum_{i \in \widetilde{U}_{c,t}} \left(|R_{ij}^\varepsilon| - |R_{ij'}^\varepsilon|\right) > 0
$$

for all $j \in I_Y \cap \widetilde{V}_{c,t}^{\lambda}$ and $j' \in V\backslash I_Y \cap \widetilde{V}_{c,t}^{\lambda}$, and similarly for the column sums. This left-hand side can be decomposed as

\begin{eqnarray}
\sum_{i \in \widetilde{U}_{c,t} \cap I_X} \left(|R_{ij}^\varepsilon| - |R_{ij'}^\varepsilon|\right) + \sum_{i \in \widetilde{U}_{c,t} \backslash I_X} \left(|R_{ij}^\varepsilon| - |R_{ij'}^\varepsilon|\right).\label{eq:twoterms}    
\end{eqnarray}
When $t = 1$, the first term above is likely to satisfy an anti-concentration inequality with high probability. There exists a positive constant $b$ such that
\begin{eqnarray*}
\sum_{i \in I_X} \left(|R_{ij}^\varepsilon| - |R_{ij'}^\varepsilon|\right) > b,    
\end{eqnarray*}
since all pairs $(i,j)$ are within a subgraph $B_c$ while all $(i,j')$ are outside $B_c$. To prevent incorrect exclusions (i.e., the premature exclusion of significant variables), the second term in \eqref{eq:twoterms} must remain sufficiently small, specifically less than $b$ across all $t$. To get a concentration of the second term in \eqref{eq:twoterms}, we utilize an upper bound for the moment generating function of a variable $|R_{ij}^\varepsilon| - |R_{ij'}^\varepsilon|$ such that
\begin{eqnarray*}
   E[ e^{a(|R_{ij}^\varepsilon| - |R_{ij'}^\varepsilon|)}  ] \leq 1 + 4\exp(-n\varepsilon^2/2s_1^2+|a|),
\end{eqnarray*}
for all $a>0$ and $Z_{ij}=Z_{i,j'}=0$.

Note that when the greedy algorithm decides to exclude a row or column, the aggregate scores of rows and columns are given. Given that the sample correlation coefficients $R_{ij}$ and $R_{ij'}$ are approximately independent of other sample correlation coefficients when their population correlations are all close to zero, we can account for the weak dependence between them. To correctly handle the variability in the second term under this weak dependence, it is essential to account for the information conditioned on the row aggregate scores, $\textbf{r.sum}$. Accordingly, we consider $s_2$ to quantify the extent of variance inflation, reflecting that the upper bound on the variance of sample correlations may increase to $(s_1^2+s_2^2)/n$ from $s_1^2/n$ due to this weak dependence. We construct the following assumption based on this variance inflation to ensure that errors of $|R_{ij}^\varepsilon|-|R_{ij'}^\varepsilon|$ given row sums are sufficiently concentrated.

\begin{assumption}
    Consider the exclusion process of the greedy algorithm at round $t$. If $i \in \widetilde{U}_{c,\tau}$ and $j,j' \in \widetilde{V}_{c,\tau}$ for all $\tau \leq t$ and $Z_{ij}=Z_{ij'}=0$, there is $s_2$ such that    
    \begin{eqnarray*}
        E[e^{a(|R_{ij}^\varepsilon|-|R_{ij'}^\varepsilon|)} |\sigma\{\{\textbf{r.sum}_{\tau'}\}_{\tau'=1}^{\tau}\}] \leq 1 + 4\exp(-n\varepsilon^2/2(s_1^2+s_2^2)+|a|)
    \end{eqnarray*}
    for all $\tau \leq t$ and all $a>0$.
    
    \label{ass:ass2}
\end{assumption} 

Similarly, we assume that $E[e^{a(|R_{i'j}^\varepsilon|-|R_{ij}^\varepsilon|)} |\sigma\{\{\textbf{c.sum}_{\tau'}\}_{\tau'=1}^{\tau}\}] \leq 1 + 4\exp(-n\varepsilon^2/2(s_1^2+s_2^2)+|a|)$ if $j \in \widetilde{V}_{c,\tau}$ and $i,i' \in \widetilde{U}_{c,\tau}$ for all $\tau \leq t$ and $Z_{ij}=Z_{ij'}=0$,  for all $\tau \leq t$ and all $a>0$. We show that the above assumption holds for the sample correlation coefficients of independent normal random variables in Supplementary Materials \ref{ssec:ass2}. 

\subsection{Theoretical Results}

In this subsection, we show the theoretical results of GIDS for a joint data set. First, we show our main result about the performance guarantee under the existence of only a single subgraph. Within the subgraph $B(U_1, V_1; E_1)$, the true absolute correlations $|\rho_{ij}|$ for edges $e_{ij} \in E_1$ are supposed to be $|\rho_{ij}|\geq \rho $. Here, we do not consider an assumption with a constant $\kappa$ such that $\rho \geq cn^{-\kappa}$ for $c>0$, which is a commonly chosen condition similar to Condition 3 in \cite{fan2008sure}, because our screening method does not need an assumption about the size of true positive variable sets. The result demonstrates that Phase 1 screening of GIDS can include all the true positive variables with a high probability, which is $1-O(\exp(-\Omega(n) ))$ for a sample size $n$, if the target screening size is bigger than the size of sets of true positive variables.

\begin{theorem}
    
    Under the presence of a single subgraph with the true absolute correlations lying between $[\rho,1]$ and $\delta_1=\delta_2=0$,  if the size of true submodel is less than $(p_{\text{phase1}},q_{\text{phase1}})$, a submodel $\widehat{\mathcal{M}}_{\text{phase1}}$ generated by the intermediate screended variable from Phase 1 by the greedy algorithm with the granularity $(k_1,k_2)=(1,1)$ includes the true subset of variables such that   
\begin{eqnarray*}
    \mathbb{P}( \mathcal{M}_\star \subset \widehat{\mathcal{M}}_{\text{phase1}} ) 
    >
    1-\delta(p,p_0,q,q_0,n)
\end{eqnarray*}
where $\delta(p,p_0,q,q_0,n) = O(\exp\left(-c_1p_0n + c_2p \right)+\exp\left(-c_1q_0n + c_2q \right))$ for some $c_1,~c_2>0$.
\label{thm:1}
\end{theorem}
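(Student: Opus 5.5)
Since the greedy procedure only ever removes the $k_1=k_2=1$ lowest-scoring row or column, $\mathcal{M}_\star\subset\widehat{\mathcal{M}}_{\text{phase1}}$ holds as soon as no true variable is ever the argmin. I would therefore bound the probability that at some round $t\le t'$ a true row $i\in I_X=U_1$ (respectively a true column $j\in I_Y=V_1$) has an aggregate score no larger than that of some noise row (column) still active. With $\delta_1=\delta_2=0$, every pair in $U_1\otimes V_1$ has $|\rho_{ij}|\ge\rho$ and every other pair has $\rho_{ij}=0$. Because $p_0<p_{\text{phase1}}$, a noise row always remains available while the algorithm is still removing rows, and similarly for columns. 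It therefore suffices to show that, simultaneously for all $t$, every true column beats every active noise column (and symmetrically for rows). I then take a union bound over the row and column events, which produces the two summands of $\delta$.

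For the columns, fix $t$, a true $j\in V_1$ and an active noise $j'$. I use the decomposition \eqref{eq:twoterms}. The \emph{signal term} is $\sum_{i\in\widetilde U_t\cap U_1}(|R^\varepsilon_{ij}|-|R^\varepsilon_{ij'}|)$. Under the induction hypothesis that no true row has been removed before $t$, this sum runs over all of $U_1$. By Assumption~\ref{ass:sub} with $\varepsilon<\rho/2$, each $|R^\varepsilon_{ij}|\ge\rho/2$ and each $|R^\varepsilon_{ij'}|=0$, except on events of probability $2e^{-n(\rho/2)^2/2s_1^2}$. A Chernoff bound on the number of such failures shows the signal term exceeds $b=p_0\rho/4$ with probability at least $1-e^{-c_1p_0n}$. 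Since both $j$ and $j'$ range over $V$, the union over pairs only costs a factor $e^{c_2q}$, or alternatively I can count failing entries collectively, with $2^{q}$ subset choices.

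The \emph{noise term} is $\sum_{i\in\widetilde U_t\setminus U_1}(|R^\varepsilon_{ij}|-|R^\varepsilon_{ij'}|)$, where every summand has $Z=0$. I control it with Assumption~\ref{ass:ass2}. Conditioning successively on $\sigma\{\mathbf{r.sum}_{\tau'}\}$, I apply the conditional MGF bound and chain the factors as in a martingale argument. This gives $E[e^{a\cdot\text{noise}}]\le(1+4e^{-n\varepsilon^2/2(s_1^2+s_2^2)+a})^{p}\le\exp(4p\,e^{a-n\varepsilon^2/2(s_1^2+s_2^2)})$. Markov's inequality at level $b=p_0\rho/4$ with $a\asymp n$ (taking $a=n\varepsilon^2/4(s_1^2+s_2^2)$) then gives $P(\text{noise}>b)\le\exp(-ab+4pe^{-a})$, which is $\exp(-c_1p_0n+c_2p)$ after absorbing constants. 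Here $c_2p$ is a crude bound that also absorbs the union over the unknown active set $\widetilde U_t$, which has at most $2^{p}$ possibilities. The row argument is identical with the roles of $p,q$ swapped, using the symmetric version of Assumption~\ref{ass:ass2}.

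Finally I run an induction over rounds: on the intersection of the good events, no true row or column is removed at any round, so the signal term keeps its full index set $U_1$ (or $V_1$) and the argument propagates. Taking the union over rounds, at most $p+q$ of them, only changes constants. The main obstacle is the noise term. The active set $\widetilde U_t$ depends on the data, so the summands are not independent of the selection. This is exactly why I would condition on the filtration generated by the $\mathbf{r.sum}$ and invoke Assumption~\ref{ass:ass2} rather than treating the summands as independent. If the chaining is delicate, I would fall back on a union bound over all subsets of noise rows, which is what yields the $e^{c_2p}$ and $e^{c_2q}$ factors in $\delta$.
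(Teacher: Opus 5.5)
Your overall architecture matches the paper's. You reduce to showing that every true row/column beats every active noise row/column at every round, split the comparison as in \eqref{eq:twoterms}, and control the noise part with Assumption~\ref{ass:ass2} through an MGF/supermartingale chain; this is Lemma~\ref{lem:4}, and your choice $a=n\varepsilon^2/4(s_1^2+s_2^2)$ only changes constants, replacing $p\log 5$ by $4pe^{-a}$. You then propagate by induction over rounds.

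The genuine gap is your treatment of the signal term. A Chernoff bound on the number of failing entries among $\{R_{ij}: i\in U_1\}$ requires those failure events to be independent, or at least to obey a product bound on joint failure probabilities. Assumption~\ref{ass:sub} is purely marginal, and these entries all share the sample of $Y_j$. The claimed rate is in fact false in general. If the $X_i$, $i\in U_1$, load on a common factor (the model allows this, and the $X_i$ cannot even be uncorrelated once $p_0\rho^2>1$), then all $p_0$ entries of a column deviate together with probability of order $e^{-cn}$, not $e^{-c_1p_0n}$.

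What works, and what the paper does in Lemmas~\ref{lem:2}--\ref{lem:3}, is a single union bound over all $pq$ entries with $a=\rho/\eta$. On one event of probability at least $1-2pq\exp(-n\rho^2/2\eta^2s_1^2)$, every in-block $|R^\varepsilon_{ij}|$ exceeds $(\eta-1)\rho/\eta$ and every off-block one is below $\rho/\eta$. The margins $q_0\rho(\eta-2)/\eta$ and $p_0\rho(\eta-2)/\eta$ then hold deterministically for all pairs and all rounds, with no union over pairs, rounds, or $2^q$ subsets. The price is an extra term $\exp(-n\rho^2/2\eta^2s_1^2+\log 2pq)$ in $\delta$. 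The paper carries this term explicitly and then absorbs it into the $O(\cdot)$ by declaring the two Assumption~\ref{ass:ass2} terms leading. Your argument has to carry it as well, rather than reaching the stated form through an $e^{-c_1p_0n}$ signal bound.

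On the noise term, chaining `over the currently active noise rows' is not adapted to the $\mathbf{r.sum}$ filtration as you describe it. Assumption~\ref{ass:ass2} only gives conditional MGFs given the algorithm's history, so the summands must be revealed in the order in which the algorithm removes rows. The same applies if you instead take a union over $2^p$ fixed subsets. The paper handles this by indexing the noise rows in exclusion order $u_1,u_2,\dots$. It writes the active sum at time $t$ as $\sum_{r\le p-p_0}-\sum_{r\le n_X(t)}$ of $|R^\varepsilon_{u_rj_1}|-|R^\varepsilon_{u_rj_2}|$. It then bounds every partial sum through the supermartingale $M_r^{i_1i_2}$ together with a $q^2$ union over column pairs. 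This gives the bound $p_0(\eta-2)\rho/\eta$ uniformly in $t$, which is then matched against the Lemma~\ref{lem:3} margin.
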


The next result demonstrates that the extracted subgraph in Phase 2 is idential to the true submodel with high probability under the presence of single subgraph.

\begin{theorem}
    
    Under the presence of a single subgraph with the true absolute correlations lying between $[\rho,1]$ for $\rho>0$ and $\delta_1=\delta_2=0$,  there is a set of $\lambda$ in $(0,1)$ to guarantee that the greedy algorithm for GIDS has
\begin{eqnarray*}
    \mathbb{P}( \mathcal{M}_\star = \widehat{\mathcal{M}}_\lambda ) 
    >
    1-\delta(p,p_0,q,q_0,n)
\end{eqnarray*}
where $\delta(p,p_0,q,q_0,n) = O(\exp\left(-c_1'p_0n + c_2'p \right)+\exp\left(-c_1'q_0n + c_2'q \right))$ for some $c_1',~c_2'>0$ and $\widehat{\mathcal{M}}_\lambda$ is the submodel of associated pairs screened by GIDS with the tuning parameter $\lambda$. 
\label{thm:2}
\end{theorem}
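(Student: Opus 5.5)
The proof splits into two events: the Phase 2 greedy path passes through the true rectangle $(U_1,V_1)$, and the $\lambda$-density profile along that path is maximized there for a suitable range of $\lambda$.

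\textbf{Step 1: the greedy path contains $(U_1,V_1)$.} Phase 2 is initialized at the Phase 1 output $(\widetilde U_{t'},\widetilde V_{t'})$. By Theorem~\ref{thm:1}, this output contains $U_1\otimes V_1$ with probability $1-O(\exp(-c_1p_0n+c_2p)+\exp(-c_1q_0n+c_2q))$.

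On that event I rerun the argument of Theorem~\ref{thm:1} inside Phase 2 with granularity $(1,1)$. At every round $t$, for $j\in V_1\cap\widetilde V_{c,t}$ and $j'\in\widetilde V_{c,t}\setminus V_1$, I decompose the column-sum difference as in \eqref{eq:twoterms}.

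The signal term $\sum_{i\in U_1}(|R^\varepsilon_{ij}|-|R^\varepsilon_{ij'}|)$ is bounded below by roughly $p_0(\rho-\varepsilon)/2$. This uses Assumption~\ref{ass:sub} and $\delta_1=\delta_2=0$, with $\varepsilon<\rho$ chosen, and fails with probability $\le 2p_0q_0e^{-n(\rho-\varepsilon)^2/8s_1^2}$.

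The noise term is a sum over $i\notin U_1$. I control it by a Chernoff bound using Assumption~\ref{ass:ass2}, conditioning on the filtration generated by the row sums, and then take a union bound over rounds and pairs. This gives failure probability $O(\exp(-c n p_0+c'p))$; symmetric arguments handle rows.

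Hence, with the stated probability, every noise row and noise column is removed before any true one. Consequently there is a time $t^\ast$ with $(\widetilde U_{1,t^\ast},\widetilde V_{1,t^\ast})=(U_1,V_1)$.

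\textbf{Step 2: $t^\ast$ is the argmax of $\mathrm{den}_\lambda$ along the path.} I would invoke Proposition~\ref{prop:2}, restricted to the nested path rectangles.

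\emph{Rectangles larger than $(U_1,V_1)$.} These add noise rows or columns. Their contribution is $W_A-W_0\le$ (number of noise cells)$\cdot\max|R^\varepsilon_{ij}|$ over null pairs. On the event $\max_{\text{null}}|R_{ij}|\le\varepsilon$, this contribution is zero, and that event has probability $\ge 1-2pq\,e^{-n\varepsilon^2/2s_1^2}$, which fits $\delta$. Then $W_A=W_0$ and $m_A>m_0$, so $\mathrm{den}_\lambda(A)<\mathrm{den}_\lambda(B_0)$ for every $\lambda>0$.

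\emph{Rectangles smaller than $(U_1,V_1)$.} On the event where all in-block $|R_{ij}|\in[\rho/2,1]$, we have $W_0\ge m_0\rho/2$ and $W_A\le m_A$. It then suffices that $m_0^{1-\lambda}\rho/2\ge m_A^{1-\lambda}$ for all $m_A\le m_0-\min(p_0,q_0)$.

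This inequality gives a lower bound $\lambda\ge\lambda_0$ for some $\lambda_0<1$ depending on $\rho$, $p_0$, $q_0$. A sharper version compares average densities, writing $W_A\le m_A\bar r_{\max}$, where $\bar r_{\max}$ is the largest empirical in-block correlation, and uses the concentration of $\bar r_{\max}$ near the in-block mean. Therefore $\Lambda(B_0)$, restricted to path rectangles, contains an interval $[\lambda_0,1)$, and it is nonempty.

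\textbf{Step 3: conclusion and main obstacle.} Once the extracted set equals $U_1\otimes V_1$, the edge calibration with $\delta_1=0$ and $\varepsilon^\star<\rho/2$ gives $\widehat{\mathcal M}_\lambda=\mathcal M_\star$. The stopping rule $h$ then rejects any further extraction from pure noise, again on the event that no null correlation exceeds $\varepsilon_2$. Collecting the failure probabilities yields $\delta$ of the stated form.

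The main obstacle is Step 2 for smaller rectangles. With heterogeneous correlations in $[\rho,1]$, a sub-rectangle can have higher average density. I would handle this by showing that the required $\lambda$-range is $[\lambda_0,1)$ with $\lambda_0=1-\log(2/\rho)/\log(m_0/(m_0-\min(p_0,q_0)))$, capped below $1$. If that capped expression is not below $1$, I would fall back on noting that the statement only asserts existence of \emph{some} $\lambda\in(0,1)$, and argue via Proposition~\ref{prop:2} that nonemptiness holds on the high-probability event.
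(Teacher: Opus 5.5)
Your Step~1 is sound, and so is your treatment of path rectangles that strictly contain $(U_1,V_1)$. The latter is actually cleaner than the paper's route. On the event that every null $|R_{ij}|$ is at most $\varepsilon$, the null truncated entries vanish, so $W_A=W_0$ with $m_A>m_0$, and $B_0$ wins for every $\lambda>0$. The paper instead bounds those entries by $\rho/\eta$ and obtains the lower bound \eqref{eq:lambda1}.

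The genuine gap is the sub-rectangle case in Step~2.
\begin{itemize}
\item For $m_A<m_0$, your sufficient condition $m_0^{1-\lambda}\rho/2\ge m_A^{1-\lambda}$ is equivalent to $(1-\lambda)\log(m_0/m_A)\ge\log(2/\rho)$. That is the \emph{upper} bound $\lambda\le 1-\log(2/\rho)/\log(m_0/m_A)$, not a lower bound. Smaller competitors always constrain $\lambda$ from above, as the proof of Proposition~\ref{prop:2} shows.
\item The binding competitor deletes a single row or column, so $m_A=m_0-\min(p_0,q_0)$ and $m_0/m_A=M/(M-1)$ with $M=\max(p_0,q_0)$. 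Since $\log(2/\rho)\ge\log 2\ge\log\frac{M}{M-1}$ for $M\ge 2$, your $\lambda_0$ is $\le 0$.
\item So your inequality certifies no $\lambda\in(0,1)$. The interval $[\lambda_0,1)$ is backwards, and read correctly it is empty.
\item The $\bar r_{\max}$ refinement does not rescue this. The ratio of the largest in-block entry to the block average would have to be below $(m_0/m_A)^{1-\lambda}\approx 1+(1-\lambda)/M$, which is hopeless with true correlations spread over $[\rho,1]$.
\item The fallback to Proposition~\ref{prop:2} is circular, because that proposition merely restates nonemptiness of $\Lambda(B_0)$ as the existence of a good $\lambda$.
\end{itemize}

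The missing idea, which is what the paper's \eqref{eq:lambda2} encodes, is that path sub-rectangles are nested in $B_0$. Hence $W_0=W_A+W(B_0\setminus A)$, and the deleted cells contribute in proportion to their number. Suppose that on the high-probability event all in-block truncated entries lie in $[\ell,u]$ with $\ell>0$; the paper uses $\ell=(\eta-1)\rho/\eta$ and $u=(\eta+w)\rho/\eta$. Then $W_A\le u\,m_A$ and $W_0-W_A\ge \ell\,(m_0-m_A)$. With $c=\ell/u$ and $x=m_0/m_A>1$,
\[
\frac{W_0}{W_A}\;\ge\;1-c+c\,x,
\qquad
\log(1-c+c\,x)\;\ge\;c\log x
\]
by concavity of $\log$. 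Hence every $\lambda<c$ makes $B_0$ strictly denser than every sub-rectangle on the path. Combined with your superset argument, the admissible set contains $(0,c)$, and no analogue of the paper's condition \eqref{eq:quad} is even needed. Bounding $W_A/m_A$ from above and $W_0/m_0$ from below separately, as you do, throws away exactly this nesting, which is why it fails.
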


The above result provides a performance guarantee of the greedy algorithm \citep{charikar2000greedy} with a sample size ($n=O(\max\{p/p_0,q/q_0\}$)) for the full recovery property under the assumptions, which are described in the previous subsection.

Next, we describe the statistical properties of the densest subgraph. After extracting each subgraph, we determine whether it is statistically significant or merely a spurious finding. This decision guides whether the algorithm proceeds to extract additional subgraphs. To formalize this process, we propose a statistical inference framework for testing the significance of each extracted dense subgraph. Specifically, we consider the following hypothesis test: 
\begin{eqnarray*}
&&H_0:\text{~the~extracted~dense~subgraph~is~not statistically~significant}\\
&&H_1:\text{~the~extracted~dense~subgraph~is~statistically~significant}
\end{eqnarray*}

We propose the following theorem as a graph-combinatorics-based procedure to determine the rejection region for the above hypothesis testing.

\begin{lemma}
    Let $B_c=(U_c,V_c;E_c)$ be a subgraph of a bipartite Erdős-Rényi $B(U, V; E)$ and binary edges with probability $\gamma_0$. Then, we have

$$
\mathbb{P}\left(\max_{U_c\subset U,V_c\subset V:|U_c|=p_c,|V_c|=q_c} 
 \frac{|E_c|}{p_cq_c}>\gamma\right) \leq \exp \left( - p_c q_c \left[ D(\gamma \| \gamma_0) - \frac{\log \left( \frac{pe}{p_c} \right)}{q_c} - \frac{\log \left( \frac{qe}{q_c} \right)}{p_c} \right] \right),
$$
where $D(a\|b) = a\log (a/b) +(1-a) \log ((1-a)/(1-b))$. 

    \label{lem:1}
\end{lemma}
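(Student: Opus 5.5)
The plan is a union bound over all candidate rectangles combined with a Chernoff bound for each fixed rectangle. Fix sizes $p_c\le p=|U|$ and $q_c\le q=|V|$. For a fixed pair $(U_c,V_c)$ with $|U_c|=p_c$ and $|V_c|=q_c$, the edge count $|E_c|$ is a sum of $p_cq_c$ independent Bernoulli$(\gamma_0)$ indicators, because in the bipartite Erdős–Rényi model the edges are i.i.d. Hence $|E_c|\sim\mathrm{Bin}(p_cq_c,\gamma_0)$. The event in the statement is the union over the $\binom{p}{p_c}\binom{q}{q_c}$ choices of $(U_c,V_c)$ of the events $\{|E_c|/(p_cq_c)>\gamma\}$. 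So
$$
\mathbb{P}\Big(\max \frac{|E_c|}{p_cq_c}>\gamma\Big)\le \binom{p}{p_c}\binom{q}{q_c}\,\mathbb{P}\big(\mathrm{Bin}(p_cq_c,\gamma_0)>\gamma p_cq_c\big).
$$

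\textbf{Step 1: single-rectangle tail.} For $\gamma\ge\gamma_0$ I would use the Chernoff–Hoeffding relative-entropy bound $\mathbb{P}(\mathrm{Bin}(m,\gamma_0)\ge \gamma m)\le \exp(-m D(\gamma\|\gamma_0))$. This follows from Markov's inequality applied to $e^{s|E_c|}$: the moment generating function factorizes over the independent edges as $(1-\gamma_0+\gamma_0e^s)^m$, and optimizing over $s>0$ gives the rate $D(\gamma\|\gamma_0)$. Here $m=p_cq_c$. If $\gamma<\gamma_0$, the right-hand side of the lemma is presumably at least $1$ or the bound is vacuous, and I would note that the statement is only of interest for $\gamma>\gamma_0$.

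\textbf{Step 2: count the rectangles.} I would use $\binom{p}{p_c}\le (pe/p_c)^{p_c}$ and $\binom{q}{q_c}\le (qe/q_c)^{q_c}$. Then
$$
\binom{p}{p_c}\binom{q}{q_c}\le \exp\big(p_c\log(pe/p_c)+q_c\log(qe/q_c)\big)=\exp\Big(p_cq_c\Big[\tfrac{\log(pe/p_c)}{q_c}+\tfrac{\log(qe/q_c)}{p_c}\Big]\Big).
$$
Multiplying this by the Chernoff bound from Step 1 gives exactly the stated exponent.

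The only delicate point is the strict versus non-strict inequality, together with the requirement $\gamma\ge\gamma_0$ for the Chernoff step. Since $\{>\gamma\}\subset\{\ge\gamma\}$, the bound for $\ge$ suffices. The rest is routine, and no earlier results of the paper are needed beyond the Erdős–Rényi independence assumption.
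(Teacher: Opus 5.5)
Your proposal is correct and is essentially the paper's proof: a Chernoff relative-entropy bound for a fixed rectangle, a union bound over the $\binom{p}{p_c}\binom{q}{q_c}$ rectangles, and $\binom{n}{k}\le (ne/k)^k$ to produce the stated exponent. One correction to your aside: for $\gamma<\gamma_0$ the bound is not vacuous but false, because $D(\gamma\|\gamma_0)>0$ there as well (take $p_c=p$ and $q_c=q$ large: the right side is tiny while the left side tends to $1$), so $\gamma\ge\gamma_0$ must be an explicit hypothesis, which the paper also uses only implicitly.
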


This lemma is a renovated version of Lemma 2.1 in the work of \cite{chen2024identifying}. The suggested bound for the confidence level is tighter than that of the existing one. In addition, the suggested one does not require the conditions about $|U_c|$, $|V_c|$, $p$, and $q$, which are required for the existing one.

\section{Discussion}
\label{sec:6}

We have developed a new graph-based screening tool - GIDS to decipher the systematic association between two sets of high-dimensional variables. Compared to other screening methods, GIDS seeks not only to screen responses variables as well as predictors but also to identify the latent patterns of correlated sets of variables, taking the concept of a bipartite graph into account. GIDS can better differentiate the correlated variable sets based on the effective thresholding step and yield better results with minimal memory constraints. 

We provide a computationally efficient solution to implement GIDS with the upper bound of complexity $O( p+q+g(p_{\text{phase1}} + q_{\text{phase1}})^2)$.  We also show that the greedy algorithm for GIDS guarantees the sure screening property and full recovery of true related and irrelevant variables for associations between two high-dimensional datasets with an error rate of $O(\exp(-\Omega(n)))$ under mild assumptions. The numerical studies validate our theoretical results by showing that GIDS outperforms conventional methods to more accurately reveal the true submodel.


\begin{acks}[Acknowledgments]
This research was funded by the National Institute on Drug Abuse of the National Institutes of Health under Award Number 1DP1DA04896801. 
\end{acks}

\section*{Availability and Implementation}

The R code that implements GIDS is available at \href{https://github.com/hjpark0820/GIDS}{https://github.com/hjpark0820/gCCA}. The data used in this work is available at \href{https://adni.loni.usc.edu}{https://adni.loni.usc.edu}.

\bibliographystyle{biom}
\bibliography{biomsample}

\section{Supplementary Materials}
\label{sec:supp}

\subsection{Proof of Proposition \ref{prop:1}}
Note that  $\mathbb{P}\big((i,j)\in \widehat{\mathcal{M}}_j \big)\geq \delta_2$ for all $(i,j)$. Then, for all $i$, we have
$\mathbb{P}\big(i \in \widehat{\mathcal{M}}_j ~\text{for~any}~j\big)\geq 1-(1-\delta_2)^q$. We apply this to all $i$ such that $\cap_{i=1}^p \mathbb{P}\big(i \in \widehat{\mathcal{M}}_j ~\text{for~any}~j\big)= \mathbb{P}\big(|\cup_{j=1}^q\widehat{\mathcal{M}}_j|=p\big)\geq (1-(1-\delta_2)^q)^p$. As $(1-x)^r \geq 1-xr$, for $0\leq x \leq 1$ and $r\geq 1$, we have $ \mathbb{P}\big(|\cup_{j=1}^q\widehat{\mathcal{M}}_j|=p\big)\geq 1-p(1-\delta_2)^q$.

\subsection{Proof of Proposition \ref{prop:2}}

\begin{proof}
For a fixed competitor $A=(S,T)$, the condition that $B_0$ has at least as large expected $\lambda$-density as $A$ is exactly
\[
\operatorname{den}_{\lambda}(U_0,V_0)\geq \operatorname{den}_{\lambda}(S,T)
\quad\Longleftrightarrow\quad
\frac{W_0}{m_0^{\lambda}}\geq \frac{W_A}{m_A^{\lambda}}.
\]
If $m_A=m_0$, this reduces to $W_0\geq W_A$. If $m_A\neq m_0$ and $W_A>0$, taking logarithms gives
\[
\log(W_0/W_A)\geq \lambda\log(m_0/m_A).
\]
When $m_A<m_0$, the denominator $\log(m_0/m_A)$ is positive, so the inequality is equivalent to
\[
\lambda\leq \frac{\log(W_0/W_A)}{\log(m_0/m_A)}.
\]
When $m_A>m_0$, the denominator $\log(m_0/m_A)$ is negative, so the inequality reverses and becomes
\[
\lambda\geq \frac{\log(W_0/W_A)}{\log(m_0/m_A)}.
\]
Intersecting these constraints over all nonempty competitors $A\neq B_0$ and with the admissible parameter range $[1/2,1)$ gives exactly $\Lambda(B_0)$. Hence $\Lambda(B_0)\neq\varnothing$ if and only if there exists at least one admissible value of $\lambda$ for which $B_0$ maximizes the expected $\lambda$-density. 

\end{proof}

\subsection{Thresholding for Screening and Inference Binary Edges}
\label{sub:thresholding}
To infer subgraphs composed of unobserved binary edges, we leverage the absolute sample correlations $|R_{ij}|$, where a higher value indicates stronger evidence for the presence of an edge $e_{ij}$. Specifically, the conditional probability $\mathbb{P}(Z_{ij} = 0 \mid |R_{ij}| > r)$ corresponds to the (global) false discovery rate (FDR), while $\mathbb{P}(Z_{ij} = 0 \mid |R_{ij}| = r)$ corresponds to the local false discovery rate (local FDR). Both quantities are increasing functions of $r$ for proper $g_0$ and $g_1$, reflecting that higher absolute correlations are more likely to correspond to true signal edges. However, both quantities rely on accurate esimation of $\pi_1=\mathbb{P}(Z_{ij}=1)$, $g_0$ and $g_1$, which are known to be unstable in the modern statistics. This estimation challenge limits the practical reliability of FDR-based procedures for subgraph inference in complex modern datasets.


Instead, we use two types of estimators for $Z_{ij}$ utilizing a threshold $\varepsilon$: i) $|R_{ij}^\varepsilon| = |R_{ij}| I(|R_{ij}| > \varepsilon)$ and ii) $I(|R_{ij}| > \varepsilon)$ for different purposes. First, the former is a hard thresholding, which is a continuous-valued estimator after filtering. It modifies the original signal but preserves quantitative information. The latter is a classification with a cutoff. It discards magnitude and reduces everything to \textit{significant} versus \textit{insignificant}. For both quantities, the way of choosing a cutoff $\varepsilon$ depends on applications. For examples, in compressed sensing, iterative hard thresholding algorithms use a cutoff to retain only the largest entries in each update step, thereby enforcing sparsity and enabling signal recovery from limited linear measurements \cite{donoho2006compressed}, while wavelet shrinkage with hard thresholding achieves near-optimal minimax rates in nonparametric function estimation \citep{donoho1994ideal}. In addition, in classification settings, cutoffs in classification are often chosen from ROC curves, using criteria such as maximizing Youden’s index, fixing a false positive rate, minimizing distance to the ROC’s upper-left corner \citep{youden1950index}, or maximizing the F1-score to balance precision and recall \cite{fawcett2006introduction}. These examples highlight how thresholding adapts to different goals, preserving quantitative structure in estimation versus making binary decisions in classification.

First, we consider $|R_{ij}^\varepsilon|$ for the greedy algorithm and subgraph extraction. The greedy algorithm sequentially excludes rows or columns by comparing the row and column sums. For example, we consider two predictors, $X_{i_1}$ and $X_{i_2}$: the sign of the difference between their aggregated absolute correlations with all responses, $\sum_{j=1}^q |R_{i_1j}| - \sum_{j=1}^q |R_{i_2j}|$, can be used as a measure to determine which predictor is more strongly associated with the responses. However, in high-dimensional settings, spurious correlations may inflate the variability of these aggregate scores, potentially leading to misleading conclusions about the relative importance of predictors. In contrast, using the aggregate score difference $\sum_{j=1}^q |R_{i_1j}^\varepsilon| - \sum_{j=1}^q |R_{i_2j}^\varepsilon|$ for an appropriately chosen $\varepsilon > 0$, helps to suppress the contribution of weak and noisy associations. This reduction in variance leads to a higher probability of correctly identifying more informative predictors.

We use $I(|R_{ij}|>\varepsilon)$ to infer whether $|R_{ij}|$ for an individual pair $(i,j)$ arises from a signal or noise distribution. This binary estimator is straightforwardly interepreted as a proxy for the unobserved edge such that $\widehat{Z}_{ij}=I(|R_{ij}|>\varepsilon)$. The choice of $\varepsilon$ involves a trade-off between sensitivity and precision, which is a common problem in statistics \citep{fawcett2006introduction}. To balance this trade-off, we treat $\varepsilon$ as a tuning parameter and explore its impact on the recovery of the underlying subgraph structure. Specifically, by applying the thresholding rule $I(|R_{ij}| > \varepsilon)$ across all predictor-response pairs, we obtain a binary matrix that serves as a proxy for the latent adjacency matrix ${Z_{ij}}$. This binary matrix can then be analyzed to identify structured patterns, such as quasi-biclique subgraphs, which reflect coordinated associations between subsets of predictors and responses. The use of hard thresholding not only improves robustness to noise but also facilitates the identification of block structures that are otherwise obscured by spurious correlations. How to choose an appropriate $\varepsilon$ for different tasks will be introduced in the following paragraphs.

We incorporate a thresholding procedure for the row and column exclusion. At each iteration, the optimal cutoff is desirably supposed to maximize the probability of excluding rows and columns from the true negative set rather than ones from the true positive set. Since the greedy algorithm makes exclusions based on the sample means of $|R_{ij}|$ across rows or columns, the first thresholding step is designed to increase the likelihood that the sample mean corresponding to a true positive exceeds that of a true negative. Specifically, we aim to maximize the right decision probability
\begin{eqnarray*}
    \mathbb{P}\left(\left. \sum_{j\in V} |R_{ij}^\varepsilon| >  \sum_{j\in V} |R_{i'j}^\varepsilon|\right| i \in I_X~\text{and}~i'\in U\backslash I_X \right),
\end{eqnarray*}
with respect to the cutoff $\varepsilon$. For analytical tractability, we assume $p = q$ and $|U_c| = |V_c|$, as relaxing this assumption introduces unnecessary complexity. We have the closed form of the solution of the above as follows:
\begin{eqnarray}
    \varepsilon_1 = \text{argmax}_\varepsilon  \frac{  (S_1(\varepsilon) \mu_1(\varepsilon)  - S_0(\varepsilon)\mu_0(\varepsilon))^2  }{(1  + 2\delta_2/\delta) \sigma_1(\varepsilon)^2S_1(\varepsilon) + (1+2(1-\delta_2)/\delta)\sigma_0(\varepsilon)^2S_0(\varepsilon)},\label{eq:ep1}
\end{eqnarray}
where $\delta = \sqrt{(\pi_1-\delta_2)(1-\delta_1-\delta_2)}$; $\pi_1$ is the proportion of true positive $\mathbb{P}(Z_{ij}=1)$; $\mu_k(\varepsilon) = \mathbb{E}[|R_{ij}|||R_{ij}|>\varepsilon,Z_{ij}=k]$ for $k=0,1$; $\sigma_k(\varepsilon)^2= \text{Var}[|R_{ij}|||R_{ij}|>\varepsilon,Z_{ij}=k]$ for $k=0,1$; $S_k(\varepsilon) = \mathbb{P}(|R_{ij}|>\varepsilon|Z_{ij}=k)$  for $k=0,1$. In this work, we fix the threshold $\varepsilon_1$ throughout the exclusion process, as dynamically thresholding would require significantly more computation and memory.  

We now describe how to choose the cutoff of binary edges for the stopping criterion of extraction procedure by testing the significance of the $c$th extracted subgraph.  Under the null hypothesis (i.e. there are no significant associations between $X$ and $Y$), the full graph with binary edges $\{I(|R_{ij}|>\varepsilon)\}_{i,j}$ can be modeled as an Erdős–Rényi model regardless of the value of $\varepsilon$. However, selecting an appropriate threshold remains critical, as it governs the trade-off between sensitivity and precision in identifying signal edges. To optimally balance these, the threshold is selected by maximizing the F1 score such that
\begin{eqnarray}
    \varepsilon_2 = \text{argmax}_{\varepsilon\in(0,1)} \frac{2\cdot\text{precision}(\varepsilon)\cdot \text{sensitivity}(\varepsilon)}{\text{sensitivity}(\varepsilon) + \text{precision}(\varepsilon) },\label{eq:ep_bin}
\end{eqnarray}
where $\text{sensitivity}(\varepsilon)= S_1(\varepsilon)$ and $\text{precision}(\varepsilon) = \pi_1S_1(\varepsilon)/(\pi_0S_0(\varepsilon) + \pi_1 S_1(\varepsilon))$. 


In practice, as estimating $\pi_1$, $S_0$, and $S_1$ becomes computationally burdensome in ultra-high-dimensional settings, we use the discretized data (histogram-valued data) to estimate the quantities to facilitate the calculation of likelihood function. This data discretization enables fast calculations and alleviates the memory problem without significantly hindering the estimation accuracy \citep{kang2023classification}.  We employ the EM algorithm to estimate these quantities with a Fisher's Z-transformed absolute sample correlation $ 0.5\log (1+|R_{ij}|)/(1-|R_{ij}|)$, which is supposed to follow $trN(0,\sigma_0^2,a=0,b=\infty)$ and $trN(\mu_1,\sigma_1^2,a=0,b=\infty)$ for $\mu_1>0$ under $Z_{ij}=0$ and $Z_{ij}=1$, respectively, where $trN$ is a truncated normal distribution.



\subsection{Cuttoff Derivation for Hard Thresholding}

First, we describe the assumptions for the cutoff calculation. For analytical tractability, we assume $p = q=d$ and $p_0=|I_X|= q_0=|I_Y|$ under the presence of a single subgraph. Note that the proportion of signal distribution, $\pi_1$, the parameters of signal and noise distributions are calculated based on the gaussian EM algorithm\textemdash$S_k(\varepsilon)$, $\mu_k(\varepsilon)$, and $\sigma_k(\varepsilon)$ for $k=0,1$ are available for all values $\varepsilon \in \bm{\varepsilon}$.

Note that we maximize the following right decision probability
\begin{eqnarray*}
    \mathbb{P}\left(\left. \sum_{j\in V} |R_{ij}^\varepsilon| >  \sum_{j\in V} |R_{i'j}^\varepsilon|\right| i \in I_X~\text{and}~i'\in U\backslash I_X \right),
\end{eqnarray*}
representing the probability that the aggregated score of a row in the subset $I_X$ ($i\in I_X$) is larger than that of a row not in $I_X$ ($i'\notin I_X$) and thereby excluding the row not in $I_X$ at the begining of exclusion process. In this work, we do not consider a dynamic thresholding due to the calculation burden. For $|R_{ij}|$ such that $Z_{ij}=k$, we assume that the absolute correlation
\begin{eqnarray*}
    |R_{ij}| | (|R_{ij}|> \varepsilon) = \mu_k(\varepsilon) + \eta_{ij}(\varepsilon),
\end{eqnarray*}
where $\mu_k(\varepsilon) = E[|R_{ij}| | |R_{ij}| > \varepsilon]$ and $\eta_{ij}(\varepsilon)$ is a mean zero noise such that $\sigma_k(\varepsilon)^2 = \text{Var}(\eta_{ij}(\varepsilon))$.

\begin{flalign*}
&P\left(\sum_{k\in I_Y} |R_{ik}^\varepsilon| + \sum_{k\in [q]\backslash I_Y} |R_{ik}^\varepsilon|  \geq \sum_{k\in I_Y} |R_{i'k}^\varepsilon|  + \sum_{k\in [q]\backslash I_Y} |R_{i'k}^\varepsilon| \right) \\
&= P\left(\sum_{k\in I_Y} (\mu_{Z_{ik}}(\varepsilon)+\eta_{ik}(\varepsilon)) I(|R_{ik}|>\varepsilon) - \sum_{k\in I_Y} (\mu_{Z_{i'k}}(\varepsilon)+\eta_{i'k}(\varepsilon))I(|R_{i'k}|>\varepsilon) \right.\\
&+ \left.\sum_{k\in [q]\backslash I_Y}(\mu_{Z_{ik}}(\varepsilon)+\eta_{ik}(\varepsilon)) I(|R_{ik}|>\varepsilon) - \sum_{k\in [q]\backslash I_Y} (\mu_{Z_{i'k}}(\varepsilon)+\eta_{i'k}(\varepsilon))I(|R_{i'k}|>\varepsilon) \geq 0\right).
\end{flalign*}

Here, $Z_{ik}$ is a latent random variable such that $P(Z_{ik} = 1)=1-\delta_1$ for $i \in I_X~\text{and}~k\in I_Y$, $P(Z_{ik} = 1)=\delta_2$, otherwise. Let $n_1= \sum_{k\in [q]} Z_{ik}$ and $n_2= \sum_{k\in [q]} Z_{i'k}$. Then, the probability in the RHS above can be written as
\begin{eqnarray*}
    P\left( (n_1 -n_2)\mu_1(\varepsilon) + (n_2-n_1) \mu_0(\varepsilon) \geq \sum_{k\in [q]}\left( \eta_{ik}(\varepsilon)I(|R_{ik}|>\varepsilon) + \eta_{i'k}(\varepsilon) I(|R_{i'k}|>\varepsilon)\right)\right).
\end{eqnarray*}
Then, we find the variance of the noise terms on the RHS such that
\begin{flalign*}
&\text{Var}\left(\sum_{k\in [q]}\left( \eta_{ik}(\varepsilon)I(|R_{ik}|>\varepsilon) + \eta_{i'k}(\varepsilon) I(|R_{i'k}|>\varepsilon)\right)\right)\\
&=\sum_{k\in [q]}\left(E[\text{Var}(\eta_{ik}(\varepsilon)I(|R_{ik}|>\varepsilon)|I(|R_{ik}|>\varepsilon))]+\text{Var}(E[\eta_{ik}(\varepsilon)I(|R_{ik}|>\varepsilon)|I(|R_{ik}|>\varepsilon)])\right.\\
&+\left.E[\text{Var}(\eta_{i'k}(\varepsilon)I(|R_{i'k}|>\varepsilon)|I(|R_{i'k}|>\varepsilon))]+\text{Var}(E[\eta_{i'k}(\varepsilon)I(|R_{i'k}|>\varepsilon)|I(|R_{i'k}|>\varepsilon)])\right)\\
&= (p_0(1-\delta_1)+(2d-p_0)\delta_2) \sigma_1(\varepsilon)^2S_1(\varepsilon) + (p_0\delta_1+(2d-p_0)(1-\delta_2))\sigma_0(\varepsilon)^2S_0(\varepsilon)
\end{flalign*}
Using the above variance, we standardize the noise variables such that

\begin{flalign*}
&P\left( (n_1 -n_2)\mu_1(\varepsilon) + (n_2-n_1) \mu_0(\varepsilon) \geq \sum_{k\in [q]}\left( \eta_{ik}I(|R_{ik}|>\varepsilon) + \eta_{i'k}I(|R_{i'k}|>\varepsilon)\right)\right) \\
&= P\left( \frac{(n_1 -n_2)\mu_1(\varepsilon) + (n_2 -n_1) \mu_0(\varepsilon)}{\sqrt{(p_0(1-\delta_1)+(2d-p_0)\delta_2) \sigma_1(\varepsilon)^2S_1(\varepsilon) + (p_0\delta_1+(2d-p_0)(1-\delta_2))\sigma_0(\varepsilon)^2S_0(\varepsilon)} }\right.\\
&\left.\geq \frac{\sum_{k\in [q]}( \eta_{ik}I(|R_{ik}|>\varepsilon) + \eta_{i'k}I(|R_{i'k}|>\varepsilon)) }{\sqrt{(p_0(1-\delta_1)+(2d-p_0)\delta_2) \sigma_1(\varepsilon)^2S_1(\varepsilon) + (p_0\delta_1+(2d-p_0)(1-\delta_2))\sigma_0(\varepsilon)^2S_0(\varepsilon)} }\right).
\end{flalign*}
As the term in the RHS has mean 0 and variance 1, we can maximize the probability by maximizing the term in the LHS. Because $n_1$ and $n_2$ are random, we maximize the average case of the LHS by taking the expected values of them, which is
$$p_0(1-\delta_1-\delta_2)S_1(\varepsilon) \mu_1(\varepsilon)  - p_0(1-\delta_1-\delta_2)S_0(\varepsilon)\mu_0(\varepsilon).$$
Accordingly, we find $\varepsilon$ maximizing
\begin{eqnarray*}
      \frac{ p_0(1-\delta_1-\delta_2)S_1(\varepsilon) \mu_1(\varepsilon)  - p_0(1-\delta_1-\delta_2)S_0(\varepsilon)\mu_0(\varepsilon) }{\sqrt{(p_0(1-\delta_1)+(2d-p_0)\delta_2) \sigma_1(\varepsilon)^2S_1(\varepsilon) + (p_0\delta_1+(2d-p_0)(1-\delta_2))\sigma_0(\varepsilon)^2S_0(\varepsilon)}}.
\end{eqnarray*}
Note that $p_0=d\sqrt{\frac{\pi_1-\delta_2}{1-\delta_1-\delta_2}}$. Therefore, we have
\begin{eqnarray*}
    \varepsilon_1 = \text{argmax}_\varepsilon  \frac{  (S_1(\varepsilon) \mu_1(\varepsilon)  - S_0(\varepsilon)\mu_0(\varepsilon))^2  }{(1  + 2\delta_2/\delta) \sigma_1(\varepsilon)^2S_1(\varepsilon) + (1+2(1-\delta_2)/\delta)\sigma_0(\varepsilon)^2S_0(\varepsilon)},
\end{eqnarray*}
where $\delta = \sqrt{(\pi_1-\delta_2)(1-\delta_1-\delta_2)}$.

\subsection{Best Subgraph Extraction based on $\lambda$}
\label{sub:lambda}

The tuning parameter $\lambda$ has a significant effect on the extraction of subsets. Specifically, larger values of $\lambda$ tend to yield smaller and denser subsets, whereas smaller values typically result in larger, sparser selections. The KL divergence is often used to identify the best-fitting model among candidates. The KL divergence can be expressed as the difference between the cross-entropy of two distributions (target and reference) and the entropy of the target distribution. However, this measure may disproportionately emphasize the cross-entropy term, leading to inflated divergence values. To mitigate this issue, we propose the KL divergence. 

To describe the procedure, consider two distributions $P_{\lambda,\varepsilon}$ (target) and $Q_{\varepsilon}$ (reference) of $D_{ij} = I(|R_{ij}|>\varepsilon)$, which is a point estimate of binary edge $Z_{ij}$. $P_{\lambda,\varepsilon}$ is a distribution with subgraphs extracted based on the tuning parameter $\lambda$. $P_{\lambda,\varepsilon}$ divides the correlation matrix into two distinct blocks: (i) the subgraphs $B^\lambda$ extracted by the tuning parameter $\lambda$, and (ii) the area outside the subgraphs. In block (i), $D_{ij}$ is more likely to be 1, whereas it is more likely to be 0 in block (ii). Specifically, $\{D_{ij}\}_{i,j}$ are assumed to have the following distribution:
\begin{eqnarray*}
    D_{ij} \sim \text{Bernoulli}(\pi_1)&~&\text{if}~(i,j)\in B_c~\text{for~some}~c \in [C]\\
    D_{ij} \sim \text{Bernoulli}(\pi_0)&~&\text{otherwise}.
\end{eqnarray*}
In contrast, we consider a reference Bernoulli distribution $Q$ with no graph patterns between $X$ and $Y$ as follows:
\begin{eqnarray*}
    D_{ij} \sim \text{Bernoulli}(\pi)~\text{for all}~(i,j).
\end{eqnarray*}
We evaluate the KL divergence between these two distributions $D_{\mathrm{KL}}\left(P_{\lambda,\varepsilon} \| Q_{\varepsilon}\right)/H( P_{\lambda,\varepsilon}  )$ over the values in the grids of $\bm{\lambda}=(\lambda_1,\dots,\lambda_{g_1})$ and $\bm{\varepsilon} = (\varepsilon_1,\dots,\varepsilon_{g_2})$, where 
\begin{eqnarray*}
D_{\mathrm{KL}}\left(P_{\lambda,\varepsilon} \| Q_{\varepsilon}\right) &=& 
 \sum_{(i, j) \in \cup_{c=1}^{C} (U_c\bigotimes V_c)}\left( \pi_1 \log \frac{\pi_1}{\pi}+\left(1-\pi_1\right) \log \frac{\left(1-\pi_1\right)}{(1-\pi)}\right) \\
&+&  \sum_{(i, j) \notin \cup_{c=1}^{C} (U_c\bigotimes V_c)}\left( \pi_0 \log \frac{\pi_0}{\pi}+\left(1-\pi_0\right) \log \frac{\left(1-\pi_0\right)}{(1-\pi)}\right),\\
H( P_{\lambda,\varepsilon}  ) &=& \sum_{(i, j) \in \cup_{c=1}^{C} (U_c\bigotimes V_c)}\left( \pi_1 \log \frac{1}{\pi_1}+\left(1-\pi_1\right) \log \left(\frac{1}{1-\pi_1}\right)\right) \\
&+&  \sum_{(i, j) \notin \cup_{c=1}^{C} (U_c\bigotimes V_c)}\left( \pi_0 \log \frac{1}{\pi_0}+\left(1-\pi_0\right) \log \left(\frac{1}{1-\pi_0}\right)\right)
\end{eqnarray*}
where $\pi_0$ and $\pi_1$ are the sample mean of $D_{ij}$ outside and in the subgraphs, respectively; $\pi$ is the overall sample mean of $D_{ij}$. Then, the optimal tuning parameter is chosen so as to maximize the KL divergence as follows:
\begin{eqnarray*}
    \lambda^\star =  \text{argmax}_{\lambda}\text{max}_{\varepsilon} D_{\mathrm{KL}}\left(P_{\lambda,\varepsilon} \| Q_\varepsilon \right).
\end{eqnarray*}
Lastly, we get the estimates of the subgraphs $\widehat{B} = \widehat{B}^{\lambda^\star}$ and true positive sets $\widehat{I}_X = \cup_{c=1}^{C} \widehat{U}_c$ and $\widehat{I}_Y = \cup_{c=1}^{C} \widehat{V}_c$. The computational complexity of the greedy algorithm is $O((p+q)+g_1(p_{\text{phase1}}+q_{\text{phase1}}))$ when a single subgraph is present, while it increases with the number of subgraphs, reaching the worst-case complexity of $O((p+q)+g_1(p_{\text{phase1}}+q_{\text{phase1}})^2)$.

\subsection{Proof of the equivalence of Assumption 1 to $E[e^{\lambda (R_{ij}-\theta_{ij})}] \leq s_3 e^{s_4^2\lambda^2/n}$}
\label{ssec:ass11}
Let $X=\sqrt{n}(R_{ij}-\theta_{ij})/(\sqrt{2}s_2)$ for ease of presentation. Then, we have
\begin{eqnarray*}
    E\left[\left|X\right|^p\right] &=& \int_{0}^\infty P(|X|^p \geq u) du\\
    &=& \int_{0}^\infty P(|X| \geq t)pt^{p-1} dt\\
    &\leq& \int_{0}^\infty s_1 e^{-t^2} pt^{p-1} dt.
\end{eqnarray*}
By letting $t^2=s$ and then using the definition of the Gamma function, we have 
\begin{eqnarray*}
\int_{0}^\infty s_1 e^{-t^2} pt^{p-1} dt  &=& (s_1/2)p\Gamma(p/2).
\end{eqnarray*}
Then, with the Stirling approximation $\Gamma(x) \leq x^x$, we get
\begin{eqnarray*}
    E\left[\left|X\right|^p\right]\leq (s_1/2)(p/2)^{p/2}.
\end{eqnarray*}
Using $Ee^{\lambda^2 X^2} = E(1+\sum_{p=1}^\infty \frac{(\lambda^2 X^2)^p}{p!} ) \leq 1+\sum_{p=1}^\infty \frac{\lambda^{2p}E[X^{2p}]}{p!}$ and $p! \geq
(p/e)^p$, we have
\begin{eqnarray*}
    Ee^{\lambda^2X^2} \leq  1+ \sum_{p=1}^\infty \frac{(s_1/2)(2\lambda^2 p)^p}{(p/2)^p} \leq \frac{s_1}{2} \sum_{p=0}^\infty (2e\lambda^2)^p = \frac{s_1}{2} \cdot \frac{1}{1-2e \lambda^2},
\end{eqnarray*}
provided that $2e\lambda^2 < 1$. Using $1/(1-x)\leq e^{2x}$ for $x\in [0,1/2]$, we have
\begin{eqnarray*}
    Ee^{\lambda^2X^2} \leq \frac{s_1}{2} e^{4e\lambda^2},~|\lambda|\leq \frac{1}{\sqrt{2e}}.
\end{eqnarray*}

Now, we focus on $Ee^{\lambda X}$. For $|\lambda| \leq 1$, using $e^x \leq x+ e^{x^2}$ and $E[X]=0$, we have
\begin{eqnarray*}
Ee^{\lambda X} \leq E(\lambda X + e^{\lambda^2 X^2}) = Ee^{\lambda^2 X^2} \leq \frac{s_1}{2} e^{4e\lambda^2}.
\end{eqnarray*}

Next, we prove the case with $|\lambda|\geq 1$. Using $2\lambda x \leq \lambda^2 +x^2$, we have
\begin{eqnarray*}
    E e^{\lambda X} \leq e^{\lambda^2/2} Ee^{X^2/2} \leq (s_1/2) e^{\lambda^2/2} e^{2e} \leq (s_1/2) e^{2e} e^{\lambda^2},~~~\text{for}~|\lambda|\geq 1.
\end{eqnarray*}

Putting the two cases $|\lambda|\leq 1$ and $|\lambda|\geq 1$ together, for $\lambda\in \mathbb{R}^1$,  we have
\begin{eqnarray*}
     E e^{\lambda X} \leq (s_1/2)e^{2e} e^{4e\lambda^2}.
\end{eqnarray*}

Letting $X=\sqrt{n}(R_{ij}-\theta_{ij})/(\sqrt{2}s_2)$, we have
\begin{eqnarray*}
     E e^{\lambda (R_{ij}-\theta_{ij})} \leq (s_1/2)e^{2e} e^{2es_1^2 \lambda^2/n}.
\end{eqnarray*}

Thus, by defining $s_3=(s_1/2)e^{2e}$ and $s_4^2 = 2e s_1^2$, we have
\begin{eqnarray}
     E e^{\lambda (R_{ij}-\theta_{ij})} \leq s_3 e^{s_4^2 \lambda^2/n}. \label{eq:ass2}
\end{eqnarray}

Now, we show that there exist $s_1,s_2>0$ such that $P(|R_{ij}-\theta_{ij}|>t) \leq s_1\exp(-\frac{t^2}{2s_1^2})$, if \eqref{eq:ass2} is true. Note that
\begin{eqnarray*}
    P(R_{ij}-\theta_{ij}\geq t) = P(e^{\lambda (R_{ij}-\theta_{ij})} \geq e^{\lambda t}) \leq e^{-\lambda t}Ee^{\lambda (R_{ij}-\theta_{ij})} \leq e^{-\lambda t}s_3e^{s_4^2 \lambda^2 }= s_3e^{-\lambda t + s_4^2 \lambda^2}.
\end{eqnarray*}
Thus, by letting $\lambda=t/2s_4^2$, we have
\begin{eqnarray*}
    P(R_{ij}-\theta_{ij} \geq t)\leq s_3 e^{-t^2/4s_4^2}.
\end{eqnarray*}
We can get the same result for $P(R_{ij}-\theta_{ij} < -t)$ for $t<0$ with the same logic. Therefore, we have
\begin{eqnarray*}
    P(|R_{ij}-\theta_{ij}| \geq t)\leq 2s_3 e^{-nt^2/4s_4^2}.
\end{eqnarray*}

\subsection{Proof of the statement in Assumption 1 for Gaussian distributions}
\label{ssec:ass1}
Let $f_n(r|\rho)$ be the density for a sample correlation $R$ of two normally distributed random variables with correlation coefficient $\rho$ and sample size $n>3$. In the work of \cite{hotelling1953new}, it is written as follows:
\begin{eqnarray*}
    f_n(r|\rho) = \frac{n-1}{\sqrt{2\pi}} \frac{\Gamma(n)}{\Gamma(n+\frac{1}{2})} (1-\rho^2)^{\frac{n}{2}}(1-r^2)^{\frac{n-3}{2}}(1-\rho r)^{-n+\frac{1}{2}} \prescript{~}{2}F_1\left(\frac{1}{2},\frac{1}{2},n+\frac{1}{2},\frac{1+r\rho}{2}\right),
\end{eqnarray*}
where $\prescript{}{2}F_1$ is the hypergeometric function such that
\begin{eqnarray*}
\prescript{}{2}F_1(a,b,c,x) =  1+ \frac{ab}{c}x + \frac{a(a+1)b(b+1)}{2!c(c+1)}x^2 + \dots,  
\end{eqnarray*}
which is convergent for $x\in(-1,1)$. Without loss of generality, we can consider $0\leq \rho<1$ as the argument for the case $\rho < 0$ is symmetric for the case $\rho >0$. For $|\rho|=1$, the upper bound is trivial. To find an upper bound of $P(|R-\rho|>\epsilon)$ for $\epsilon\geq 0$, we consider three mutually exclusive cases: (i) $R-\rho>\epsilon$, (ii) $-(R-\rho)<\epsilon$ and $0<\epsilon<\rho$, and (iii) $-(R-\rho)<\epsilon$ and $\rho<\epsilon$. First, we evaluate $P(R-\rho>\epsilon)$ for case (i). 
\begin{eqnarray*}
    &&P(R-\rho>\epsilon) \\
    &=& \int_{\rho+\epsilon}^1 \frac{n-1}{\sqrt{2\pi}} \frac{\Gamma(n)}{\Gamma(n+\frac{1}{2})} (1-\rho^2)^{\frac{n}{2}}(1-r^2)^{\frac{n-3}{2}}(1-\rho r)^{-n+\frac{1}{2}}\prescript{}{2}F_1\left(\frac{1}{2},\frac{1}{2},n+\frac{1}{2},\frac{1+r\rho}{2}\right) dr\\
    &\leq & M_1\frac{n-1}{\sqrt{2\pi}} \frac{\Gamma(n)}{\Gamma(n+\frac{1}{2})}  \int_{\rho+\epsilon}^1 (1-\rho^2)^{\frac{n}{2}}(1-r^2)^{\frac{n-3}{2}}(1-\rho r)^{-n+\frac{1}{2}} dr
\end{eqnarray*}
where $M_1 = \sup_{r\in[-1,1]}\prescript{}{2}F_1(1/2,1/2,n+1/2,\frac{1+r\rho}{2})< \infty$. Note that $\prescript{}{2}F_1\left(\frac{1}{2},\frac{1}{2},n+\frac{1}{2},\frac{1+r\rho}{2}\right) $ is positive as $0<\frac{1+r\rho}{2} < 1$ and $\prescript{}{2}F_1\left(\frac{1}{2},\frac{1}{2},n+\frac{1}{2},\frac{1+r\rho}{2}\right) \leq \prescript{}{2}F_1\left(\frac{1}{2},\frac{1}{2},\frac{3}{2},\frac{1+r\rho}{2}\right) = \text{arcsin}\left(\sqrt{\frac{1+r\rho}{2}}\right)/\sqrt{\frac{1+r\rho}{2}}$ \citep{lozier2003nist}. As $\text{arcsin}\left(\sqrt{\frac{1+r\rho}{2}}\right)/\sqrt{\frac{1+r\rho}{2}}$ is continuous in $r\in[-1,1]$, $\prescript{}{2}F_1\left(\frac{1}{2},\frac{1}{2},\frac{3}{2},\frac{1+r\rho}{2}\right)$ is bounded by a positive constant $M_1$.

Using $(a+b)/2 \geq \sqrt{ab}$ for $a,b>0$, we have
\begin{eqnarray*}
(1-\rho^2)^{\frac{n-3}{2}}(1-r^2)^{\frac{n-3}{2}}(1-\rho r)^{-(n-3)}
    &=& \left(\left(\frac{1-\rho^2}{1-\rho r}\right)^{1/2}\left(\frac{1-r^2}{1-\rho r}\right)^{1/2}\right)^{n-3}\\
    &\leq& \left(\frac{1}{2}\cdot\frac{(1-\rho^2) + 1-r^2}{1-\rho r}\right)^{n-3} \\
    &=&  \left(1 - \frac{1}{2}\cdot\frac{(r-\rho)^2}{1-\rho r}\right)^{n-3}.
\end{eqnarray*}

Accordingly, we have
\begin{eqnarray*}
    && \int_{\rho+\epsilon}^1 (1-\rho^2)^{\frac{n}{2}}(1-r^2)^{\frac{n-3}{2}}(1-\rho r)^{-n+\frac{1}{2}} dr\\
    &\leq& (1-\rho^2)^{\frac{3}{2}}(1-\rho)^{-\frac{5}{2}} \int_{\rho+\epsilon}^1 (1-\rho^2)^{\frac{n-3}{2}}(1-r^2)^{\frac{n-3}{2}} (1-\rho r)^{-n+3}  dr\\
    &\leq& (1-\rho^2)^{\frac{3}{2}}(1-\rho)^{-\frac{5}{2}} \int_{\rho+\epsilon}^1 \left(1 - \frac{1}{2}\cdot\frac{(r-\rho)^2}{1-\rho r}\right)^{n-3} dr\\
\end{eqnarray*}

Using $1-x \leq e^{-x}$ for $x>0$, for $r>\rho+\epsilon$, we have
\begin{eqnarray*}
    \left(1 - \frac{1}{2}\cdot\frac{(r-\rho)^2}{1-\rho r}\right)^{n-3} \leq \exp\left(-\frac{(n-3)(r-\rho)^2}{2(1-\rho r)}\right)\leq \exp\left(-\frac{(n-3)(r-\rho)^2}{2(1-\rho^2 )}\right).
\end{eqnarray*}

Thus, we have
\begin{eqnarray*}
    \int_{\rho+\epsilon}^1 \left(1 - \frac{1}{2}\cdot\frac{(r-\rho)^2}{1-\rho r}\right)^{n-3} dr \leq \int_{\rho+\epsilon}^1 \exp\left(-\frac{(n-3)(r-\rho)^2}{2(1-\rho^2)}\right) dr \leq \int_{\epsilon}^\infty \exp\left(-\frac{(n-3)r^2}{2(1-\rho^2)}\right) dr.
\end{eqnarray*}

Using the inequality  $1-\Phi(z)\leq \exp(-z^2/2),~z>0$, where $\Phi(\cdot)$ is the CDF of standard normal distribution, we have 
\begin{eqnarray*}
    1-\Phi\left(\frac{\sqrt{n-3}\epsilon}{\sqrt{1-\rho^2}}\right) = \int_{\epsilon}^\infty \frac{\sqrt{n-3}}{\sqrt{2\pi (1-\rho^2)}} \exp\left(-\frac{(n-3)r^2}{2(1-\rho^2)}\right) dr \leq \exp\left(-\frac{(n-3)\epsilon^2}{2(1-\rho^2)} \right).
\end{eqnarray*}

\begin{eqnarray}
    P(R-\rho>\epsilon) &\leq&  M_1 \frac{n-1}{\sqrt{2\pi}} \frac{\Gamma(n)}{\Gamma(n+\frac{1}{2})} \frac{\sqrt{2\pi (1-\rho^2)}}{\sqrt{n-3}} (1-\rho^2)^{\frac{3}{2}}(1-\rho)^{-\frac{5}{2}}\exp\left(-\frac{n\epsilon^2}{2(1-\rho^2)} \right)\nonumber\\
    &\leq& \frac{(n-1)\Gamma(n)}{\sqrt{n-3}\Gamma(n+\frac{1}{2})}  (1-\rho^2)^{2}(1-\rho)^{-\frac{5}{2}}\exp\left(-\frac{(n-3)\epsilon^2}{2(1-\rho^2)} \right)\nonumber\\
    &\leq&  M_1  M_2 (1+\rho)^2(1-\rho)^{-\frac{1}{2}}\exp\left(-\frac{(n-3)\epsilon^2}{2(1-\rho^2)} \right),\label{eq:case1}
\end{eqnarray}
where $M_2 = \sup_{n\geq 4} \frac{(n-1)\Gamma(n)}{\sqrt{n-3}\Gamma(n+\frac{1}{2})}$. $M_2$ is finite as $\frac{\Gamma(n)}{\Gamma(n+\frac{1}{2})} \leq \sqrt{\frac{1}{n+1/4}}$ as shown in \citep{qi2010bounds}.

Next, we calculate $P( -(R - \rho) >  \epsilon)$ for $0<\epsilon<\rho$ for case (ii). Note that
\begin{eqnarray*}
    && \int_{0}^{\rho-\epsilon} f_n(r|\rho) dr \\
    &\leq& \int_{0}^{\rho-\epsilon} \frac{n-1}{\sqrt{2\pi}} \frac{\Gamma(n)}{\Gamma(n+\frac{1}{2})} (1-\rho^2)^{\frac{n}{2}}(1-r^2)^{\frac{n-3}{2}}(1-\rho r)^{-n+\frac{1}{2}}\prescript{}{2}F_1\left(\frac{1}{2},\frac{1}{2},n+\frac{1}{2},\frac{1+r\rho}{2}\right) dr.
\end{eqnarray*}

Similarly to the proof for case (i), we have
\begin{eqnarray*}
    && \int_{0}^{\rho-\epsilon} (1-\rho^2)^{\frac{n}{2}}(1-r^2)^{\frac{n-3}{2}}(1-\rho r)^{-n+\frac{1}{2}} dr\\
    &\leq& (1-\rho^2)^{\frac{3}{2}}(1-\rho^2)^{-\frac{5}{2}} \int_{0}^{\rho-\epsilon} (1-\rho^2)^{\frac{n-3}{2}}(1-r^2)^{\frac{n-3}{2}} (1-\rho r)^{-n+3}  dr\\
    &\leq& (1-\rho^2)^{-1} \int_{0}^{\rho-\epsilon} \left(1 - \frac{1}{2}\cdot\frac{(r-\rho)^2}{1-\rho r}\right)^{n-3} dr\\
    &\leq& (1-\rho^2)^{-1} \int_{0}^{\rho-\epsilon} \left(1 - \frac{1}{2}(r-\rho)^2\right)^{n-3} dr\\
    &\leq& (1-\rho^2)^{-1} \int_{0}^{\rho-\epsilon} \exp\left(-\frac{(n-3)(r-\rho)^2}{2}\right) dr
\end{eqnarray*}

Accordingly, we have

\begin{eqnarray*}
    1-\Phi\left(\sqrt{n-3}\epsilon\right) = \int_{-\infty}^{\rho-\epsilon} \frac{\sqrt{n-3}}{\sqrt{2\pi}} \exp\left(-\frac{(n-3)(r-\rho)^2}{2}\right) dr \leq \exp\left(-\frac{(n-3)\epsilon^2}{2} \right).
\end{eqnarray*}

Thus, we have

\begin{eqnarray}
    &&\int_{0}^{\rho-\epsilon} \frac{n-1}{\sqrt{2\pi}} \frac{\Gamma(n)}{\Gamma(n+\frac{1}{2})} (1-\rho^2)^{\frac{n}{2}}(1-r^2)^{\frac{n-3}{2}}(1-\rho r)^{-n+\frac{1}{2}} \prescript{}{2}F_1\left(\frac{1}{2},\frac{1}{2},n+\frac{1}{2},\frac{1+r\rho}{2}\right)  dr\nonumber \\
    &\leq&  M_1 \frac{n-1}{\sqrt{2\pi}} \frac{\Gamma(n)}{\Gamma(n+\frac{1}{2})} \frac{\sqrt{2\pi}}{\sqrt{n-3}} (1-\rho^2)^{-1}\exp\left(-\frac{(n-3)\epsilon^2}{2} \right)\nonumber\\
    &\leq&  M_1 M_2 (1-\rho^2)^{-1}\exp\left(-\frac{(n-3)\epsilon^2}{2} \right).\label{eq:case2}
\end{eqnarray}

Lastly, we calculate $P( R - \rho <  - \epsilon)$ for $\rho<\epsilon$. Based on a similar logic to those in the previous two cases, we have
\begin{eqnarray*}
    && \int_{-1}^{\rho-\epsilon} (1-\rho^2)^{\frac{n}{2}}(1-r^2)^{\frac{n-3}{2}}(1-\rho r)^{-n+\frac{1}{2}} dr\\
    &\leq& (1-\rho^2)^{\frac{3}{2}} \int_{-1}^{\rho-\epsilon} (1-\rho^2)^{\frac{n-3}{2}}(1-r^2)^{\frac{n-3}{2}} (1-\rho r)^{-n+3}  dr\\
    &\leq& (1-\rho^2)^{\frac{3}{2}} \int_{-1}^{\rho-\epsilon} \left(1 - \frac{1}{2}\cdot\frac{(r-\rho)^2}{1-\rho r}\right)^{n-3} dr\\
    &\leq& (1-\rho^2)^{\frac{3}{2}}  \int_{-1}^{\rho-\epsilon} \left(1 - \frac{(r-\rho)^2}{2(1+\rho)}\right)^{n-3} dr\\
    &\leq& (1-\rho^2)^{\frac{3}{2}}\int_{-1}^{\rho-\epsilon} \exp\left(-\frac{(n-3)(r-\rho)^2}{2(1+\rho)}\right) dr
\end{eqnarray*}

Accordingly, we have

\begin{eqnarray*}
    1-\Phi\left(\frac{\sqrt{n-3}\epsilon}{\sqrt{1+\rho}}\right) = \int_{-\infty}^{\rho-\epsilon} \frac{\sqrt{n-3}}{\sqrt{2\pi(1+\rho)}} \exp\left(-\frac{(n-3)(r-\rho)^2}{2(1+\rho)}\right) dr \leq \exp\left(-\frac{(n-3)\epsilon^2}{2(1+\rho)} \right).
\end{eqnarray*}

Thus, we have

\begin{eqnarray}
    &&\int_{-1}^{\rho-\epsilon} \frac{n-1}{\sqrt{2\pi}} \frac{\Gamma(n)}{\Gamma(n+\frac{1}{2})} (1-\rho^2)^{\frac{n}{2}}(1-r^2)^{\frac{n-3}{2}}(1-\rho r)^{-n+\frac{1}{2}} \prescript{}{2}F_1\left(\frac{1}{2},\frac{1}{2},n+\frac{1}{2},\frac{1+r\rho}{2}\right)   dr \nonumber\\
    &\leq&  M_1 \frac{n-1}{\sqrt{2\pi}} \frac{\Gamma(n)}{\Gamma(n+\frac{1}{2})} \frac{\sqrt{2\pi(1+\rho)}}{\sqrt{n-3}} (1-\rho^2)^{\frac{3}{2}} \exp\left(-\frac{(n-3)\epsilon^2}{2(1+\rho)} \right)\nonumber\\
    &\leq&  M_1 M_2  (1+\rho)^{\frac{1}{2}}\exp\left(-\frac{(n-3)\epsilon^2}{2(1+\rho)} \right).\label{eq:case3}
\end{eqnarray}

Putting \eqref{eq:case1}, \eqref{eq:case2}, and \eqref{eq:case3} all together, we have
\begin{eqnarray*}
    P(|R-\rho|>\epsilon ) \leq 2 M_1 M_2(1-\rho)^{-1}  \exp\left(-\frac{(n-3)\epsilon^2}{2(1+\rho)} \right).
\end{eqnarray*}
Therefore, there exist $s_1'>0$ and $s_2'>0$ such that 
\begin{eqnarray*}
     P(|R-\rho|>\epsilon )  \leq s_1' \exp\left(-\frac{n\epsilon^2}{2s_2'} \right).
\end{eqnarray*}

\subsection{Proof of the statement in Assumption 2 for Gaussian distributions}
\label{ssec:ass2}
Let ${\bf X}_i = (X_{i1},X_{i2},\dots,X_{in})^\top$ be a random vector such that each $X_{ij}$ is independently generated from $N(\mu_i,\sigma_i^2)$. We assume that ${\bf X}_1$, ${\bf X}_2$, and ${\bf X}_3$ are mutually independent. We let centered and standardized vectors of ${\bf X}_i$, ${\bf X}_2$, and ${\bf X}_3$, with sample size $n$, denoted by ${\bf Y}_1$, ${\bf Y}_2$, and ${\bf Y}_3$, respectively. Since they are centered and standardized, they are in $\mathbb{S}^{d-1}$. In addition, $R_{ij} = {\bf Y}_i^\top {\bf Y}_j$. Because each random variable ${\bf X}_i$ is independent from each other, ${\bf Y}_i$ is unformly distributed in $\mathbb{S}^{d-1}$. Consider the joint density of ${\bf Y}_1$ and $R_{12}=r_{12}$. Then, we have
\begin{eqnarray*}
    f({\bf Y}_1=y_1,R_{12}=r_{12})=f(R_{12}=r_{12}|{\bf Y}_1=y_1)f({\bf Y}_1=y_1).
\end{eqnarray*}
Here, note that $R_{12}=r_{12}$ represents ${\bf Y}_2^\top y_1 = r_{12}$ and that $r_{12}$ represents $\text{cos}(\theta_{12})$, where $\theta_{12}$ is the angle between ${\bf Y}_1$ and ${\bf Y}_2$. The set $\{y:y^\top y_1 = r_{12}\} \in \mathbb{S}^{d-1}$ is the collection of vectors in $\mathbb{S}^{d-1}$ that have the angle $\theta_{12} = \text{cos}^{-1}(r_{12})$ with $y_1$. Since ${\bf Y}_2$ is independent of ${\bf Y}_1$ and uniformly distributed over $\mathbb{S}^{d-1}$, the probability density of $\{y:y^\top y_1 = r_{12}\}$ does not depend on the value of $y_1$. This means 
\begin{eqnarray*}
    f(R_{12}=r_{12}|{\bf Y}_1=y_1) = f(R_{12}=r_{12}).
\end{eqnarray*}
Accordingly, we have
\begin{eqnarray*}
    f({\bf Y}_1=y_1,R_{12}=r_{12})=f(R_{12}=r_{12})f({\bf Y}_1=y_1).
\end{eqnarray*}
Thus, ${\bf Y}_1$ and $R_{12}$ are indenpendent. Because ${\bf Y}_3$ is also independent from $R_{12}$, $R_{13}={\bf Y}_1^\top {\bf Y}_3$ and $R_{12}$ are independent.

\subsection{Proof of Theorem \ref{lem:1}}

\begin{proof}
    
Given a bipartite Erdős-Rényi random graph $B(U, V ; E)$ with node sizes $p$ and $q$ and binary edges with probability $\gamma_0$, we want to estimate the probability that there exists a subgraph $B_c=(U_c, V_c ; E_c)$ with $p_c$ and $q_c$ nodes in the respective parts and edge density greater than $\gamma_0$. For a fixed subset $U_c \subset [q]$ with $|U_c| = p_c$ and $V_c \subset V$ with $|V_c| = q_c$, the probability that the induced subgraph on $U_c$ and $V_c$ has density at least $q$ is given by

$$
P\left(\frac{|E_c|}{|U_c||V_c|} > \gamma \right) = \sum_{l > \gamma p_c q_c} \binom{p_c q_c}{l} \gamma_0^l (1-\gamma_0)^{p_c q_c - l}.
$$

Using a Chernoff bound for the binomial distribution, this probability can be bounded as follows:

$$
P\left(\frac{|E_c|}{|U_c||V_c|} > \gamma \right) \leq \exp \left( -D(\gamma \| \gamma_0) \cdot p_c q_c \right),
$$

where $D(\gamma \| \gamma_0)$ is the Kullback-Leibler divergence given by

$$
D(\gamma \| \gamma_0) = \gamma \log \left(\frac{\gamma}{\gamma_0}\right) + (1-\gamma) \log \left(\frac{1-\gamma}{1-\gamma_0}\right).
$$

Now, using the union bound by summing over all possible subgraph pairs, we have

$$
P\left( \max_{U_c\subset U,V_c\subset V:|U_c|=p_c,|V_c|=q_c} 
 \frac{|E_c|}{|U_c||V_c|} > \gamma\right) \leq \binom{p}{p_c} \binom{q}{q_c} \exp \left( -D(\gamma \| \gamma_c) \cdot p_c q_c \right).
$$

Using Stirling's approximation $(n_0/e)^{n_0} \leq n_0!$ and $n(n-1)\cdots (n-n_0+1) \leq n^{n_0}$ for integers $n_0<n$ and simplifying,

$$
\binom{p}{p_c} \binom{q}{q_c} \leq \exp \left( p_c \log \left( \frac{pe}{p_c} \right) + q_c \log \left( \frac{qe}{q_c} \right) \right),
$$

we obtain

$$
P\left(\max_{U_c\subset U,V_c\subset V:|U_c|=p_c,|V_c|=q_c} 
 \frac{|E_c|}{|U_c||V_c|}>\gamma\right) \leq \exp \left( p_c \log \left( \frac{pe}{p_c} \right) + q_c \log \left( \frac{qe}{q_c} \right) - D(\gamma \| \gamma_0) \cdot p_c q_c \right).
$$

This can be rewritten as

$$
P\left(\max_{U_c\subset U,V_c\subset V:|U_c|=p_c,|V_c|=q_c} 
 \frac{|E_c|}{|U_c||V_c|}>\gamma\right) \leq \exp \left( - p_c q_c \left[ D(\gamma \| \gamma_0) - \frac{\log \left( \frac{pe}{p_c} \right)}{q_c} - \frac{\log \left( \frac{qe}{q_c} \right)}{p_c} \right] \right).
$$
\end{proof}

\subsection{Proof of Theorem \ref{thm:1}}

In this subsection, we present the underlying theory for the main results of the GIDS procedure. We consider that the correlation matrix has a network structure with a single biclique subgraph $B=(U_1,V_1,E_1)$, where the absolute values of correlations $|\rho_{ij}|$ for all $(i,j) \in U_1\otimes V_1$ have the values in $[\rho,w\rho]$ for some $0<\rho<1$ and $w\geq 1$. In addition, the correlation coefficients outside the subgraph are zero. For ease of presentation, we set $I_X=\{1,2,\dots,p_0\}$ and $I_Y=\{1,2,\dots,q_0\}$. The following lemma provides bounds to ensure that the sample correlation coefficients, both within and outside the subgraph, lie within specific intervals based on the subGaussianity of sample correlation coefficients. 

\begin{lemma}
    Let $R_{ij}$ and $\rho_{ij}$ be the sample correlation coefficient with sample size $n$ and ground truth correlation of $X_i$ and $Y_j$, respectively. Then, for all $i \in [p]$ and $j \in [q]$, for $a>0$, we have
\begin{eqnarray*}
   |R_{ij}-\rho_{ij}| < a,
\end{eqnarray*}
with probability at least $1-2pq \exp \left(-na^2/2s_1^2\right)$.
\label{lem:2}
\end{lemma}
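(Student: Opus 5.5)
The plan is a direct application of Assumption~\ref{ass:sub} to each pair, followed by a union bound over all $pq$ pairs. No structural information about the subgraph is needed; the lemma is purely a uniform concentration statement.

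First, fix an arbitrary pair $(i,j)\in[p]\times[q]$ and the given $a>0$. Since $n>3$ and $\rho_{ij}\in(-1,1)$, Assumption~\ref{ass:sub} gives
\[
\mathbb{P}\bigl(|R_{ij}-\rho_{ij}|>a\bigr)\leq 2\exp\bigl(-na^2/2s_1^2\bigr),
\]
with a constant $s_1$ that does not depend on $(i,j)$ or on $\rho_{ij}$. This uniformity is exactly what the assumption provides, since $s_1$ is taken as a supremum-type constant valid for all $\rho_{ij}$.

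If some $|\rho_{ij}|=1$ is allowed, as in the boundary case $w\rho=1$, I will note that the bound is trivial there, exactly as remarked in the Gaussian verification in Supplementary Materials~\ref{ssec:ass1}. Alternatively, I will restrict to the open interval, which is all the assumption covers.

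Second, define the bad event
\[
\mathcal{E}=\bigcup_{i\in[p],\,j\in[q]}\bigl\{|R_{ij}-\rho_{ij}|\geq a\bigr\}.
\]
Subadditivity of probability gives $\mathbb{P}(\mathcal{E})\leq\sum_{i,j}\mathbb{P}(|R_{ij}-\rho_{ij}|\geq a)$. No independence among the $R_{ij}$ is required for this step, so the dependence through shared columns of $\mathbf{X}$ and $\mathbf{Y}$ is harmless. Each summand is at most $2\exp(-na^2/2s_1^2)$, so $\mathbb{P}(\mathcal{E})\leq 2pq\exp(-na^2/2s_1^2)$. On $\mathcal{E}^c$ we have $|R_{ij}-\rho_{ij}|<a$ for all pairs simultaneously, which is the claim.

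The only delicate point is the mismatch between strict and non-strict inequalities. The assumption bounds $\mathbb{P}(|R_{ij}-\rho_{ij}|>a)$, while the lemma needs the complement of $\{|R_{ij}-\rho_{ij}|\geq a\}$. I will handle this in one of two ways. One is to note that $R_{ij}$ has a continuous distribution (a density, as in the Gaussian case), so the event $\{|R_{ij}-\rho_{ij}|=a\}$ is null. The other is to apply the assumption at $a'<a$ and let $a'\uparrow a$, using continuity of the right-hand side in $a$. Either route yields the stated probability $1-2pq\exp(-na^2/2s_1^2)$.
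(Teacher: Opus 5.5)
Your proposal is correct and follows the paper's proof: apply Assumption~\ref{ass:sub} to each pair $(i,j)$, then take a union bound over the $pq$ pairs. Your limiting argument $a'\uparrow a$ for passing from $\{|R_{ij}-\rho_{ij}|>a\}$ to $\{|R_{ij}-\rho_{ij}|\geq a\}$ closes a small strict-versus-non-strict gap that the paper passes over silently.
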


\begin{proof}
By Assumption \ref{ass:sub}, we have
 \begin{eqnarray*}
    P(|R_{ij}-\rho_{ij}| > a) \leq 2 \exp \left(-\frac{na^2}{2s_1^2}\right).
\end{eqnarray*}
By applying the union bound, we have
\begin{eqnarray*}
     P\left(|R_{ij}-\rho_{ij}| > a,~\text{for~any}~(i,j)\in [p]\otimes [q]\right) \leq 2pq \exp \left(-\frac{na^2}{2s_1^2}\right).
\end{eqnarray*}
Thus, for all $i \in [p]$ and $j \in [q]$, with probability at least $1-2pq \exp \left(-na^2/2s_1^2\right)$, we have
\begin{eqnarray*}
   |R_{ij}-\rho_{ij}| < a.
\end{eqnarray*}
\end{proof}

By Lemma \ref{lem:2}, the sample correlations are concentrated around their true means when the sample size is sufficiently large. To quantify this concentration, we introduce a positive constant $\eta > 2$. Specifically, if $|R_{ij}| > (\eta - 1) \rho / \eta$ for $(i,j)$ such that $e_{ij}\in E$ and $|R_{ij}| < \rho / \eta$ for $(i,j)$ such that $e_{ij}\notin E$, then the thresholding cutoff $\varepsilon$ in \eqref{eq:ep1} must lie within the interval $(\rho / \eta, (\eta - 1) \rho / \eta)$. For clarity and consistency in the subsequent analysis, we set $\varepsilon = \rho / 2$ in \eqref{eq:ep1}, allowing for a more straightforward presentation of the main results.

The next lemma is an anti-concentration inequality for the difference of sums of absolute sample correlation coefficients between one in a subgraph and another outside it. This inequality makes the two sums (one in a subgraph, another outside it) distinct, so that we can exclude rows or columns not in subgraphs.

\begin{lemma}
For $i_1\in I_X$, $i_2\notin I_X$, and $\eta>3$, we have $\sum_{j\in I_Y} (|R_{i_1j}^\varepsilon|  - |R_{i_2j}^\varepsilon|) > q_0\rho(\eta-2)/\eta$,
and for $j_1\in I_Y$ and $j_2\in V\backslash I_Y$
$\sum_{i\in I_X} (|R_{ij_1}^\varepsilon|  - |R_{ij_2}^\varepsilon|) > p_0\rho(\eta-2)/\eta$, with probability at least $1-\delta_0$, where $\delta_0=2pq \exp \left(-n\rho^2/2\eta^2 s_1^2\right).$ 
\label{lem:3}
\end{lemma}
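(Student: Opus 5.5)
We work on the single high-probability event supplied by Lemma \ref{lem:2} with $a=\rho/\eta$. On this event every pair $(i,j)$ satisfies $|R_{ij}-\rho_{ij}|<\rho/\eta$. The proof then becomes a deterministic entrywise comparison, with no further probabilistic argument. Lemma \ref{lem:2} gives the event probability at least $1-2pq\exp(-n\rho^2/2\eta^2s_1^2)=1-\delta_0$, which is exactly the failure probability claimed.

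\emph{Signal entries.} For $i_1\in I_X$ and $j\in I_Y$, the pair lies in the biclique (with $\delta_1=0$), so $|\rho_{i_1j}|\ge\rho$. On the event this gives $|R_{i_1j}|>\rho-\rho/\eta=(\eta-1)\rho/\eta$. Since $\eta>3$, this exceeds $\rho/2=\varepsilon$. The threshold is therefore inactive, and $|R^\varepsilon_{i_1j}|=|R_{i_1j}|>(\eta-1)\rho/\eta$.

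\emph{Noise entries.} For $i_2\notin I_X$, the correlation $\rho_{i_2j}=0$ (with $\delta_2=0$). On the event this gives $|R_{i_2j}|<\rho/\eta$. Since $\eta>3$, we have $\rho/\eta<\rho/2=\varepsilon$, so $|R^\varepsilon_{i_2j}|=0$. This already gives $|R^\varepsilon_{i_1j}|-|R^\varepsilon_{i_2j}|>(\eta-1)\rho/\eta$. If we want a bound that holds with or without truncation, we can instead use the cruder estimate $|R^\varepsilon_{i_2j}|\le|R_{i_2j}|<\rho/\eta$. This gives a per-term difference of at least $(\eta-1)\rho/\eta-\rho/\eta=(\eta-2)\rho/\eta$, which matches the stated constant.

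Summing the per-term bound over the $q_0$ indices $j\in I_Y$ yields
$\sum_{j\in I_Y}(|R^\varepsilon_{i_1j}|-|R^\varepsilon_{i_2j}|)>q_0\rho(\eta-2)/\eta$.
The column statement follows by the symmetric argument with the roles of $U$ and $V$ swapped, summing over the $p_0$ indices $i\in I_X$. Both statements hold simultaneously on the same event, for all choices of $i_1,i_2,j_1,j_2$, so no additional union bound is needed.

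The only real obstacle is reconciling the constants. The requirement $\eta>3$ is needed so that $\rho/\eta<\rho/2<(\eta-1)\rho/\eta$, i.e., so that the fixed cutoff $\varepsilon=\rho/2$ separates the two regimes. One must also check that the signal bound uses $|\rho_{ij}|\ge\rho$ rather than the signed value; this is handled by the reverse triangle inequality $|R_{ij}|\ge|\rho_{ij}|-|R_{ij}-\rho_{ij}|$.
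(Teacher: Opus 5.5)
Your proof is correct and follows the paper's argument: you restrict to the Lemma~\ref{lem:2} event with $a=\rho/\eta$, which gives exactly $1-\delta_0$; you bound signal entries below by $(\eta-1)\rho/\eta$ and noise entries above by $\rho/\eta$ using $|R^\varepsilon_{ij}|\le|R_{ij}|$; and you sum over the $q_0$ (resp.\ $p_0$) indices. One small remark: the separation $\rho/\eta<\rho/2<(\eta-1)\rho/\eta$ already holds for $\eta>2$, so $\eta>3$ is not needed for this lemma; it is a standing hypothesis that matters later (e.g.\ in Theorem~\ref{thm:2}).
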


\begin{proof}
    By Lemma \ref{lem:2}, for all $i$ and $j$, we have $|R_{ij}-\rho_{ij}| < a$ with probability at least $1-2pq \exp \left(-na^2/2s_1^2\right)$. By plugging $1/\eta$ to $a$, we have $|R_{ij}|> \frac{\eta-1}{\eta}\rho$ for $i\in I_{X}~\text{and}~j\in I_{Y}$ and $|R_{ij}| < \frac{1}{\eta}\rho$ for $i\in I_{X}^c~\text{or}~j\in I_{Y}^c$ with probabilty at least $1-2pq \exp \left(-n\rho^2/2\eta^2 s_1^2\right)$. Now, we consider $|R_{ij}^\varepsilon|$ for $0\leq \varepsilon < \rho/2$. As $\frac{\eta-1}{\eta}\rho>\varepsilon$ and $|R_{ij}|\geq |R_{ij}^\varepsilon|$, we still have 
\begin{eqnarray}
    |R_{ij}^\varepsilon|> \frac{\eta-1}{\eta}\rho,~~~\text{for}~i\in I_{X}~\text{and}~j\in I_{Y}\label{eq:r23}
\end{eqnarray}
and
\begin{eqnarray}
    |R_{i'j}^\varepsilon| < \frac{1}{\eta}\rho,~~~\text{for}~i'\in I_{X}^c~\text{or}~j\in I_{Y}^c.\label{eq:r13}
\end{eqnarray}

Thus, we have $\sum_{j\in I_Y} |R_{ij}^\varepsilon| > q_0((\eta-1)/\eta)\rho$ for all $i\in I_X$ and $\sum_{j\in I_Y} |R_{i'j}^\varepsilon| < q_0(\rho/\eta)$ for all $i'\in U\backslash I_X$ with probability at least $1-2pq \exp \left(-n\rho^2/2\eta^2 s_1^2\right)$. Accordingly, we have $\sum_{j\in I_Y} (|R_{ij}^\varepsilon|  - |R_{i'j}^\varepsilon|) > q_0\rho(\eta-2)/\eta$ for $i\in I_X$ and $i'\in U\backslash I_X$  with probability at least $1-2pq \exp \left(-n\rho^2/2\eta^2 s_1^2\right)$. In the same way, we can show
$\sum_{i\in I_X} (|R_{ij}^\varepsilon|  - |R_{ij'}^\varepsilon|) > p_0\rho(\eta-2)/\eta$ for $j\in I_Y$ and $j'\in V\backslash I_Y$ with probability at least $1-2pq \exp \left(-n\rho^2/2\eta^2 s_1^2\right)$.
\end{proof}

The following lemma presents a concentration inequality for the sum of differences of the column means of $|R_{ij}^\varepsilon|$ between two columns. 

\begin{lemma}

For all $t$, with probability at least $1-\delta_1$, we have
    $$\sum_{i \in \widetilde{U}_{c,t}\backslash I_X} (|R_{ij_1}^\varepsilon| - |R_{ij_2}^\varepsilon| ) \leq \frac{p_0(\eta-2)\rho}{\eta},$$
    for all $j_1,j_2\in [q]$ and 
$$\sum_{j \in \widetilde{V}_{1,t}\backslash I_Y} (|R_{i_1j}^\varepsilon| - |R_{i_2j}^\varepsilon| ) \leq \frac{q_0(\eta-2)\rho}{\eta}.$$
for all $i_1,i_2\in[p]$ and all $t=1,2,\dots,p+q-1$, where $\delta_1 = \exp\left(-\frac{p_0(\eta-2)\rho}{2\eta}\frac{n\varepsilon^2}{2(s_1^2+s_2^2)} + p\log 5 + 2\log q\right) + \exp\left(-\frac{q_0(\eta-2)\rho}{2\eta}\frac{n\varepsilon^2}{2(s_1^2+s_2^2)} + q\log 5 + 2\log p\right)$.

\label{lem:4}
\end{lemma}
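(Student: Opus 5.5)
We will prove Lemma~\ref{lem:4} by a Chernoff (exponential Markov) bound for a fixed pair of columns and a fixed active set, followed by a union bound. We treat the column statement; the row statement is symmetric, with the roles of $(p,p_0)$ and $(q,q_0)$ exchanged.

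\textbf{Step 1: fix a pair and an active set.} Fix $j_1,j_2\in[q]$ and write $D_i:=|R_{ij_1}^\varepsilon|-|R_{ij_2}^\varepsilon|$ for $i\in U\backslash I_X$. For such $i$ we have $Z_{ij_1}=Z_{ij_2}=0$, since $\delta_2=0$ and $\rho_{ij}=0$ outside the single subgraph.

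For any $a>0$, Markov's inequality gives
\[
\mathbb{P}\Big(\sum_{i\in S}D_i>\tfrac{p_0(\eta-2)\rho}{\eta}\Big)\le e^{-a p_0(\eta-2)\rho/\eta}\,E\Big[e^{a\sum_{i\in S}D_i}\Big],
\]
where $S=\widetilde{U}_{c,t}\backslash I_X$.

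\textbf{Step 2: bound the joint MGF.} We would peel off the factors one at a time by iterated conditioning on $\sigma\{\{\textbf{r.sum}_{\tau'}\}_{\tau'\le\tau}\}$. Assumption~\ref{ass:ass2} bounds each conditional factor by
\[
1+4\exp\!\big(-n\varepsilon^2/2(s_1^2+s_2^2)+a\big).
\]
The product over at most $p$ rows is then at most
\[
\exp\!\Big(4p\,e^{a}e^{-n\varepsilon^2/2(s_1^2+s_2^2)}\Big).
\]
We choose $a$ of order $n\varepsilon^2/2(s_1^2+s_2^2)$, up to a factor such as $1/2$, so that this product stays $O(1)$ while the Markov exponent becomes
\[
-\frac{p_0(\eta-2)\rho}{2\eta}\cdot\frac{n\varepsilon^2}{2(s_1^2+s_2^2)}.
\]
This is exactly the leading term of $\delta_1$.

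\textbf{Step 3: union bound.} The active set $\widetilde{U}_{c,t}$ depends on the data, so instead of tracking it we take a union over all candidate subsets $S\subset U\backslash I_X$. The factor $e^{p\log 5}$ in $\delta_1$ suggests a net or covering count rather than the crude $2^p$. A natural route is to parametrize subsets by sign or inclusion patterns $x\in\{0,1\}^p$ and bound
\[
\sup_x \sum_i x_iD_i
\]
via a $1/2$-net of size $5^p$ in the manner of \cite{vershynin2018high}. Alternatively, one can simply note that $2^p\le 5^p$. The union over ordered pairs $(j_1,j_2)$ contributes $q^2=e^{2\log q}$. Uniformity over $t$ needs no extra factor, because every $\widetilde{U}_{c,t}\backslash I_X$ is one of the subsets already covered. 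Summing the column and row cases yields $\delta_1$.

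\textbf{Main obstacle.} The hard step is Step 2: justifying the product bound when the $D_i$ are dependent and the set $S$ is data-dependent. Assumption~\ref{ass:ass2} only controls a single difference conditional on the filtration of row sums, so turning it into a bound on the MGF of a sum requires a martingale-type ordering of the rows.

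We would first try to order the rows of $S$ by their exclusion time and condition successively on the row-sum filtration. Under this ordering each new factor is bounded by Assumption~\ref{ass:ass2}, in the spirit of the independence established in Supplementary Materials~\ref{ssec:ass2}. If this ordering fails, the fallback is the union over all fixed subsets in Step 3, which makes $S$ deterministic. The remaining difficulty would then be only the cross-row dependence, and we would handle it by invoking the conditional bound iteratively.
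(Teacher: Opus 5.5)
\textbf{The gap is in Step~2, and neither of your two ways around it closes it.}

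Assumption~\ref{ass:ass2} is a one-step bound. It controls the conditional MGF of a single difference given the algorithm's history $\sigma\{\{\textbf{r.sum}_{\tau'}\}_{\tau'\le\tau}\}$, for a row that is still active. To peel $E[e^{a\sum_{i\in S}D_i}]$ into a product, you need an ordering of $S$ in which the already-peeled $D_i$ are measurable with respect to the $\sigma$-field you condition on at the next step.

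In the paper's setup, that history makes a row's entries measurable only once the algorithm has excluded the row. Which rows of a deterministic $S$ get excluded, and when, is random. So for a fixed $S$ there is no ordering you can fix in advance, and ``invoking the conditional bound iteratively'' is circular. The union over all fixed subsets in Step~3 throws away exactly the adapted structure the assumption needs. A fixed-$S$ bound would follow from something stronger, e.g.\ conditional independence of the null rows given $(Y_{j_1},Y_{j_2})$, but that is not Assumption~\ref{ass:ass2}.

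Your accounting of $e^{p\log 5}$ is also off: it is not a covering number. Write $\kappa:=n\varepsilon^2/2(s_1^2+s_2^2)$. Your claim that the product ``stays $O(1)$'' at $a\approx\kappa/2$ needs $p\,e^{-\kappa/2}=O(1)$, which is not assumed. The numbers can be rescued: $\delta_1<1$ forces $\kappa>2\log 5$, so $2(1+4e^{-\kappa/2})<5$ absorbs the $2^{p}$ subsets. But this only helps once a fixed-$S$ MGF bound is in hand.

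\textbf{The missing idea is to index the rows by exclusion order, as the paper does.}

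Let $u_1,\dots,u_{p-p_0}$ be the rows of $U\backslash I_X$ in the order the algorithm removes them. Let $\mathcal F_{r-1}$ be the history up to the $(r-1)$st such removal, so the earlier increments are $\mathcal F_{r-1}$-measurable. With $a=\kappa$, Assumption~\ref{ass:ass2} gives $E[e^{aD_{u_r}}\mid\mathcal F_{r-1}]\le 1+4e^{0}=5$. Hence $M_r=\exp(a\sum_{r'\le r}D_{u_{r'}}-r\log 5)$ is a nonnegative supermartingale. This per-step factor $5$, accumulated over at most $p$ steps, is where $e^{p\log 5}$ comes from.

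Optional stopping, a Ville-type maximal inequality, then bounds \emph{all} prefix sums by $p_0(\eta-2)\rho/(2\eta)$ simultaneously. The exceptional event has probability at most $\exp(-\frac{p_0(\eta-2)\rho}{2\eta}\kappa+p\log 5)$. A union over the $q^2$ pairs gives the first term of $\delta_1$.

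The data-dependent active set then costs nothing extra. $\widetilde U_{c,t}\backslash I_X$ is exactly the suffix $\{u_{n_X(t)+1},\dots,u_{p-p_0}\}$. A suffix sum equals the total minus a prefix, so it is at most $p_0(\eta-2)\rho/\eta$. This halving of the threshold is the source of the $2\eta$ in the exponent. No union over subsets is needed, and uniformity in $t$ comes from the maximal inequality rather than from counting sets.
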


\begin{proof}

Let $\{u_1,u_2,\dots,u_{p-p_0}\}$ be the sequence of excluded rows in $U\backslash I_X$ and $n_X(t)$ be the number of excluded rows in $U\backslash I_X$ by time $t$. Then, we can rewrite 
$$\sum_{i \in \widetilde{U}_{1,t}\backslash I_X} (|R_{ij_1}^\varepsilon| - |R_{ij_2}^\varepsilon| ) = \sum_{r=n_X(t)+1}^{p-p_0} (|R_{u_r j_1}^\varepsilon| - |R_{u_rj_2}^\varepsilon| )$$

~~
 
 By Assumption \ref{ass:ass2}, we have
\begin{eqnarray*}
    E\left[\left.e^{a(|R_{u_rj_1}^\varepsilon| - |R_{u_rj_2}^\varepsilon|) }\right| \mathcal{F}_{r-1} \right] \leq 1+4e^{-\frac{n\varepsilon^2}{2(s_4^2+s_5^2)}+|a|},
\end{eqnarray*}
where $\mathcal{F}_{r-1} =\sigma\{ \{\textbf{r.sum}_{\tau} \}_{\tau=1}^{t_X(r-1) }\}$ and $t_X(r)=\text{argmax}_{t'}\{n_X(t')=r\}$. Note that $\{|R_{u_{r'}j_1}^\varepsilon|,|R_{u_{r'}j_2}^\varepsilon|\}_{r'=1}^{r-1} \subset \mathcal{F}_{r-1}$. Letting $a=n\varepsilon^2/2(s_1^2+s_2^2)$, we have
\begin{eqnarray}
    E\left[\left.e^{\frac{n\varepsilon^2}{2(s_1^2+s_2^2)}(|R_{u_rj_1}^\varepsilon| - |R_{u_rj_2}^\varepsilon|) }\right| \mathcal{F}_{r-1}\right]  \leq  5.\label{eq:lem3step1}
\end{eqnarray}

For $i_1,i_2\in [p]$, let 
\begin{eqnarray*}
M_r^{i_1i_2}= \exp\left(\frac{n\varepsilon^2}{2(s_1^2+s_2^2)}\sum_{r'=1}^r(|R_{u_{r'}j_1}^\varepsilon| - |R_{u_{r'}j_2}^\varepsilon|)-r\log 5\right)    
\end{eqnarray*}
    where $M_r^{i_1i_2} = M_{p-p_0}^{i_1i_2}$ for $r\geq p-p_0$ and $r_\tau$ be a stopping time with respect to the filtration $\{\mathcal{F}_{r-1}\}$. First, we claim $\{M_r^{i_1i_2}\}$ is a supermartingale. Let $D_r^{i_1i_2} = \exp\left(\frac{n\varepsilon^2}{2(s_1^2+s_2^2)}(|R_{u_rj_1}^\varepsilon| - |R_{u_rj_2}^\varepsilon|)\right)$. 
By \eqref{eq:lem3step1}, we have
\begin{eqnarray*}
E[D_r^{i_1i_2}|\mathcal{F}_{r-1}] = E\left[\left.e^{\frac{n\varepsilon^2}{2(s_1^2+s_2^2)}(|R_{u_rj_1}^\varepsilon|-|R_{u_rj_2}^\varepsilon|) }\right|\mathcal{F}_{r-1}\right]5^{-1}\leq 1.
\end{eqnarray*}
Clearly, $D_r^{i_1i_2}$ is $\mathcal{F}_r^{i_1i_2}$-measurable, as is $M_r^{i_1i_2}$. Further, we have $E[M_r^{i_1i_2}|\mathcal{F}_{r-1}]=D_1^{i_1i_2}\cdots D_{t-1}^{i_1i_2}E[D_r^{i_1i_2}|\mathcal{F}_{r-1}] \leq M_{t-1}^{i_1i_2}$. This shows that $\{M_r^{i_1i_2}\}$ is a supermartingale. By the convergence theorem for nonnegative supermartingales \citep{doob1953stochastic}, $M_\infty^{i_1i_2}=\lim_{t\rightarrow \infty} M_r^{i_1i_2}$ is almost surely well-defined. Hence, $M_{\tau_r}^{i_1i_2}$ is well-defined. Next, let $Q_r^{i_1i_2} = M_{\min(\tau_r,r)}^{i_1i_2}$ be a stopped version of $\{M_r^{i_1i_2}\}$. By Fatou's lemma \citep{rudin1976principles}, we have
\begin{eqnarray*}
    E[\text{liminf}_{r\rightarrow \infty} Q_r^{i_1i_2}] \leq \text{liminf}_{r\rightarrow \infty}E[Q_r^{i_1i_2}] \leq 1.
\end{eqnarray*}
This shows that $E[M_\tau^{i_1i_2}]\leq 1$ holds. Lastly, from $E[M_\tau^{i_1i_2}]\leq 1$, for $0\leq \delta < 1$, we get
\begin{flalign*}
   &P\left(\frac{n\varepsilon^2}{2(s_1^2+s_2^2)}\sum_{r=1}^{\tau_r} (|R_{u_rj_1}^\varepsilon|- |R_{u_rj_2}^\varepsilon|) -\tau \log 5 > \log \delta^{-1}\right)
    =P\left(M_{\tau_r}^{i_1i_2}\delta^{-1} > 1 \right)\\ &\leq E[M_{\tau_r}^{i_1i_2} \delta] \leq \delta.
\end{flalign*}
In other words, we have
\begin{eqnarray*}
    \frac{n\varepsilon^2}{2(s_1^2+s_2^2)}\sum_{r=1}^{\tau_r} (|R_{u_rj_1}^\varepsilon| - |R_{u_rj_2}^\varepsilon|) -\tau_r  \log 5 > \log \delta^{-1}
\end{eqnarray*}
with probability at least $1-\delta$. Now, we aim at the calculation of the least probability such that
\begin{eqnarray*}
    \sum_{r=1}^{\tau_r} (|R_{u_rj_1}^\varepsilon| - |R_{u_rj_2}^\varepsilon|)   < \frac{p_0(\eta-2)\rho}{2\eta}
\end{eqnarray*}
for all $j_1,j_2\in [q]$. Even if we replace the stopping time $\tau_r$ with $r'$, 
\begin{eqnarray*}
    \sum_{r=1}^{r'} (|R_{u_rj_1}^\varepsilon| - |R_{u_rj_2}^\varepsilon|)   < \frac{p_0(\eta-2)\rho}{2\eta}
\end{eqnarray*}
is still satisfied with the same probability. For fixed $j_1,j_2$, the least probability satisfying the above can be calculated based on the inequality
\begin{eqnarray*}
      \frac{2(s_1^2+s_2^2)}{n\varepsilon^2} \left(\log \delta^{-1} + p\log 5 \right) \leq \frac{p_0(\eta-2)\rho}{2\eta},
\end{eqnarray*}
which leads to
\begin{eqnarray*}
    \delta = \exp\left(-\frac{p_0(\eta-2)\rho}{2\eta}\frac{n\varepsilon^2}{2(s_1^2+s_2^2)} + p\log 5\right).
\end{eqnarray*}

Accordingly, by applying this for all $j_1,j_2 \in [q]$,
we have
\begin{eqnarray*}
    \left|\sum_{r=1}^{r'} (|R_{u_rj_1}^\varepsilon| - |R_{u_rj_2}^\varepsilon|) \right|  < \frac{p_0(\eta-2)\rho}{2\eta},
\end{eqnarray*}
with probability at least $1-q^2\exp\left(-\frac{p_0(\eta-2)\rho}{2\eta}\frac{n\varepsilon^2}{2(s_1^2+s_2^2)} + p\log 5\right)$.

By letting $r' = p-p_0$, we have $\left|\sum_{r=1}^{p-p_0} |R_{u_rj_1}^\varepsilon| - |R_{u_rj_2}^\varepsilon|\right| \leq \frac{p_0(\eta-2)\rho}{2\eta}$. Accordingly, we have
$$\left|\sum_{r=r'}^{p-p_0} (|R_{u_rj_2}^\varepsilon| - |R_{u_rj_2}^\varepsilon|) \right| \leq \left|\sum_{r=1}^{p-p_0} |R_{u_rj_1}^\varepsilon| - |R_{u_rj_2}^\varepsilon|\right| + \left|\sum_{r=1}^{r'-1} |R_{u_rj_1}^\varepsilon| - |R_{u_rj_2}^\varepsilon|\right| \leq \frac{p_0(\eta-2)\rho}{\eta}.$$

Thus, we have

$$\left|\sum_{i \in \widetilde{U}_{1,t}\backslash I_X} (|R_{ij_1}^\varepsilon| - |R_{ij_2}^\varepsilon| ) \right| \leq \frac{p_0(\eta-2)\rho}{\eta}.$$
for all $t=1,2,\dots,p+q-1$, with probability $1-\delta$. Similarly, we have

$$\left|\sum_{j \in \widetilde{V}_{1,t}\backslash I_Y} (|R_{i_1j}^\varepsilon| - |R_{i_2j}^\varepsilon| )\right| \leq \frac{q_0(\eta-2)\rho}{\eta}.$$
for all $i_1,i_2\in[p]$ and all $t=1,2,\dots,p+q-1$, with probability $1-p^2\exp\left(-\frac{q_0(\eta-2)\rho}{2\eta}\frac{n\varepsilon^2}{2(s_1^2+s_2^2)} + q\log 5\right)$. Putting these two together, with probability at least $1-q^2\exp\left(-\frac{p_0(\eta-2)\rho}{2\eta}\frac{n\varepsilon^2}{2(s_1^2+s_2^2)} + p\log 5\right)-p^2\exp\left(-\frac{q_0(\eta-2)\rho}{2\eta}\frac{n\varepsilon^2}{2(s_1^2+s_2^2)} + q\log 5\right)$, we have 
$$\left|\sum_{i \in \widetilde{U}_{1,t}\backslash I_X} (|R_{ij_1}^\varepsilon| - |R_{ij_2}^\varepsilon| ) \right| \leq \frac{p_0(\eta-2)\rho}{\eta}$$
and
$$\left|\sum_{j \in \widetilde{V}_{1,t}\backslash I_Y} (|R_{i_1j}^\varepsilon| - |R_{i_2j}^\varepsilon| )\right| \leq \frac{q_0(\eta-2)\rho}{\eta}.$$

\end{proof}

Now, we are ready to prove Theorem \ref{thm:1}.

\begin{proof}

At the round $t$ for $c=1$, the greedy algorithm considers the row sums $\sum_{j \in  \widetilde{V}_{1,t}} |R_{ij}^\varepsilon|$ for all $i\in \widetilde{U}_{1,t}$ and column sums $\sum_{i \in  \widetilde{U}_{1,t}} |R_{ij}^\varepsilon|$ for all $j\in \widetilde{V}_{1,t}$. We will show that the greedy algorithm excludes all rows in $U\backslash I_X$ before excluding ones in $I_X$. Then, we show that the same logic works for the columns. It suffices to show that $\sum_{j \in  \widetilde{V}_{1,t}} |R_{i_1j}^\varepsilon| \geq \sum_{j \in  \widetilde{V}_{1,t}} |R_{i_2j}^\varepsilon|$ for all $i_1 \in I_X$ and $i_2 \in I_X$ and for all $t$. Note that $\sum_{j \in  \widetilde{V}_{1,t}} |R_{i_1j}^\varepsilon| - \sum_{j \in  \widetilde{V}_{1,t}} |R_{i_2j}^\varepsilon|\geq 0$ can be written as 
\begin{eqnarray}
    \sum_{j \in  \widetilde{V}_{1,t}\cap I_Y} (|R_{i_1j}^\varepsilon| -  |R_{i_2j}^\varepsilon|) + \sum_{j \in  \widetilde{V}_{1,t}\backslash I_Y} (|R_{i_1j}^\varepsilon| -  |R_{i_2j}^\varepsilon|)\label{eq:decomp}  
\end{eqnarray}

By Lemma \ref{lem:3}, we have
 \begin{eqnarray*}
     \sum_{j \in  I_Y} (|R_{i_1j}^\varepsilon| -  |R_{i_2j}^\varepsilon|) \geq \frac{q_0(\eta-2)\rho}{\eta},
 \end{eqnarray*}
for $i_1 \in I_X$ and $i_2 \in U\backslash I_X$ .
At rount $t=1$, \eqref{eq:decomp} is $\sum_{j \in  I_Y} (|R_{i_1j}^\varepsilon| -  |R_{i_2j}^\varepsilon|) + \sum_{j \in  V\backslash I_Y} (|R_{i_1j}^\varepsilon| -  |R_{i_2j}^\varepsilon|)$.  Since  
$$\left|\sum_{j \in \widetilde{V}_{1,t}\backslash I_Y} (|R_{i_1j}^\varepsilon| - |R_{i_2j}^\varepsilon| ) \right|\leq \frac{q_0(\eta-2)\rho}{\eta},$$
for all $i_1,i_2$ and all $t=1,2,\dots,p+q-1$, with probability $1-\delta_1$, by Lemma \ref{lem:4}, we have $\sum_{j \in \widetilde{V}_{1,t}} (|R_{i_1j}^\varepsilon| - |R_{i_2j}^\varepsilon| ) > 0$ for all $i_1 \in I_X$ and $i_2 \in U\backslash I_X$, leading to $\text{argmin}_i(\sum_{j \in \widetilde{V}_{1,1}} |R_{ij}^\varepsilon|) \in U\backslash I_X$. For the same reason, we have $\text{argmin}_j(\sum_{i \in \widetilde{U}_{1,1}} |R_{ij}^\varepsilon|) \in V\backslash I_Y$. Thus, either a row in $U\backslash I_X$ or a column in $V\backslash I_Y$ is excluded. For the subsequent rounds, this same argument applies: either a row in $U\backslash I_X$ or a column in $V\backslash I_Y$ is excluded until only rows in $I_X$ and columns in $I_Y$ are left as $\sum_{j \in \widetilde{V}_{1,t}} (|R_{i_1j}^\varepsilon| - |R_{i_2j}^\varepsilon| ) > 0$ $i_1 \in I_X$ and $i_2 \in U\backslash I_X$ and $\sum_{i \in \widetilde{U}_{1,t}} (|R_{ij_1}^\varepsilon| - |R_{ij_2}^\varepsilon| ) > 0$ for $j_1 \in I_Y$ and $j_2 \in V\backslash I_Y$. Therefore, if $|I_X| =p_0\leq p_{phase1}$ and $|I_Y| =q_0 \leq q_{phase1}$, the true submodel $\mathcal{M}_{\star}$ is included in the submodel created by the intermediate screened variables $\widehat{\mathcal{M}}_{p1}$ with probability at least $1-\delta(p,p_0,q,q_0,n)$, where
\begin{eqnarray*}
\delta(p,p_0,q,q_0,n)&=&\exp \left(-n\rho^2/2\eta^2 s_1^2+\log 2pq \right)\\
&+&\exp\left(-\frac{p_0(\eta-2)\rho}{2\eta}\frac{n\varepsilon^2}{2(s_1^2+s_2^2)} + p\log 5 + 2\log q \right)\\
&+&\exp\left(-\frac{q_0(\eta-2)\rho}{2\eta}\frac{n\varepsilon^2}{2(s_1^2+s_2^2)} + q\log 5 + 2\log p \right).
\end{eqnarray*}
Because the last two terms are the leading terms, by choosing $\eta$ minimizing the sum of terms, we can have $c_1,~c_2>0$ such that
\begin{eqnarray*}
\delta(p,p_0,q,q_0,n)=O\left(\exp\left(-c_1p_0n + c_2p \right)+\exp\left(-c_1q_0n + c_2q \right)\right).
\end{eqnarray*}
\end{proof}

\subsection{Proof of Theorem \ref{thm:2}}
\begin{proof}
Now, we show that there is a set of $\lambda$ such that the objective function $\text{den}_\lambda$ is maximized at the time when only and all the associated rows and columns remain in our active node set. To proceed, we set $w\geq 1$ as a constant $w=\max_{e_{ij}\in E_1} |\rho_{ij}|/\rho$. First, we compare two cases: (i) only all rows $I_X$ and columns in $I_Y$ remain in our active node set, (ii) $r_1$ rows in $U\backslash I_X$ and $c_1$ columns $V\backslash I_Y$, the index sets of which are denoted by $I_X''$ and $I_Y''$,  remain in our active node set including all $i \in I_X$ and $j\in I_Y$ at time $t$. Then, the values of the objective functions for case (i) and (ii) are as follows:
\begin{eqnarray*}
\text{den}_\lambda(B_{case_1})=\frac{\sum_{i\in I_X,j\in I_Y} |R_{ij}^\varepsilon|}{(p_0 q_0)^\lambda}
\end{eqnarray*}
and
\begin{flalign*}
\text{den}_\lambda(B_{case_2})=\frac{\sum_{i\in I_X,j\in I_Y} |R_{ij}^\varepsilon| + \sum_{i\in I_X,j\in I_Y''} |R_{ij}^\varepsilon| +\sum_{i\in I_X'',j\in I_Y} |R_{ij}^\varepsilon| + \sum_{i\in I_X'',j\in I_Y''} |R_{ij}^\varepsilon|}{( (p_0+r_1)(q_0+c_1) )^\lambda}
\end{flalign*}

Applying $|R_{ij}^\varepsilon| > ((\eta-1)/\eta)\rho$ for $(i,j) \in I_X\otimes I_Y$ and $|R_{ij}^\varepsilon| < (1/\eta)\rho$ otherwise, we derive the following inequality:
\begin{eqnarray}
\lambda > \log \left(\frac{\eta-2}{\eta-1} + \frac{1}{\eta-1}\frac{(p_0+r_1)(q_0+c_1)}{p_0q_0}\right)/\log \left(\frac{(p_0+r_1)(q_0+c_1)}{p_0q_0}\right)    \label{eq:lambda1}
\end{eqnarray}

We consider case (iii) such that only some $r_2$ rows in $I_X$ and $c_2$ columns in $I_Y$ remain in our active node sets, which are denoted by $J_X'$ and $J_Y'$. Then, we have
\begin{eqnarray*}
    \text{den}_\lambda(B_{case_3})= \frac{\sum_{i\in J_X',j\in J_Y'} |R_{ij}^\varepsilon|}{( r_2c_2 )^\lambda}.
\end{eqnarray*}

Now, we find the range of $\lambda$ such that the objective function of case (i) is greater than that of case (iii). By applying the upper and lower bounds of $|R_{ij}^\varepsilon| \leq ((\eta+w)/\eta)\rho$ for $(i,j) \in J_X'\otimes J_Y'$ and $|R_{ij}^\varepsilon| \geq ((\eta-1)/\eta)\rho$ for $(i,j) \in (I_X\otimes I_Y)\backslash (J_X'\otimes J_Y')$, respectively, for the inequality $\frac{\sum_{i\in I_X,j\in I_Y} |R_{ij}^\varepsilon|}{(p_0 q_0)^\lambda} \geq  \frac{\sum_{i\in J_X',j\in J_Y'} |R_{ij}^\varepsilon|}{( r_2c_2)^\lambda}$, we have
\begin{eqnarray}
\lambda < \log \left(\frac{w+1}{\eta+w} + \frac{\eta-1}{\eta+w}\frac{p_0q_0}{r_2c_2}\right)/\log \left(\frac{p_0q_0}{r_2c_2}\right).\label{eq:lambda2}
\end{eqnarray}
Putting \eqref{eq:lambda1} and \eqref{eq:lambda2} together, we have
\begin{eqnarray*}
    0<\frac{\log \left(\frac{\eta-2}{\eta-1} + \frac{1}{\eta-1}\frac{(p_0+r_1)(q_0+c_1)}{p_0q_0}\right)}{\log \left(\frac{(p_0+r_1)(q_0+c_1)}{p_0q_0}\right)} <\lambda< \frac{\log \left(\frac{w+1}{\eta+w} + \frac{\eta-1}{\eta+w}\frac{p_0q_0}{r_2c_2}\right)}{\log \left(\frac{p_0q_0}{r_2c_2}\right)} < 1.
\end{eqnarray*}
To satisfy the above inequality for all $1\leq r_1\leq p-p_0$, $1\leq c_1< q-q_0$, $0<r_2<p_0$, and $0<c_2<q_0$, we need sufficiently large $\eta$. If we apply the ranges of $r_1,~c_1,~r_2$ and $c_2$, we can show that the inequality can be written as
\begin{eqnarray*}
   \log\left(\frac{\eta-2}{\eta-1} +\frac{1}{\eta-1} a_1\right)/\log a_1 < \log\left( \frac{w+1}{\eta+1} +\frac{\eta-1}{\eta+w}a_2 \right)/\log a_2,
\end{eqnarray*}
where $a_1 = \min(\frac{p_1+1}{p_0},\frac{q_0+1}{q_0})>1$ and $a_2 = \min(\frac{p_0}{p_0-1},\frac{q_0}{q_0-1})>1$. As $a_1<a_2$ and the term on the RHS above is increasing in $a_2 > 1$, we can replace $a_2$ with $a_1$. Then, we get $\log\left(\frac{\eta-2}{\eta-1} +\frac{1}{\eta-1} a_1\right) < \log\left( \frac{w+1}{\eta+w} +\frac{\eta-1}{\eta+w}a_1\right)$. By simple calculation using $a_1>1$, we have $ \eta^2 - 3\eta +(1-w) > 0$. Using the quadratic formula, we have 
\begin{eqnarray}
\eta > (3+\sqrt{9+4(w-1)})/2\geq 3,  \label{eq:quad}
\end{eqnarray}
where $w\geq 1$. Thus, we show that there exists $\lambda$ satisfying that the objective function is maximized at the time when only all rows in $I_X$ and columns $I_Y$ are in our active node set.

We choose $\eta_\star$ minimizing $\delta_0+\delta_1$, which is not analytically solvable, subject to \eqref{eq:quad}. Let the confidence level be
\begin{eqnarray*}
\delta(p,p_0,q,q_0,n)&=&\exp \left(-n\rho^2/2\eta_\star^2 s_1^2+\log 2pq \right)\\
&+&\exp\left(-\frac{p_0(\eta_\star-2)\rho}{2\eta_\star}\frac{n\varepsilon^2}{2(s_1^2+s_2^2)} + p\log 5+ 2\log q \right)\\
&+&\exp\left(-\frac{q_0(\eta_\star-2)\rho}{2\eta_\star}\frac{n\varepsilon^2}{2(s_1^2+s_2^2)} + q\log 5 + 2\log p \right).
\end{eqnarray*}

Here, the last two leading terms can be written as
\begin{eqnarray*}
\exp\left(-c_1'p_0n + c_2'p +2\log q\right)+\exp\left(-c_1'q_0n + c_2'q + 2\log p  \right),
\end{eqnarray*}
for some $c_1',~c_2'>0$. Thus, $\delta(p,p_0,q,q_0,n) = O(\exp\left(-p_0c_1' n + c_2'p \right)+\exp\left(-q_0c_1'n + c_2'q  \right))$. This leads to the minimum sample condition $n = O(\max\{p/p_0,q/q_0\})$ for the exact recovery.

\end{proof}

\label{lastpage}

\end{document}